\documentclass[%
 reprint,
superscriptaddress,
%groupedaddress,
%unsortedaddress,
%runinaddress,
%frontmatterverbose, 
%preprint,
%preprintnumbers,
nofootinbib,
%nobibnotes,
%bibnotes,
 amsmath,amssymb,
 aps,
pra,
%prl,
%prb,
%rmp,
%prstab,
%prstper,
%floatfix,
]{revtex4-2}

\usepackage{graphicx}% Include figure files
\usepackage{dcolumn}% Align table columns on decimal point
\usepackage{bm}% bold math
\usepackage{hyperref}% add hypertext capabilities
%\usepackage[mathlines]{lineno}% Enable numbering of text and display math
%\linenumbers\relax % Commence numbering lines

\usepackage{amsmath}
\usepackage{amssymb}
\usepackage{stackrel}
\usepackage{tikz-cd}
\usepackage{tikz}
\usepackage[cm]{fullpage}
\usepackage{amsthm}
\usepackage{mathtools}
\usepackage{enumitem}
\usepackage{bbold}
\usepackage[normalem]{ulem}
\usepackage{hyperref}
\usepackage{cleveref}
\usepackage{appendix}
\usepackage{comment}
\usepackage{subcaption}
\usepackage{centernot}
\usepackage{braket}
\usepackage{algorithm}
\usepackage{algpseudocode}
\usepackage{varwidth}
\usepackage{ulem}
\usepackage{apptools}
\AtAppendix{\counterwithin{Lemma}{section}}
\AtAppendix{\counterwithin{Theorem}{section}}
\AtAppendix{\counterwithin{Proposition}{section}}

\captionsetup[figure]{justification=raggedright}

\newtheorem{Theorem}{Theorem}
\newtheorem{Lemma}{Lemma}
\newtheorem{Proposition}{Proposition}
\newtheorem{Corollary}{Corollary}

\theoremstyle{definition}
\newtheorem{Definition}{Definition}

\DeclareMathOperator{\Odd}{Odd}

\DeclareMathOperator{\wt}{wt}
\DeclareMathOperator{\ad}{ad}

%\newcommand{\ut}[1]{\underaccent{\tilde}{#1}}

%\usepackage[showframe,%Uncomment any one of the following lines to test 
%%scale=0.7, marginratio={1, 2:3}, ignoreall,% default settings
%%text={7in,10in},centering,
%%margin=1.5in,
%%total={6.5in,8.75in}, top=1.2in, left=0.9in, includefoot,
%%height=10in,a5paper,hmargin={3cm,0.8in},
%]{geometry}

\begin{document}

\preprint{APS/123-QED}

\title{Minimally Universal Parity Quantum Computing}

\author{Isaac D. Smith}
\email{isaac.smith@uibk.ac.at}
\affiliation{University of Innsbruck, Department of Theoretical Physics, Technikerstr. 21A, Innsbruck A-6020, Austria}

\author{Berend Klaver}
\affiliation{University of Innsbruck, Department of Theoretical Physics, Technikerstr. 21A, Innsbruck A-6020, Austria}
\affiliation{Parity Quantum Computing GmbH, A-6020 Innsbruck, Austria}

\author{Hendrik Poulsen Nautrup}
\affiliation{University of Innsbruck, Department of Theoretical Physics, Technikerstr. 21A, Innsbruck A-6020, Austria}

\author{Wolfgang Lechner}
\affiliation{University of Innsbruck, Department of Theoretical Physics, Technikerstr. 21A, Innsbruck A-6020, Austria}
\affiliation{Parity Quantum Computing GmbH, A-6020 Innsbruck, Austria}
\affiliation{Parity Quantum Computing Germany GmbH, 20095 Hamburg, Germany}

\author{Hans J. Briegel}
\affiliation{University of Innsbruck, Department of Theoretical Physics, Technikerstr. 21A, Innsbruck A-6020, Austria}

\date{\today}

\begin{abstract} In parity quantum computing, multi-qubit logical gates are implemented by single-qubit rotations on a suitably encoded state involving auxiliary qubits. Consequently, there is a correspondence between qubit count and the size of the native gate set. One might then wonder: what is the smallest number of auxiliary qubits that still allows for universal parity computing? Here, we demonstrate that the answer is one, if the number of logical qubits is even, and two otherwise. Furthermore, we present a sufficient condition for a given parity gate set to be universal. This leads to a variety of different universal parity gate sets corresponding to different numbers of auxiliary qubits, and more generally contributes to the understanding of which entangling gates are required to augment the set of single-qubit unitaries to perform universal quantum computing. As a consequence, we obtain (i) minimal implementations of the parity framework on e.g., a triangular lattice, (ii) hardware specific implementations of the parity flow framework on e.g., a heavy-hex lattice, and (iii) novel universal resources for measurement-based quantum computation (MBQC).
\end{abstract}

%\keywords{Suggested keywords}%Use showkeys class option if keyword
                              %display desired
\maketitle

%\tableofcontents

\section{Introduction} \label{sec:intro}

In the near-term era of quantum computation, the number of physical qubits is low and the fidelity of, in particular, multi-qubit gates is relatively poor. Consequently, much work has gone into developing different gate sets and new techniques in order to reduce the required resources for specific quantum circuits and architectures.

The novelty of the parity quantum computing framework \cite{lechner_15,fellner_22} consists in the trade-off between physical qubit number and the ease of performing multi-qubit rotations. Specifically, a logical state on $n$ qubits is encoded to a new state using $k$ additional qubits, which allows multi-qubit logical rotations to be implemented via single-qubit rotations on the latter. In effect, the parity encoding assigns to each auxiliary qubit certain information pertaining to some subset of the logical qubits, which define the support of the logical rotation. This framework originates in the quantum annealing community with each $Z \otimes Z$ term of an Ising Hamiltonian being mapped to a separate additional qubit, and has since found application in quantum optimization \cite{Lanthaler_23,Ender2023parityquantum,DriebSchon2023parityquantum,Fellner2023parityquantum}.

As each auxiliary qubit, often called a `parity' qubit, corresponds to a specific multi-qubit rotation, there is a direct correspondence between the number of qubits and the native gate set implementable in the framework. In Ref.~\cite{fellner_22}, a universal gate set was presented consisting of single-qubit Pauli $X$ and $Z$ rotations for each of the $n$ logical qubits as well as $Z \otimes Z$-rotations between each pair of logical qubits, which is to say, one parity qubit for every two-qubit rotation involved in the computation. Accordingly, this gate set can be described by a generating set containing $\frac{1}{2}n(n+3)$ Hamiltonians, all of which are single- and two-body Pauli strings.

Recently, it was demonstrated that the minimal possible generating set containing only Pauli strings and permits universal computation, contains just $2n+1$ elements \cite{opt_gen_arxiv}. Due to the correspondence between elements of the generating set and the number of physical qubits in the parity framework, it is prudent to ask: how much can we minimize the number of parity qubits while still ensuring universality? 

In this work, we answer this question via two main results. In the first, we present a sufficient condition for universality based on the properties of the set of subsets of logical qubits whose parities are mapped to parity qubits by the parity encoding. This condition pertains to both the size of each subset as well as their mutual intersections. As a consequence, we obtain parity encodings for various numbers of auxiliary qubits that all permit universal computation. Included in these encodings are the minimal cases involving a single parity qubit when the number of logical qubits $n$ is even, and two parity qubits, when $n$ is odd. The second main result demonstrates the \textit{impossibility} of performing parity quantum computing with less than two parity qubits in the latter case, indicating that, in the parity quantum computing framework, the lower bound of Ref.~\cite{opt_gen_arxiv} can only be obtained in the case of $n$ even. 

The are a number of implications of these results. Since any generating set that contains a subset which satisfies the conditions outlined above is also universal by default, our results provide a method for demonstrating universality for parity encodings tailored to different physical layouts. For example, we demonstrate below that the generating set sufficient conditions can lead to a variety of possible arrangements on a triangular lattice using different numbers of parity qubits. Furthermore, due to the recent development of the parity flow framework \cite{klaver2024} which effectively performs parity quantum computation without the use of auxiliary qubits, our results produce a range of options for performing quantum algorithms in a manner suited to specific hardware constraints but without the need for SWAP gates. In particular, we present a universal circuit Ansatz tailored to nearest-neighbor interactions on a heavy-hex layout typical to devices such as those currently provided by IBM Quantum \cite{IBMQ}. Finally, as demonstrated in Ref.~\cite{smith_24}, there is a connection between parity quantum computation and measurement-based quantum computation (MBQC) \cite{Raussendorf_01,briegel2009measurement,raussendorf2001computational,Raussendorf_03}, which allows us to leverage the generating set sufficient conditions in defining families of universal resource states \cite{Nest_06} for the latter computing framework.

The remainder of this manuscript is structured as follows. In \Cref{sec:background}, we present a brief introduction to the relevant information from quantum control theory relevant for understanding universal generating sets for quantum computation as well as the parity quantum computation framework. In \Cref{sec:min}, we provide our main results, namely \Cref{thm:n_geq_2_main_text} and \Cref{thm:n_odd_no_single_main_text}, with the former presenting the sufficient conditions used throughout the rest of this work. In \Cref{sec:implications}, we investigate some implications of these results for implementing the parity quantum computation and the parity flow frameworks on different two-dimensional layouts of physical qubits. We conclude in \Cref{sec:disc} with discussion of possible further implications of our results. The proofs of the main results are given in the appendices.

\section{Background} \label{sec:background}

In this work, we are concerned with universal quantum computing within the parity computing framework. Accordingly, we require an understanding of what constitutes universal quantum computing, as well as how this looks in the context of parity quantum computing. The concepts related to the former are drawn primarily from the field of quantum control theory; the reader is directed to e.g.,  \cite{d2007introduction,huang1983controllability,albertini2001notions,nielsen2010quantum} for further information.

\subsection{Universal Quantum Computing}

For our present purposes, a quantum computer is taken to be a closed, finite dimensional quantum system with state space described by a Hilbert space $\mathcal{H}$. A quantum computation then consists of evolving an initial state $\ket{\psi_{\text{in}}} \in \mathcal{H}$, which represents the input to the computation, to a final state $\ket{\psi_{\text{out}}} \in \mathcal{H}$, which represents the logical output. Quantum theory tells us that the evolution taking $\ket{\psi_{\text{in}}}$ to $\ket{\psi_{\text{out}}}$ can be described by a (special) unitary $U_{\text{comp}}$ such that
\begin{align}
\ket{\psi_{\text{out}}} = U_{\text{comp}}\ket{\psi_{\text{in}}}.
\end{align}
Moreover, we know from Schrödinger's equation that $U_{\text{comp}}$ can be specified via reference to a (traceless, time-independent) Hamiltonian $H$ and an evolution time $t$ by
\begin{align}
U_{\text{comp}} = e^{-iH_{\text{comp}}t}.
\end{align}
From the perspective of quantum control theory, the operator $H_{\text{comp}}$ is considered to describe how the system is controlled, leading to the evolution described by $U_{\text{comp}}$.

Typically, the overall evolution $U_{\text{comp}}$ is constructed from a sequence of component evolutions, that is,
\begin{align}
U_{\text{comp}} = e^{-iH_{j_{k}}t_{k}}\dots e^{-iH_{j_{2}}t_{2}}e^{-iH_{j_{1}}t_{1}} \label{eq:sequence}
\end{align}
where the $H_{j_{i}}$ are all (traceless, time-independent) Hermitian operators drawn from a fixed set $\{H_{j}\}_{j}$ and the $t_{i}$ are positive real values. The set $\{H_{j}\}_{j}$ represents all the different ways the system can be controlled and may be determined by e.g., specific experimental considerations.

For a given system to act as a \textit{universal} quantum computer, we need to be able to select a sequence of controls, i.e. sequence of operators from $\{H_{j}\}_{j}$ and times for which they are applied, to produce $U_{\text{comp}}$ for \textit{any} pair of states $\ket{\psi_{\text{in}}}$ and $\ket{\psi_{\text{out}}}$. That is, we must be able to solve \Cref{eq:sequence} for every $U_{\text{comp}} \in SU(N)$, where $SU(N)$ denotes the Lie group of $N \times N$ special unitary matrices ($N$ denotes the dimension of the Hilbert space $\mathcal{H}$). Equivalently, we must be able to solve
\begin{align}
e^{-iH_{\text{comp}}t} = e^{-iH_{j_{k}}t_{k}}...e^{-iH_{j_{2}}t_{2}}e^{-iH_{j_{1}}t_{1}} \label{eq:H_sequence}
\end{align}
for all $t \in \mathbb{R}_{\ge 0}$ and all $iH_{\text{comp}} \in \mathfrak{su}(N)$, where $\mathfrak{su}(N)$ denotes the Lie algebra of 
$N \times N$ traceless, skew-Hermitian matrices \cite{hall2013lie}.

The consideration of the Lie algebra $\mathfrak{su}(N)$ (and sub-algebras thereof) is common when treating problems within quantum control theory, and several tools from Lie algebra theory will be useful for the questions of universality in which we are interested. In brief, a \textit{Lie algebra} $\mathfrak{g}$ is a (real) subspace of the space of $m \times m$ complex matrices $M_{m}(\mathbb{C})$ equipped with a Lie bracket $[\cdot, \cdot]: \mathfrak{g} \times \mathfrak{g} \rightarrow \mathfrak{g}$ that satisfies certain properties (see e.g., \cite{hall2013lie}). In the cases pertinent to this work, the Lie bracket is given by the matrix commutator, that is, for any $A, B \in M_{m}(\mathbb{C})$,
\begin{align}
[A, B] & := AB - BA. 
\end{align}
Using the Lie bracket, it is possible to a define a so-called adjoint map associated to each element $A$ of the Lie algebra, defined as
\begin{equation} \label{eq:ad_map_Lie}
\begin{gathered}
\ad_{A}: \mathfrak{g} \rightarrow \mathfrak{g} \\
B \mapsto [A,B].
\end{gathered}
\end{equation}
Below, it will also be useful to denote the $r$-fold composition of the adjoint map $\ad_{A}(\cdot)$ by $\ad_{A}^{(r)}(\cdot)$, i.e.,
\begin{align}
\ad_{A}^{(r)}(B) := \underbrace{[A, [A, [A, ...[A}_{r},B]]]].
\end{align}
Of particular importance for treating expressions such as that appearing on the right-hand side of \Cref{eq:H_sequence}, is the following formula, which is defined for any (matrix) Lie algebra $\mathfrak{g}$. For any $A,B \in \mathfrak{g}$, the formula states
\begin{align}
e^{A}e^{B}e^{-A} = e^{B + \sum_{r = 1}^{\infty} \frac{1}{r!}\ad_{A}^{(r)}(B)}. \label{eq:BCH_ish}
\end{align}
To see how this relates to \Cref{eq:H_sequence}, let us suppose for a moment that $U_{\text{comp}}$ can be produced by just two component evolutions, given by applying $H_{j_{1}}$ for a time of $t_{1}$ followed by applying $H_{j_{2}}$ for a time of $t_{2}$, meaning that
\begin{align}
e^{-iH_{\text{comp}}t} = e^{-iH_{j_{2}}t_{2}}e^{-iH_{j_{1}}t_{1}}e^{iH_{j_{2}}t_{2}}.
\end{align}
By taking $A = -iH_{j_{2}}t_{2}$ and $B = -iH_{j_{1}}t_{1}$ in \Cref{eq:BCH_ish} allows us to write
\begin{align}
-iH_{\text{comp}}t = - iH_{j_{1}}t_{1} + \sum_{r = 1}^{\infty} \frac{1}{r!} \ad_{-iH_{j_{2}}t_{2}}^{(r)}(-iH_{j_{1}}t_{1}).
\end{align}
In other words, we are able to write $iH_{\text{comp}}$ as a real linear combination of $iH_{j_{1}}t_{1}$, $iH_{j_{2}}t_{2}$ and sequences of nested commutators between them. For the general case where more than just two controls are used, the situation is analogous, where now the sequences of nested commutators may include more than two distinct elements.

We are thus able to state what we mean by universal quantum computation: the set of controls $\{H_{j}\}_{j}$  is \textit{universal} if linear combinations of elements of $\mathcal{G} := \{iH_{j} \}_{j}$ as well as nested commutators of elements of $\mathcal{G}$ generate all of $\mathfrak{su}(N)$. Defining 
\begin{align}
\mathcal{G}^{\ad^{(r)}} := \{ \ad_{G_{1}}...\ad_{G_{r}}(G_{r+1}) : G_{1},...,G_{r+1} \in \mathcal{G} \}
\end{align}
we can state this more formally as: $\{H_{j}\}_{j}$  is \textit{universal} if
\begin{align}
\textrm{span}_{\mathbb{R}}\left\{ \mathcal{G} \bigcup_{r = 1}^{\infty} \mathcal{G}^{\ad^{(r)}} \right\} = \mathfrak{su}(N).
\end{align}
Below, we will largely work with the set $\mathcal{G}$ rather than $\{H_{j}\}_{j}$ (i.e. with the skew-Hermitian operators rather than the Hermitian ones). The elements of $\mathcal{G}$ are called \textit{generators} and $\mathcal{G}$ itself will be called a \textit{generating set}.

\begin{figure*}[!t] 
\centering
\begin{subfigure}[b]{0.48\textwidth} 
\centering
\includegraphics[width=\textwidth]{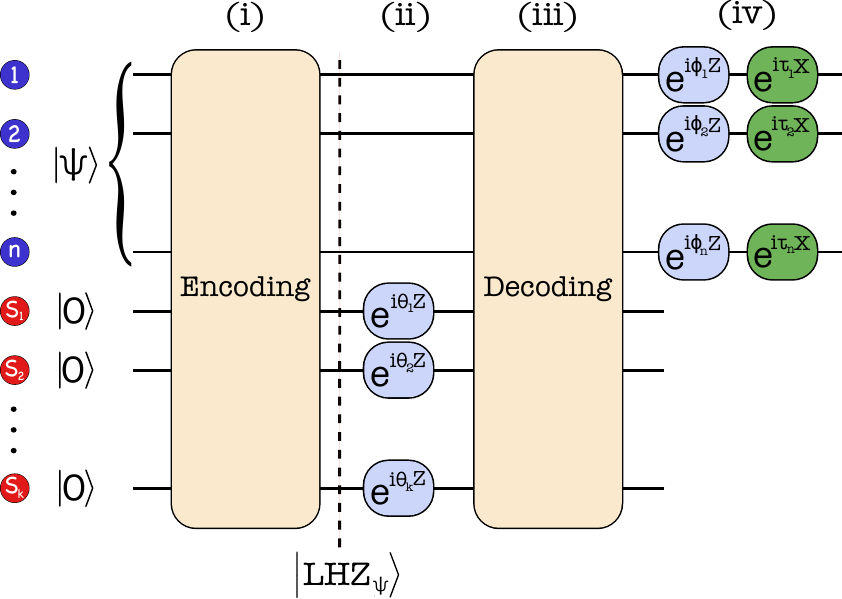} 
\caption{}
\label{fig:parity_phases}
\end{subfigure}
\hfill
\begin{subfigure}[b]{0.48\textwidth} 
\centering
\includegraphics[width=\textwidth]{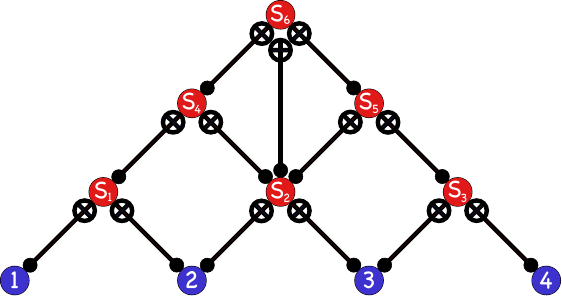} 
\vspace{1cm}
\caption{}
\label{fig:parity_original}
\end{subfigure} 
\caption{Parity quantum computing consists of four phases, depicted in figure (a), which are repeated for the duration of the computation. Initially, the state of the $n$ base qubits (in blue) represents the current logical state of the computation $\ket{\psi}$. The $k$ parity qubits (in red) are each initially prepared in the state $\ket{0}$ and are disentangled from the logical qubits. In phase (i), an encoding procedure is applied to the $n+k$ qubits to produce the encoded state denoted $\ket{\textrm{LHZ}_{\psi}}$. In phase (ii), single-qubit $Z$-rotations are applied to each of the $k$ parity qubits, and in phase (iii) a decoding procedure is performed through which the base qubits are in a new logical state and the parity qubits have been returned to the state $\ket{0}$. In the final phase, further single-qubit rotations are applied, completing the universal gate set. Figure (b) depicts the base and parity qubits for $n=4$ and $k=6$ arranged on a square lattice along with a unitary encoding/decoding procedure consisting of a sequence of CNOT gates between nearest-neighbors. For the encoding procedure, the CNOTs are applied in order from bottom to top. The parity sets are: $S_{1} = \{1,2\}$, $S_{2} = \{2,3\}$, $S_{3} = \{3,4\}$, $S_{4} = \{1,3\}$, $S_{5} = \{2,4\}$, and $S_{6} = \{1,4\}$.}
\label{fig:parity_QC}
\end{figure*}

\subsection{Parity Quantum Computing} \label{sec:parity}

Let us turn to the specific quantum computing framework, that of parity quantum computing \cite{lechner_15,fellner_22,Fellner_22_PRA}, the universality of which we are interested in investigating. In particular, we will see that the parity framework comes with a native family of generating sets, which will be the objects under consideration below.

Parity quantum computing typically proceeds by iteratively applying four phases: (i) an encoding phase where the current logical state is embedded in a larger Hilbert space, (ii) the application of physical single-qubit rotations on the extended space that result in logical multi-qubit rotations, (iii) a decoding procedure to obtain the new logical state on the original space, and (iv) the application of physical single-qubit rotations that result in logical single-qubit rotations. These four phases are depicted in \Cref{fig:parity_phases}.\footnote{It should be noted that there are several proposals for parity quantum computing in which a full decoding is not required in phase (iii), with the logical single-qubit Pauli-$X$ rotations implemented in a non-local fashion. As these proposals still use the same logical gate set, the specific details of these different proposals are not relevant here.} The generating set associated to this framework, called the \textit{parity generating set} and denoted by $\mathcal{G}_{\text{parity}}$ below, feature in phases (ii) and (iv); accordingly, we focus on those phases in the following. More information on the parity computing framework can be found in e.g., Refs.~\cite{lechner_15,fellner_22,Fellner_22_PRA}.

Let $\ket{\psi}$ be an $n$-qubit state representing the current logical state of the computation. At the commencement of phase (i), $n$ physical qubits, called \textit{base qubits} throughout, are in the state $\ket{\psi}$ while $k$ additional physical qubits, called \textit{parity qubits}, are each prepared in the $\ket{0}$ computational basis state. An encoding unitary $U_{\text{enc}}$ is then applied to all $n + k$ qubits to produce the state
\begin{align}
\ket{\textrm{LHZ}_{\psi}} = U_{\text{enc}}\ket{0}^{\otimes k}\ket{\psi}.
\end{align}
The encoding unitary $U_{\text{enc}}$ consists of a product of CNOT gates between base and parity qubits, arranged in such a way that the $k$ parity qubits encode parity information of $\ket{\psi}$ (whence the name parity qubits). For example, for $n = 2$ and $k = 1$, the state
\begin{align}
\ket{\psi} = \sum_{i,j = 0}^{1}\alpha_{ij}\ket{ij} 
\end{align}
is mapped to the state
\begin{align}
\ket{\textrm{LHZ}_{\psi}} = \sum_{i,j=0}^{1} \alpha_{i,j}\ket{i\oplus j}\ket{ij}
\end{align}
by the unitary $U_{\text{enc}}$ consisting of the product of two CNOT gates each with one of the base qubits as control and the parity qubit as target (the notation `$\oplus$' denotes modulo $2$ addition). For larger values of $n$ and $k$, $U_{\text{enc}}$ can be defined similarly; see e.g., \Cref{fig:parity_original} for a depiction of the encoding for $n = 4$ and $k = 6$. We would like to emphasize that the Hamiltonian corresponding to the unitary $U_{\text{enc}}$ is \textit{not} included in the generating set considered below as it does not directly enact a logical operation, but rather facilitates the implementation of the logical multi-qubit rotations via single-qubit rotations on the encoded parity qubits.

For our purposes, a useful perspective of the encoded state $\ket{\textrm{LHZ}_{\psi}}$ arises from stabilizer theory. Let us identify each of the $n$ base qubits with a label $i \in \{1,...,n\}$ and each of the $k$ parity qubits with a label given by a (non-empty) set $S_{j} \subseteq \{1,...,n\}$, $j = 1, ..., k$. This labeling convention also extends to unitary operations performed on data and parity qubits; that is, $Z_{i}$ denotes the single-qubit Pauli-$Z$ rotation on base qubit $i$ while $Z_{S_{j}}$ denotes a single-qubit Pauli-$Z$ rotation on the parity qubit labeled by $S_{j}$.\footnote{Note that, in the literature, the notation of a unitary indexed by a set is defined to be the tensor product of many copies of that unitary, one for each element in the set. We do not employ this notation here.} The idea is that a parity qubit labeled by $S_{j}$ encodes parity information of the base qubits whose labels are contained in the set $S_{j}$. Let us define $\mathbb{P} := \{S_{j} : j = 1, ..., k \}$. Then, for any $\ket{\psi}$, the state $\ket{\textrm{LHZ}_{\psi}}$ satisfies the following equation for each $S_{j} \in \mathbb{P}$:
\begin{align}
Z_{S_{j}} \bigotimes_{i \in S_{j}} Z_{i} \ket{\textrm{LHZ}_{\psi}} = \ket{\textrm{LHZ}_{\psi}}.
\end{align}
where $Z$ denotes the Pauli-Z operator (we denote the Pauli-X and Pauli-Y operators similarly by $X$ and $Y$), the label $S_{j}$ indicates the relevant parity qubit, and the labels $i \in S_{j}$ indicate the relevant base qubits. As a consequence of this, we have that
\begin{align}
e^{-iZ_{S_{j}}t}\ket{\textrm{LHZ}_{\psi}} = e^{-i(\bigotimes_{i \in S_{j}} Z_{i})t}\ket{\textrm{LHZ}_{\psi}}.
\end{align}
This means that, in effect, single-qubit $Z$-rotations performed on the parity qubits in the encoded state enact a multi-qubit rotation on the logical state. These rotations are precisely the operations pertaining to phase (ii) of the computational process. Regarding the generating sets we ultimately aim to consider, let us define
\begin{align}
\mathcal{G}_{\mathbb{P}} := \left\{i \bigotimes_{q \in S_{j}} Z_{q} | S_{j} \in \mathbb{P} \right\}.
\end{align}

Phase (iii) of the computation consists of the decoding phase where the logical state is mapped back to the $n$ base qubits. There are couple of equivalent ways in which this decoding procedure can occur: for example, one can apply the decoding unitary $U_{\text{dec}} = U_{\text{enc}}^{\dagger}$ as done in the original proposal for universal parity quantum computing \cite{fellner_22}; otherwise decoding can proceed by measuring each parity qubit in the Pauli $X$-basis and performing suitable corrections for certain measurement outcomes as suggested in Ref.~\cite{messinger_23}. The latter procedure has close links to measurement-based quantum computation (MBQC) \cite{Raussendorf_01,briegel2009measurement,raussendorf2001computational,Raussendorf_03} as was shown in \cite{smith_24} (see \Cref{app:mbqc} for a brief introduction to MBQC and an overview of the correspondence between the two frameworks). As the decoding phase has no bearing on the generating sets for the computation, we need not specify a particular decoding method here.

In the final phase, phase (iv), single-qubit rotations are applied to each of the base qubits, which are now disentangled from the parity qubits after the decoding step in phase (iii). Explicitly, we may apply $e^{-iX_{j}t}$ and $e^{-iZ_{j}t'}$ for each $j \in \{1, ..., n\}$ and any $t, t' \in \mathbb{R}_{\ge 0}$. Accordingly, we define the generating set pertaining to this phase as
\begin{align}
\mathcal{G}_{\text{s.q.}} := \{iX_{j}, iZ_{j} | j = 1, ..., n\} \label{eq:gen_set_sq}
\end{align}
where the subscript `s.q.' stands for `single-qubit'. Thus, the entire parity generating set is
\begin{align}
\mathcal{G}_{\text{parity}} = \mathcal{G}_{\text{s.q.}} \cup \mathcal{G}_{\mathbb{P}}.
\end{align}
In what follows, we will always consider $\mathcal{G}_{\text{parity}}$ to contain $\mathcal{G}_{\text{s.q.}}$ for the number of base qubits $n$, and investigate the consequences for universality as the set $\mathcal{G}_{\mathbb{P}}$ (equivalently $\mathbb{P}$) varies. In the original demonstration of the universality of the parity computing framework \cite{fellner_22}, the set $\mathbb{P}$ contained all pairs of labels from $\{1,...,n\}$, which we denote for later reference by
\begin{align}
\mathbb{P}_{\text{pairs}} := \left\{ \{i,j\} | i,j = 1,...,n \text{ s.t. } i < j \right\}. \label{eq:P_pairs}
\end{align}

\section{Minimal Universal Parity Generating Sets} \label{sec:min}

Using the notation introduced above, we can now state the question we are interested in: for which $\mathbb{P}$ is $|\mathbb{P}|$ minimized while maintaining that
\begin{align}
\textrm{span}_{\mathbb{R}}\left\{\mathcal{G}_{\text{parity}} \bigcup_{r = 1}^{\infty} \mathcal{G}_{\text{parity}}^{\ad^{(r)}}  \right\} = \mathfrak{su}(2^{n})? \label{eq:G_parity_su}
\end{align}
There are a couple of features of the generating set $\mathcal{G}_{\text{parity}}$ that are pertinent to the above question. First, this generating set consists entirely of Pauli strings, i.e., contains only terms consisting of products of Pauli operators. That is, defining 
\begin{align}
\mathcal{P}_{n} := \big\{ P_{1} \otimes P_{2} \otimes ... \otimes P_{n} | P_{i} \in \{I,X,Y,Z\} \big\}
\end{align}
we have that $\mathcal{G}_{\text{parity}} \subset i\mathcal{P}_{n}$. We use the notation $i\mathcal{P}_{n}$ to denote the set obtained by multiplying every element of $\mathcal{P}_{n}$ by the imaginary unit $i$ (note also that $\mathcal{P}_{n}$ is the set of Pauli strings, not the Pauli group on $n$ qubits). The inclusion of $\mathcal{G}_{\text{parity}}$ in $i\mathcal{P}_{n}$ is significant in part due to the fact that elements of $i\mathcal{P}_{n}$ exhibit nice commutation relations: any two elements of $i\mathcal{P}$ either commute or anti-commute, with the latter operator being proportional to an element of $i\mathcal{P}$. Noting that $i\mathcal{P}_{n}^{*} := i\mathcal{P}_{n} \setminus \{iI^{\otimes n}\}$ forms a basis of the space of traceless, skew-Hermitian operators, it follows that, to show \Cref{eq:G_parity_su} holds, it suffices in this case to demonstrate that\footnote{To be fully precise, the following notion of equality should be considered element-wise equality of the two sets up to a real constant. That is, all the elements in the left-hand set will be of the form $i2^{r}P$ while those in the right-hand set are of the form $iP$. This real constant is unimportant in light of the real span taken in \Cref{eq:G_parity_su}, so we continue with the abuse of notation here.}
\begin{align}
\mathcal{G}_{\text{parity}} \bigcup_{r = 1}^{\infty} \mathcal{G}_{\text{parity}}^{\ad^{(r)}} = i\mathcal{P}_{n}^{*}.
\end{align}

A second noteworthy feature of $\mathcal{G}_{\text{parity}}$ pertains to the commutation relations present in the subset $\mathcal{G}_{\text{s.q.}}$. In particular, each element of $\mathcal{G}_{\text{s.q.}}$ anti-commutes with precisely one other element: $iX_{j}$ anti-commutes with $iZ_{j'}$ if and only if $j = j'$, while $iX_{j}$ and $iX_{j'}$ (respectively $iZ_{j}$ and $iZ_{j'}$) commute for all $j,j' \in \{1,...,n\}$. Since we ultimately consider sequences of nested commutations between elements in $\mathcal{G}_{\text{s.q.}}$ (as well as $\mathcal{G}_{\mathbb{P}}$), these relations play a role in establishing the results presented below.

To begin to understand the possible limits on the minimal choices of $\mathbb{P}$, let us consider what is already known regarding generating sets consisting of Pauli strings. In Ref.~\cite{opt_gen_arxiv}, it was demonstrated that any universal generating set $\mathcal{G} \subset i\mathcal{P}_{n}^{*}$ is such that $|\mathcal{G}| \geq 2n+1$ and moreover that generating sets exist that obtain this bound.\footnote{This conclusion can also be drawn from results presented in Ref.~\cite{aguilar2024classificationpauliliealgebras}} Since 
\begin{align}
|\mathcal{G}_{\text{parity}}| = |\mathcal{G}_{\text{s.q.}}| + |\mathbb{P}|
\end{align}
and $|\mathcal{G}_{\text{s.q.}}| = 2n$, there is, at least in principle, nothing preventing the possibility of choosing $\mathbb{P}$ such that $|\mathbb{P}| = 1$. The results presented in the remainder of this section demonstrate when this is and isn't possible.

Our first result provides a sufficient condition on $\mathbb{P}$ that ensures the universality of $\mathcal{G}_{\text{parity}}$:
\begin{Theorem} \label{thm:n_geq_2_main_text} Let $n \geq 2$ and $\mathcal{G}_{\text{s.q.}}$ be as above. For any $1 \leq k\leq n-1$ and sets $S_{1}, ..., S_{k}$ such that 
\begin{enumerate}
    \item $|S_{j}|$ is even for all $j \in \{ 1, ..., k\}$,
    \item $\bigcup_{i=1}^{k} S_{i} = \{1,...,n\}$,
    \item if $k \geq 2$, then 
    \begin{enumerate}
        \item $S_{i} \cap S_{j} = \emptyset$ for all $1 \leq j \leq k$ such that $j \neq i-1,i, i+1$, and
        \item $S_{i} \cap S_{i+1} = \{s_{i}\}$ for all $i \leq k-1$, with the $s_{i} \in \{1,...,n\}$ all distinct.
    \end{enumerate}
\end{enumerate} 
For $\mathbb{P} = \{S_{j}| j = 1, ...,k\}$ and $\mathcal{G}_{\text{parity}} = \mathcal{G}_{\text{s.q.}} \cup \mathcal{G}_{\mathbb{P}}$, we have that 
\begin{align}
\mathcal{G}_{\text{parity}} \bigcup_{r = 1}^{\infty} \mathcal{G}_{\text{parity}}^{\ad^{(r)}} = i\mathcal{P}_{n}^{*}.
\end{align}
\end{Theorem}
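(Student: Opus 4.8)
My plan is to reduce the statement to two modular lemmas --- a \emph{block} lemma and a \emph{gluing} lemma --- and then assemble them along the chain $S_1,\dots,S_k$. As preliminary reductions, first note that $[iX_j,iZ_j]\propto iY_j$, so $\mathcal{G}_{\text{s.q.}}$ already gives full single-qubit control $\{iX_j,iY_j,iZ_j\}$ at every site. Second, the adjoint action of these single-qubit generators on a Pauli string preserves its support (a bracket with $X_l$, $Y_l$ or $Z_l$ never produces the identity at site $l$ and never touches other sites) while acting transitively on the Pauli type at each site of the support. Hence, once a single $Z$-string $i\bigotimes_{l\in T}Z_l$ is generated, \emph{every} Pauli string with support exactly $T$ is generated. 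It therefore suffices to produce, for a family of pairs $\{a,b\}$ forming a connected graph on $\{1,\dots,n\}$, the two-body strings $iZ_aZ_b$; universality then follows from the standard fact (which I would either cite or re-derive) that single-qubit control together with $ZZ$-couplings on a connected interaction graph generates $\mathfrak{su}(2^n)$.

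The engine for producing new $Z$-strings is a single \emph{combine} maneuver. Given two generated $Z$-strings $iZ_A$ and $iZ_B$ (which commute as pure $Z$-strings), I convert an odd-size subset $R\subseteq A\cap B$ of the overlap into $Y$'s using single-qubit brackets, obtaining $i\,(\bigotimes_{r\in R}Y_r)(\bigotimes_{q\in A\setminus R}Z_q)$; since $|R|$ is odd this string anti-commutes with $iZ_B$, so their bracket is proportional to a Pauli string, and converting the resulting $X$'s on $R$ back to $Z$'s yields $iZ_{(A\triangle B)\cup R}$. The crucial feature is that the retained residue $R$ of the overlap must have \emph{odd} cardinality, which is exactly the parity that evenness of the $S_j$ will let me control.

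I then establish the \textbf{Block Lemma}: for even $m\ge 2$ and $|S|=m$, the generators $\{iX_i,iZ_i\}_{i\in S}\cup\{iZ_S\}$ generate all Pauli strings supported on $S$. I would prove this by induction on the even integer $m$. The base case $m=2$ is immediate, since $iZ_S=iZ_aZ_b$ is already a connected $ZZ$-graph on two vertices. For the inductive step I use the combine maneuver together with single-qubit type changes to extract from $iZ_S$ a two-body string $iZ_aZ_b$ for a connected family of pairs inside $S$: bracketing $iZ_S$ against a type-modified weight-$(m-1)$ string that agrees with it on all but two sites cancels the agreeing sites and leaves a weight-two string, and varying the choice of these two sites sweeps out a connected graph on $S$. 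Connected-graph universality then gives the full algebra on $S$. This inductive step is the main obstacle: one must verify that the odd-residue bookkeeping of the combine maneuver genuinely lets the weight be driven down to two while keeping the resulting pairs connected. It is precisely here that evenness is indispensable --- an odd full $Z$-string cannot be reduced in this way, consistent with the impossibility result of \Cref{thm:n_odd_no_single_main_text}.

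Finally, the \textbf{Gluing Lemma}: if the generated algebra already contains all Pauli strings on a block $A$ and all on a block $B$ with $A\cap B=\{c\}$ a single qubit, then it contains all Pauli strings on $A\cup B$. This follows from a short explicit computation --- for $a'\in A\setminus\{c\}$ and $b'\in B\setminus\{c\}$, bracketing $iX_{a'}Y_c$ (in the $A$-algebra) with $iZ_cX_{b'}$ (in the $B$-algebra) and cleaning up the shared site $c$ with further block elements produces the cross-block string $iX_{a'}X_{b'}$, and varying types and sites fills out all of $\mathfrak{su}$ on $A\cup B$. To conclude, I apply the Block Lemma to each $S_j$ and glue along the chain: by conditions 3(a)--(b) one has $(S_1\cup\dots\cup S_i)\cap S_{i+1}=\{s_i\}$, a single qubit, so the Gluing Lemma applies at every step and builds the full algebra on $S_1\cup\dots\cup S_k$, which by condition 2 equals $\{1,\dots,n\}$. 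This yields $\mathcal{G}_{\text{parity}}\bigcup_{r\ge 1}\mathcal{G}_{\text{parity}}^{\ad^{(r)}}=i\mathcal{P}_n^*$, with the case $k=1$ being simply the Block Lemma applied to $S_1=\{1,\dots,n\}$.
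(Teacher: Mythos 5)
Your proposal is correct, and it takes a genuinely different route from the paper's proof. The paper works entirely in the symplectic picture $\mathbb{F}_{2}^{2n}$ of \Cref{app:mapping}: after reducing (as you do) to generating the $Z$-string on each support $T$ via single-qubit conjugation (\Cref{lem:each_subset}), its engine is \Cref{lem:z_A_improved}, an explicit four-case construction of short adjoint sequences producing $\bm{z}^{A\cup(C\setminus(B\cap C))}$ from $\bm{z}^{B}$, $\bm{z}^{C}$ and single-qubit generators; the induction along the chain then tracks, for an \emph{arbitrary fixed} target support $T$, the partial strings $\bm{z}^{T\cap S_{1:j}}$ and $\bm{z}^{(T\cap S_{1:j})\cup\{s_{j}\}}$. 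You instead never fix a target: you generate the \emph{entire} Pauli algebra on each block $S_{j}$ by driving $iZ_{S_{j}}$ down to pairwise $Z\!\otimes\!Z$ couplings (your combine maneuver, whose odd-residue parity bookkeeping checks out --- combining $iZ_{S}$ with itself yields all odd-weight $Z$-strings in $S$, and combining $iZ_{S}$ with $iZ_{S\setminus\{a\}}$ using residue $\{b\}$ yields $iZ_{a}Z_{b}$, which is exactly where $|S|$ even enters), invoke connected-graph universality, and then glue blocks across the single shared qubits $s_{i}$, using $(S_{1}\cup\dots\cup S_{i})\cap S_{i+1}=\{s_{i}\}$, which conditions 3(a)--(b) do guarantee. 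Your route is more modular and makes the role of the evenness hypothesis conceptually transparent (and its failure for a single odd string consistent with \Cref{thm:n_odd_no_single_main_text}), at the cost of relying on the external fact about $ZZ$-couplings on connected graphs and of losing the explicit short sequences; the paper's bespoke construction is what additionally yields the constant sequence-depth bound of \Cref{prop:const_depth}, which your argument would not recover without further work. Minor quibbles: your Block Lemma needs no induction on $m$ (the two-stage combine argument extracts all pairs directly), and the phrase ``weight-$(m-1)$ string that agrees on all but two sites'' should be made precise as $iZ_{S\setminus\{a\}}$ with site $b$ rotated to $Y$, but these do not affect correctness.
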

The proof is given in \Cref{app:main_thm}. In particular, it makes use of a mapping between $i\mathcal{P}_{n}^{*}$ and the symplectic space $\mathbb{F}_{2}^{2N}$ which is presented in \Cref{app:mapping}.

There are number of consequences of this result for the parity computing framework. First of all, for any even $n$, taking $k = 1$ and $S = \{1,...,n\}$ satisfies the conditions of the theorem. Similarly, for odd $n$, it is possible to take $k = 2$ and define $S_{1}$ and $S_{2}$ that satisfy the required conditions (for example, taking $S_{1} = \{1,...,n-1\}$ and $S_{2} = \{n-1, n\}$ suffices). Accordingly, we have the following corollary:
\begin{Corollary} \label{cor:minimal_1_2} For $n \geq 2$, universal parity quantum computing is possible with just:
\begin{itemize}
    \item a single parity qubit, if $n$ is even, and
    \item two parity qubits, if $n$ is odd.
\end{itemize}
\end{Corollary}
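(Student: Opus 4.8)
The plan is to obtain \Cref{cor:minimal_1_2} as a direct instantiation of \Cref{thm:n_geq_2_main_text}. Since each element of $\mathbb{P}$ labels exactly one parity qubit, the number of parity qubits equals $|\mathbb{P}| = k$. Hence it suffices, in each parity case for $n$, to exhibit a family $\{S_1, \ldots, S_k\}$ with $k$ as small as possible that meets the three hypotheses of the theorem; the theorem then immediately yields $\mathcal{G}_{\text{parity}} \bigcup_{r = 1}^{\infty} \mathcal{G}_{\text{parity}}^{\ad^{(r)}} = i\mathcal{P}_n^*$, i.e.\ universality. The argument therefore reduces to verifying admissibility of two explicit choices.

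For even $n$, I would take $k = 1$ and $S_1 = \{1, \ldots, n\}$. Condition~1 holds since $|S_1| = n$ is even, condition~2 holds since $S_1$ is the full index set, and condition~3 is vacuous for $k = 1$; the range requirement $1 \leq k \leq n-1$ holds because $n \geq 2$. This uses a single parity qubit. For odd $n$, I would take $k = 2$ with $S_1 = \{1, \ldots, n-1\}$ and $S_2 = \{n-1, n\}$, so that $|S_1| = n-1$ and $|S_2| = 2$ are both even (condition~1), their union is $\{1, \ldots, n\}$ (condition~2), and condition~3 holds because with only two sets clause~3(a) is vacuous while clause~3(b) requires just the singleton overlap $S_1 \cap S_2 = \{n-1\}$, with distinctness trivial for a single $s_i$. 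The range requirement holds since the smallest admissible odd value is $n = 3$, where $k = 2 = n-1$. This uses two parity qubits.

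The only bookkeeping requiring care is the interplay between the parity of $n$ and the evenness of $|S_j|$, which forces the single-set choice $|S_1| = n$ to be available only for even $n$ and motivates splitting off the final index when $n$ is odd, together with confirming $k \leq n-1$ in the extremal cases. I do not anticipate any genuine obstacle here: all the difficulty has already been absorbed into \Cref{thm:n_geq_2_main_text}, and the corollary simply records that the theorem's hypotheses were engineered to admit these minimal configurations.
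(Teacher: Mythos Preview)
Your proposal is correct and matches the paper's approach exactly: the paper also instantiates \Cref{thm:n_geq_2_main_text} with $k=1$, $S_1=\{1,\dots,n\}$ for even $n$ and with $k=2$, $S_1=\{1,\dots,n-1\}$, $S_2=\{n-1,n\}$ for odd $n$. Your verification of the hypotheses is in fact more detailed than what the paper provides.
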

In the case of even $n$, the generating set $\mathcal{G}_{\text{parity}}$ has $2n+1$ elements, so by the results of Ref.~\cite{opt_gen_arxiv} it is provably minimal. For the odd $n$ case, \textit{a priori} there is still a chance that we may be able to find a single element of $i\mathcal{P}_{n}^{*}$ which extends $\mathcal{G}_{\text{s.q.}}$ to a universal generating set. The following result demonstrates that this is not the case:
\begin{Theorem} \label{thm:n_odd_no_single_main_text} Let $n \geq 2$ be odd and $\mathcal{G}_{\text{s.q.}}$ be as above. For any $iP \in i\mathcal{P}_{n}^{*}$, $\mathcal{G} := \mathcal{G}_{\text{s.q.}} \cup \{iP\}$ is such that
\begin{align}
\mathcal{G} \bigcup_{r=1}^{\infty} \mathcal{G}^{\ad^{(r)}} \subsetneq i\mathcal{P}_{n}^{*}.
\end{align}
\end{Theorem}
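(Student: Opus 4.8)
The plan is to pass to the symplectic description of Pauli strings introduced in \Cref{app:mapping}, under which each element of $i\mathcal{P}_n^*$ corresponds to a nonzero vector in $\mathbb{F}_2^{2n}$, two strings anticommute precisely when the symplectic form of their vectors equals $1$, and the commutator of two anticommuting strings is proportional to the string whose vector is the sum of the two. Since $\mathcal{G}$ consists of Pauli strings, every nested commutator $\ad_{G_1}\cdots\ad_{G_r}(G_{r+1})$ is, if nonzero, proportional to a single Pauli string. Hence the set of Pauli types appearing in $\mathcal{G}\bigcup_{r}\mathcal{G}^{\ad^{(r)}}$ is contained in the closure $V\subseteq\mathbb{F}_2^{2n}\setminus\{0\}$ generated from the vectors $\{e_j^X,e_j^Z\}_{j=1}^{n}$ of $\mathcal{G}_{\text{s.q.}}$ together with the vector $v_P$ of $P$, under the operation $v,w\mapsto v+w$ whenever $\langle v,w\rangle=1$. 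The theorem reduces to exhibiting, for every choice of $P$, a nonzero vector lying outside $V$.

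Two elementary facts drive the argument, both established by a case analysis over the local Pauli at each qubit. First, support is subadditive: $\mathrm{supp}(v+w)\subseteq\mathrm{supp}(v)\cup\mathrm{supp}(w)$. Second, weight parity flips under the operation: for anticommuting $v,w$ one has $\wt(v+w)\equiv\wt(v)+\wt(w)+1\pmod 2$, because the number of qubits at which the two strings anticommute is odd while the qubits where they agree nontrivially cancel. A further consequence of having all of $\mathcal{G}_{\text{s.q.}}$ available is that for any $v\in V$ and any $j\in\mathrm{supp}(v)$, the appropriate single-qubit vector anticommutes with $v$, so $V$ is closed under arbitrarily re-choosing the non-identity local Pauli at any qubit already in the support; a singleton $\{j\}$ can only be combined with a vector that already contains $j$ in its support.

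The main argument splits on $\wt(P)$. If $\wt(P)<n$, take $W_{\mathrm{supp}}:=\{v:\wt(v)\le 1\}\cup\{v:\mathrm{supp}(v)\subseteq\mathrm{supp}(P)\}$; one checks using the two facts above that $W_{\mathrm{supp}}$ is closed under the operation and contains all generator vectors, so $V\subseteq W_{\mathrm{supp}}$. Choosing $j\notin\mathrm{supp}(P)$ (which exists since $\wt(P)<n$) and any $j'\neq j$, the weight-two vector of $Z_jZ_{j'}$ lies outside $W_{\mathrm{supp}}$, so it is not in $V$. If instead $\wt(P)=n$, then because $n$ is odd the weight of $P$ is odd, so \emph{every} generator vector (the weight-one singletons and $v_P$) has odd weight; by the weight-parity relation the set $W_{\mathrm{odd}}:=\{v:\wt(v)\text{ odd}\}$ is closed and contains all generators, whence $V\subseteq W_{\mathrm{odd}}$, which omits every even-weight string such as $Z_1Z_2$. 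Since $\wt(P)\in\{1,\dots,n\}$ these cases are exhaustive, so no single $P$ can make $\mathcal{G}$ universal. It is worth noting that the support argument needs no parity assumption and reflects the general requirement that $P$ have full support; oddness of $n$ enters only through the implication $\wt(P)=n\Rightarrow\wt(P)$ odd, which is exactly what fails for even $n$ and is consistent with \Cref{thm:n_geq_2_main_text}.

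I expect the main obstacle to be the reduction of the Lie-algebraic closure to the purely combinatorial anticommuting-XOR closure, namely confirming that real linear combinations and nested commutators of the generators produce no Pauli directions beyond $V$, together with the careful verification that $W_{\mathrm{supp}}$ and $W_{\mathrm{odd}}$ are genuinely closed across all local overlap patterns. The weight-parity identity $\wt(v+w)\equiv\wt(v)+\wt(w)+1\pmod 2$ for anticommuting $v,w$ is the crux; once it and the subadditivity of support are in hand, the remainder is bookkeeping.
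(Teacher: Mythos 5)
Your proposal is correct, and it reaches the result by a genuinely different mechanism than the paper, even though both work in the symplectic picture of \Cref{app:mapping} and both split on $\wt(P)<n$ versus $\wt(P)=n$. For the low-weight case the paper fixes a concrete witness ($\bm{x}_i+\bm{x}_j$ with $j$ outside the support of $P$) and argues by contradiction about the \emph{structure of the generating sequence}: the only generator anticommuting with $\bm{x}_j$ is $\bm{z}_j$, so any nonvanishing nested commutator involving $\bm{x}_j$ degenerates to an alternating $\bm{x}_j,\bm{z}_j$ string producing only $\bm{x}_j$, $\bm{z}_j$ or $\bm{x}_j+\bm{z}_j$. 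You instead exhibit the closed invariant set $W_{\mathrm{supp}}$; this is cleaner, avoids sequence bookkeeping, and automatically bounds the full closure under pairwise commutation (a superset of the nested-commutator set $\mathcal{G}\bigcup_r\mathcal{G}^{\ad^{(r)}}$, which only strengthens the non-universality conclusion). For the full-weight case the paper proves the odd-weight invariant by induction on sequence length using the specialized \Cref{lem:N_minus_wt_plus_sigma} (which gives $\wt(\ad_{\bm{v}}(\bm{u}))=n-\wt(\bm{u})+\sigma$ with $\sigma$ odd for the weight-$n$ generator, plus a separate check that single-qubit generators preserve weight exactly); your identity $\wt(v+w)\equiv\wt(v)+\wt(w)+1\pmod 2$ for anticommuting $v,w$ subsumes both sub-cases in one stroke and is correct (the overlap qubits where the strings agree contribute evenly, and the anticommuting overlap count is odd by hypothesis). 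The only caveats are minor: the "reduction to the combinatorial closure" you flag as the main obstacle is not one --- the set in the theorem literally consists of nested commutators, each proportional to a single Pauli string, so no spans enter --- and your closing remark that oddness of $n$ enters only through $\wt(P)=n\Rightarrow\wt(P)$ odd correctly locates where the argument (consistently with \Cref{thm:n_geq_2_main_text}) must fail for even $n$.
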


The proof is given in \Cref{app:thm_n_odd_no_single_main_text} and again makes use of the mapping mentioned above. This result demonstrates two things. First, it completes the claim made earlier that for $n$ even and odd respectively, the minimal number of parity qubits required for universal computation are $1$ and $2$. In the former case, the resultant generating set has size $2n+1$, which we know is optimal, whereas the requirement of at least $2n+2$ generators in the latter case demonstrates that not every generating set of $2n$ Pauli strings can be extended to a universal set by simply appending a single additional Pauli string.

There are two final observation worth making here. The first is that, for any $\mathbb{P}$ that contains a subset of sets that satisfies the conditions of \Cref{thm:n_geq_2_main_text}, the generating set $\mathcal{G}_{\text{s.q.}} \cup \mathcal{G}_{\mathbb{P}}$ is universal. For example, if we consider the set $\mathbb{P}_{\text{pairs}}$ that was used in the original proof of universality of the parity framework \cite{fellner_22}, we see that it contains the sets
\begin{align}
\big\{ \{i, i+1\} | i = 1, ..., n-1 \big\}
\end{align}
which satisfy the conditions of \Cref{thm:n_geq_2_main_text}. This provides an alternative proof of the universality of $\mathcal{G}_{\text{s.q.}} \cup \mathcal{G}_{\mathbb{P}_{\text{pairs}}}$ to that given in Ref.~\cite{fellner_22} and also demonstrates that, from the point of view of universality, there is redundancy in the set of parities $\mathbb{P}_{\text{pairs}}$. 

The second observation is that, even though there are sets $\mathbb{P}$ satisfying the conditions of \Cref{thm:n_geq_2_main_text} which correspond to generating sets with greater than $2n+1$ elements, they can still be considered to be minimal from a certain perspective. Specifically, they are minimal in the sense that by removing any single element of $\mathbb{P}$, the generating set then fails to be universal. Again this contrasts to the case of $\mathbb{P}_{\text{pairs}}$ from which it is possible to omit elements and still maintain universality.

\section{Implications} \label{sec:implications}

The parity sets that satisfy the sufficient conditions of \Cref{thm:n_geq_2_main_text} are distinct from those considered to date in the literature. In this section we elucidate certain implications of using such parity sets within the parity quantum computing framework. In particular, we first demonstrate that, for the generating sets $\mathcal{G}_{\textrm{parity}}$ where $|\mathbb{P}| = 1$ if $n$ is even and $|\mathbb{P}|=2$ if $n$ is odd, it is possible to implement the rotation of {\em any} $n$-qubit Pauli string in constant depth. Thereafter, we focus on the ability to perform parity quantum computing using nearest-neighbor interactions between physical qubits arranged on a triangular lattice and on the ability to natively implement the recently developed parity flow framework \cite{klaver2024} suited to the specific connectivity of current quantum devices, such as those provided by IBM Quantum \cite{IBMQ}. Finally, we briefly outline the implications for universal measurement-based quantum computing.

\subsection{Implementing Pauli String Rotations in Constant-depth} \label{subsec:const_dep}

One pertinent question to ask when considering the universal generating sets related to \Cref{thm:n_geq_2_main_text} is how well they scale in terms of compiling specific unitaries. In this subsection, we provide a partial answer to this question for the minimal possible universal sets, that is, for the minimal sizes of $\mathbb{P}$.

Consider the set $\mathcal{G}_{\text{parity}}$ with 
\begin{align}
\mathbb{P} = \begin{cases} \{i Z_{1} \otimes \dots \otimes Z_{n}\}, &n = 0 \bmod 2 \\
\{i Z_{1} \otimes \dots \otimes Z_{j}, iZ_{j} \otimes \dots \otimes Z_{n}\}, & n = 1 \bmod 2 
\end{cases} \label{eq:min_parity_cases}
\end{align} 
for some even number $j \in \{2, \dots, n-1\}$. As a consequence of the results presented in \Cref{sec:min}, we know that, for any even $n \in \mathbb{N}_{\ge 2}$, $\mathcal{G}_{\text{parity}}$ defined in this way is universal. Apart from the dependence on $n$ being even or odd, the {\em size} of $\mathbb{P}$ doesn't depend on $n$. This is in contrast to other choices of parity set such as $\{i Z_{j} \otimes Z_{j+1} | j =1,...,n-1\}$. This lack of independence of $|\mathbb{P}|$ on $n$ may seem innocuous, but it has an interesting consequence for implementing unitaries of the form $e^{i\theta P}$ for $P \in \mathcal{P}_{n}^{*}$. Specifically, {\em any} such rotation can be implemented in constant depth using the generating set $\mathcal{G}_{\text{parity}}$ with $\mathbb{P}$ as above.

This claim follows from a part of the proof of \Cref{thm:n_geq_2_main_text} (namely \Cref{lem:z_A_improved}), which we elaborate upon below. Before doing so, let us comment on what ``depth'' means in this context. Clearly, as $n$ increases, the size of $\mathcal{G}_{\text{s.q.}}$ does also. Accordingly, the value $R < \infty$ for which 
\begin{align}
\mathcal{G}_{\text{parity}} \bigcup_{r = 1}^{R} \mathcal{G}_{\text{parity}}^{\ad^{(r)}} = i\mathcal{P}_{n}^{*}
\end{align}
will increase as $n$ increases. The value $R$ corresponds to the minimum value such that there always exists a sequence $G_{1}, \dots, G_{R'} \in \mathcal{G}_{\text{parity}}$ with $R' \leq R$ such that
\begin{align}
\ad_{G_{1}} \dots \ad_{G_{R'-1}}(G_{R'}) = \frac{1}{2^{R'-1}}iP \label{eq:depth_sequences}
\end{align}
for each $iP \in i\mathcal{P}_{n}^{*}$. However, what this value $R$ {\em doesn't} take into account is that most of the elements $G_{r}$ in the sequence will come from $\mathcal{G}_{\text{s.q.}}$, that is, they are single-qubit operators and hence can be implemented in parallel in a circuit. In light of this, we can define the {\em sequence depth} of implementing $iP$ as the minimum number of multi-qubit Pauli-string rotations from the parity set $\mathcal{G}_{\mathbb{P}}$ required to generate $iP$ by nested commutation. Let us consider the corresponding quantity 
\begin{align}
\text{seq-depth}(iP) := \min_{\substack{G_{1},\dots, G_{R'} \\ \text{s.t. (\ref{eq:depth_sequences}) holds} }} |\{G_{r} \in \mathcal{G}_{\mathbb{P}}| r = 1, ..., R' \}|.
\end{align}
We have the following:
\begin{Proposition} \label{prop:const_depth} Let $\mathcal{G}_{\text{parity}}$ be as in \Cref{eq:min_parity_cases}. For any $iP \in i\mathcal{P}_{n}^{*}$,
\begin{align}
\textrm{seq-depth}(iP) \leq \begin{cases} 3, & n = 0 \bmod 2, \\
  6, & n = 1 \bmod 2.
\end{cases}
\end{align}
\end{Proposition}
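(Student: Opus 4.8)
The plan is to reduce an arbitrary Pauli string $iP$ to a normal form by conjugating with single-qubit generators (which are free with respect to the seq-depth count), so that the only cost comes from invoking the multi-qubit generators in $\mathcal{G}_{\mathbb{P}}$. First I would observe that for any $iP \in i\mathcal{P}_n^*$, single-qubit $Z$-rotations and $X$-rotations let me change the Pauli \emph{type} on any individual qubit essentially for free: conjugating $iP$ by $iZ_j$ maps an $X$ or $Y$ on qubit $j$ to its partner (up to sign and factor), and similarly for $iX_j$. Thus, up to free single-qubit conjugations, I can assume the target Pauli $P$ is supported as a pure $Z$-string on some subset $A \subseteq \{1,\dots,n\}$, i.e.\ it suffices to generate $i\bigotimes_{q \in A} Z_q$ for arbitrary $A$. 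This is exactly where the cited ingredient enters: by \Cref{lem:z_A_improved} (a component of the proof of \Cref{thm:n_geq_2_main_text}), every such $Z$-string $i Z_A$ can be produced by nested commutators involving only a bounded number of elements from $\mathcal{G}_{\mathbb{P}}$.

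For the even case, the plan is to show that a single use of the one available generator $iZ_1 \otimes \cdots \otimes Z_n$ suffices to seed any $Z_A$, so that after free single-qubit conjugations the seq-depth is at most the number of $\mathcal{G}_{\mathbb{P}}$-insertions needed by \Cref{lem:z_A_improved}, which I would quote as giving the bound $3$. Concretely, starting from the full weight-$n$ $Z$-string, commuting against single-qubit $X_j$'s toggles individual qubits in and out of the support (each commutator $\ad_{iX_j}$ flips $Z_j \leftrightarrow$ identity on that qubit via the $Y$-intermediate, and a second single-qubit rotation cleans up the residual $Y$), letting me carve out any desired $A$ while calling the multi-qubit generator only a constant number of times. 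I would carefully track that these toggle-and-clean operations do not require reusing $iZ_1\otimes\cdots\otimes Z_n$ more than the stated three times.

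For the odd case, the two generators $iZ_1\otimes\cdots\otimes Z_j$ and $iZ_j\otimes\cdots\otimes Z_n$ overlap only on qubit $j$; commuting them together produces a $Z$-string on the symmetric difference of their supports (the shared $Z_j$ cancels in the commutator, yielding a weight-$(n-1)$ or full-support object depending on bookkeeping), and from these building blocks the same toggle-and-clean strategy reaches arbitrary $A$. Since each of the two generators may need to be invoked up to the bound coming from \Cref{lem:z_A_improved}, the seq-depth doubles relative to the even case, giving the bound $6$. The main obstacle I anticipate is the careful commutator bookkeeping: I must verify that the intermediate $Y$-operators generated when toggling supports can always be removed by free single-qubit conjugations \emph{without} forcing an extra insertion of a multi-qubit generator, and that the anti-commutation pattern within $\mathcal{G}_{\text{s.q.}}$ (each $iX_j$ anti-commuting with exactly $iZ_j$) guarantees the required commutators are nonzero at each step. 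Making the constants $3$ and $6$ tight, rather than merely finite, is the delicate part, and I expect it to follow directly from a quantitative reading of \Cref{lem:z_A_improved} combined with the constant overhead of the single-qubit cleanup.
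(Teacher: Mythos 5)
Your overall strategy --- treat single-qubit conjugations as free, reduce to generating a fixed-type string on the support of $P$, and read the constants $3$ and $6$ off \Cref{lem:z_A_improved}, doubling for odd $n$ because $\mathbb{P}$ then has two elements --- is exactly the paper's argument, which presents the proposition as a corollary of that lemma and writes the sequences out explicitly in \Cref{eq:n_even_minimal_supp_even} and \Cref{eq:n_even_minimal_supp_odd}. However, the mechanism you describe for carrying this out contains two concrete errors. First, single-qubit commutators cannot ``toggle individual qubits in and out of the support'': when the commutator is nonzero, $\ad_{iX_j}$ sends $Z_j$ to (a multiple of) $Y_j$ and $Y_j$ to $Z_j$ --- it permutes non-identity factors on qubit $j$ but never produces the identity there, and it annihilates anything acting trivially on $j$. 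The support of a Pauli string is therefore invariant under nested commutation with elements of $\mathcal{G}_{\text{s.q.}}$. Carving the subset $\mathrm{supp}(P)$ out of the full-support generator $i\bm{Z}$ genuinely requires a \emph{second} (and, when the support has even size, a third) insertion of $i\bm{Z}$: in \Cref{eq:n_even_minimal_supp_even} the inner commutators turn the factors on $\mathrm{supp}(P)$ into $Y$'s while leaving $Z$'s everywhere else, and it is the outer $\ad_{i\bm{Z}}$ that cancels those residual $Z$'s on the complement. This is precisely where the constant $3$ comes from; a ``single use to seed plus free single-qubit cleanup'' cannot reach strings of smaller support.

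Second, in the odd case the two generators $iZ_{1}\otimes\cdots\otimes Z_{j}$ and $iZ_{j}\otimes\cdots\otimes Z_{n}$ are both $Z$-type strings and hence commute, so ``commuting them together'' yields zero rather than a string on the symmetric difference of their supports. They must be linked through a single-qubit $X$ on the overlap qubit, which is exactly what the tail $\bm{z}^{B},\bm{x}_{B\cap C},\bm{z}^{C}$ of the sequences in \Cref{lem:z_A_improved} accomplishes. Since you ultimately defer to that lemma for the quantitative bounds, the proposition still follows along your route, but the intermediate narrative as written would fail if executed literally, and the count of multi-qubit insertions is doing the real work that your ``toggle-and-clean'' picture attributes to free operations.
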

The proof follows as a corollary of \Cref{lem:z_A_improved}. The proof of \Cref{lem:z_A_improved} is constructive in the sense that, for each $iP \in i\mathcal{P}_{n}^{*}$, it produces $G_{1}, ..., G_{R'} \in \mathcal{G}_{\text{parity}}$ that satisfy \Cref{eq:depth_sequences}. For the even $n$ case, these sequences are given explicitly below. 

Before giving the sequences, let us consider the question: how does one actually go from such a sequence to the circuit implementing $e^{i\theta P}$? One possible answer is given by \Cref{eq:BCH_ish} presented earlier. As demonstrated in, e.g., Ref.~\cite{opt_gen_arxiv}, if the sequence $G_{1}, \dots, G_{R'}$ satisfies \Cref{eq:depth_sequences} then 
\begin{align}
e^{\frac{\pi}{4}G_1}\cdots e^{\frac{\pi}{4}G_{R'-1}} e^{\theta G_{R'}}e^{-\frac{\pi}{4}G_{R'-1}}\cdots e^{-\frac{\pi}{4}G_{1}} = e^{i\theta P}. \label{eq:Clifford_BCHish}
\end{align}
Pursuant to \Cref{prop:const_depth}, even though this product contains $2R'-1$ terms, and $R'$ grows with $n$, at most $5$ (resp. $11$) of them are entangling unitaries for all even (resp. odd) $n$. Thus, the implementation of $e^{i\theta P}$ is constant also in circuit depth, if each entangling gate and each parallel implementation of single-qubit gates are considered to have constant depth.\footnote{There exist measurement-based implementations of all the entangling gates considered here that have constant depth (see, e.g., Ref.~\cite{bäumer2024measurementbasedlongrangeentanglinggates}).}

The sequences for a given $iP \in i\mathcal{P}_{n}^{*}$ have a different structure depending on whether $n$ is even or odd and on whether the Pauli string $P = P_{1} \otimes \dots \otimes P_{n}$ has an even or odd number of tensor factors. We focus on the even $n$ case here; the odd case can be obtained by applying \Cref{lem:z_A_improved} twice. Let us write $\bm{Z} := Z_{1} \otimes \dots \otimes Z_{n}$ for the single parity operator. Let us write $\{j_{1}, \dots, j_{l}\} \subseteq \{1,\dots,n\}$ for the subset of labels $j$ for which $P_{j} \neq I$. Similarly, we write $\{k_{1}, \dots, k_{n-l}\} := \{1,\dots,n\} \setminus \{j_{1}, \dots, j_{l}\}$. If $l$ is odd, then the sequence
\begin{align}
G'_{1},\dots, G'_{l+2} = i\bm{Z},iX_{j_{1}},\dots,iX_{j_{l}},i\bm{Z} \label{eq:n_even_minimal_supp_even}
\end{align}
is such that
\begin{align}
\ad_{G'_{1}}\dots\ad_{G'_{l+1}}(G'_{l+2}) \propto iP'_{1} \otimes \dots \otimes P'_{n}
\end{align}
where $P'_{j} = X_{j}Z_{j}$ for each $j \in \textrm{supp}(P)$ and $P'_{j} = I$ otherwise, that is, the Pauli string on the right-hand side has the same support as $P$. Accordingly, there exists a sequence $G''_{1},\dots, G''_{t}$ consisting only of elements of $\mathcal{G}_{\text{s.q.}}$ such that
\begin{align}
\ad_{G''_{1}}\dots\ad_{G''_{t}}(iP'_{1} \otimes \dots \otimes P'_{n}) \propto iP.
\end{align}
The sequence $G_{1},\dots,G_{r}$ is then taken to be $G''_{1},\dots,G''_{t},G'_{1},\dots,G'_{l+2}$.

In the case, where $l$ is even, the sequence
\begin{align}
G'_{1},\dots,G'_{3n-3l+5} &= iX_{j_{1}},i\bm{Z}, iX_{k_{1}},\dots,iX_{k_{n-l}},\nonumber\\
& \quad \,\,\, iZ_{k_{1}},\dots,iZ_{k_{n-l}}, i\bm{Z},iX_{j_{1}}, \nonumber \\
& \quad \,\,\,\, iX_{k_{1}},\dots,iX_{k_{n-l}},i\bm{Z} \label{eq:n_even_minimal_supp_odd}
\end{align}
is such that
\begin{align}
\ad_{G'_{1}}\dots\ad_{G'_{3n-3l+4}}(G'_{3n-3l+5}) \propto i\hat{P}_{1} \otimes \dots \otimes \hat{P}_{n}
\end{align}
where $\hat{P}_{j} = Z_{j}$ if $j \in \textrm{supp}(P)$ and $\hat{P}_{j} = I$ otherwise. Similarly to above, it means that $\hat{P} = \hat{P}_{1} \otimes \dots \otimes \hat{P}_{n}$ has the same support as $P$, so there again exist $G''_{1}, \dots, G''_{t} \in \mathcal{G}_{\text{s.q.}}$ such that
\begin{align}
\ad_{G''_{1}}\dots\ad_{G''_{t}}(i\hat{P}_{1} \otimes \dots \otimes \hat{P}_{n}) \propto iP.
\end{align}
The sequence $G_{1},\dots,G_{r}$ is again taken to be the concatenation of the two sequences.

If we substitute these sequences into \Cref{eq:Clifford_BCHish}, and abbreviate notation to highlight the splitting into local Clifford rotations and global rotations, we get, in the case where $P$ has even support,
\begin{align}
e^{i\theta P} =  U_{\text{l.c.}}\, e^{i\frac{\pi}{4} \bm{Z}} \,V_{\text{l.c.}} \, e^{i\theta \bm{Z}} \, V_{\text{l.c.}}^{\dagger} \, e^{-i\frac{\pi}{4} \bm{Z}} \, U_{\text{l.c.}}^{\dagger},
\end{align}
where $U_{\text{l.c.}}$ and $V_{\text{l.c.}}$ are products of local Clifford operations. In the case where $P$ has odd support, we get that $e^{i\theta P}$ is produced by the product of unitaries
\begin{align}
\hat{U}_{\text{l.c.}}\, e^{i\frac{\pi}{4} \bm{Z}}\, \hat{V}_{\text{l.c.}}\,e^{i\frac{\pi}{4} \bm{Z}}\,\hat{W}_{\text{l.c.}}\,e^{i\theta \bm{Z}} \,\hat{W}_{\text{l.c.}}^{\dagger}\, e^{-i\frac{\pi}{4} \bm{Z}}\,V_{\text{l.c.}}^{\dagger}\,e^{-i\frac{\pi}{4} \bm{Z}}\,U_{\text{l.c.}}^{\dagger}
\end{align}
where $\hat{U}_{\text{l.c}}$, $\hat{V}_{\text{l.c.}}$ and $\hat{W}_{\text{l.c}}$ are also all products of local Clifford operations. The analogous sequences for the case where $n$ is odd are longer, which is due to the fact that, in such cases, $\mathbb{P}$ contains two elements (this is the source the factor of two difference between the sequences depths in \Cref{prop:const_depth}).

\subsection{Arrangement on a Triangular Lattice}

\begin{figure*}[!t] 
\centering
\begin{subfigure}[b]{0.8\textwidth} 
\centering
\includegraphics[width=0.8\textwidth]{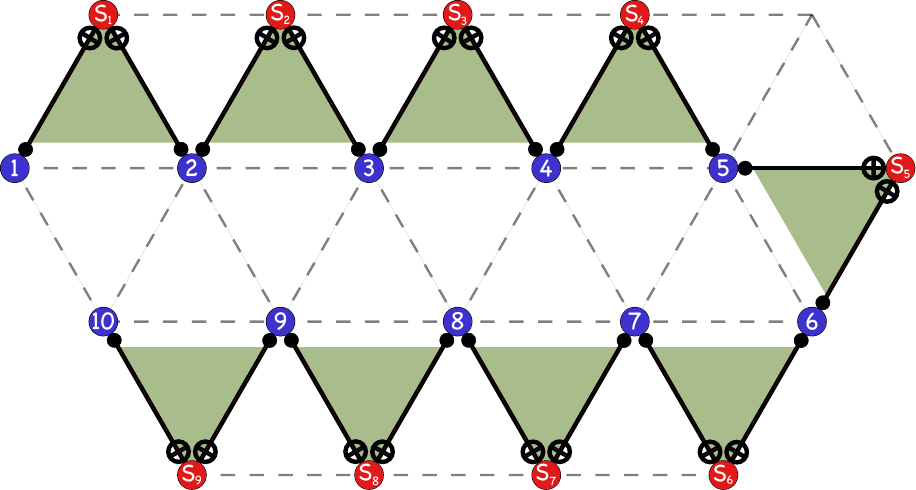}
\caption{}
\label{fig:2parity}
\end{subfigure} \\
\begin{subfigure}[b]{0.48\textwidth} 
\centering
\includegraphics[width=\textwidth]{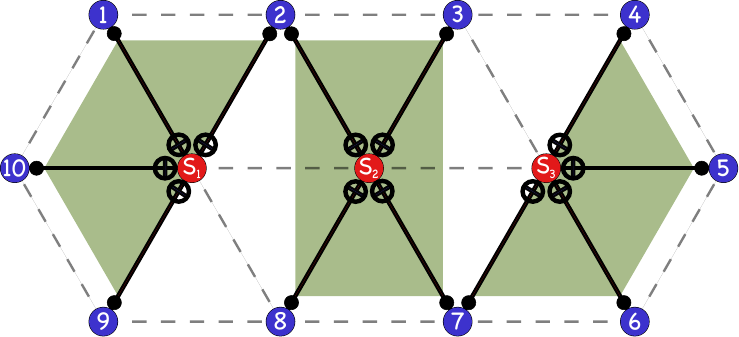} 
\caption{}
\label{fig:4parity}
\end{subfigure}
\hfill 
\begin{subfigure}[b]{0.48\textwidth} 
\centering
\includegraphics[width=\textwidth]{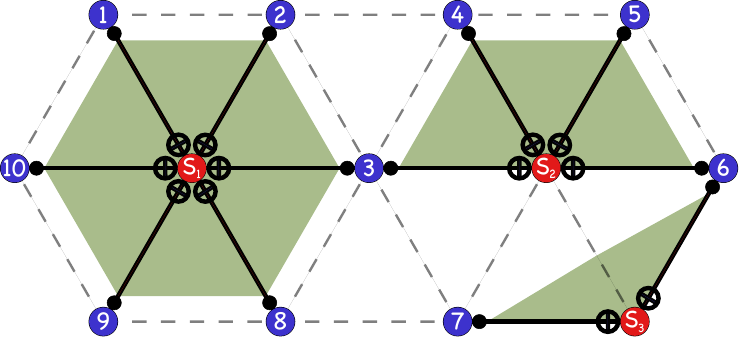} 
\caption{}
\label{fig:mixed_parity}
\end{subfigure} 
\caption{This figure depicts several possibilities for performing universal parity quantum computing using $10$ base qubits and varying numbers of parity qubits arranged on a triangular lattice. In each case, the base qubits are shown in blue, the parity qubits in red and the green shaded indicate which base qubits pertain to each parity set, i.e., each blue base qubit touching a corner of a green shape indicates that that qubit features in the parity set for that shape. For example, in figure (a) all the sets $S_{i}$ have two parities, namely $S_{i} = \{i, i+1\}$; in figure (b), each parity set has four elements with $S_{1} = \{1,2,9,10\}$, $S_{2} = \{2,3,7,8\}$ and $S_{3} = \{4,5,6,7\}$; in figure (c), the parity sets each have differing numbers of elements with $S_{1} = \{1,2,3,8,9,10\}$, $S_{2} = \{3,4,5,6\}$ and $S_{3} = \{6,7\}$. Each of these choices of sets satisfies the conditions of \Cref{thm:n_geq_2_main_text} and hence each layout supports universal parity quantum computing.}
\label{fig:10_data_qubit_figs}
\end{figure*}

One advantage of the parity framework is that long-range multi-qubit logical operations are performed by single-qubit physical rotations on an encoded state, which was moreover prepared using nearest-neighbor interactions. For example, when computing using the layout depicted in \Cref{fig:parity_original}, it is possible to implement the logical operation $R_{Z_{1} \otimes Z_{4}}(\theta)$ on the non-neighboring qubits $1$ and $4$, by performing the rotation $R_{Z_{\{1,4\}}}(\theta)$ on the parity qubit related to the set $\{1,4\}$. As the figure demonstrates, the encoded state permitting this can be prepared by arranging the physical base and parity qubits on a square lattice and then by applying a sequence of CNOT gates between nearest neighbors.

Here, we consider what conclusions can be drawn from \Cref{thm:n_geq_2_main_text} for designing possible layouts for the parity framework. In particular, we focus on arranging the physical base and parity qubits on a triangular lattice, while still requiring that the encoding unitary $U_{\text{enc}}$ be implemented via nearest-neighbor CNOT gates. In \Cref{fig:10_data_qubit_figs}, we portray several examples of possible layouts for $n=10$ base qubits corresponding to different choices of the sets $S_{1}, ..., S_{k}$ from \Cref{thm:n_geq_2_main_text}. 

There are a couple of benefits of performing parity computing using the layouts presented here. First, the layout in e.g.,  \Cref{fig:2parity} can be extended to any number of base qubits and in each case the corresponding unitary encoding has constant circuit depth. This contrasts with the circuit depth for the encoding unitary for the layouts of the form depicted in \Cref{fig:parity_original}, corresponding to the parity set $\mathbb{P}_{\text{pairs}}$, which scales with $n$ (although a constant-depth encoding using measurements does exist \cite{messinger_23}). 

Second, the layouts in \Cref{fig:10_data_qubit_figs} may be advantageous for physical device design in certain architectures. For example, in each of the layouts depicted, the topology ensures that it is possible to draw a path from the perimeter of the lattice to every two-qubit gate without crossing any other two-qubit gate. In the context of, e.g., superconducting qubits, this could allow for chip design where no reliance on so-called air-bridge crossovers \cite{chen2014fabrication,dunsworth2018method} for the required control lines for the implementation of two-qubit gates, thereby simplifying the fabrication process. Moreover, in such a context it is often common to implement two-qubit gates between a control qubit with high frequency and a target qubit with low frequency \cite{Sheldon_16,Paik_16}. Since the layouts in \Cref{fig:10_data_qubit_figs} all have fixed orientations of CNOTs (certain qubits are only ever targets and others are only ever controls), this suggests a natural distribution of high and low frequency qubits throughout the chip.

\subsection{Minimal Parity Flow}

\begin{figure*}[!t] 
\centering
\begin{subfigure}[b]{0.5\textwidth} 
\centering
\includegraphics[width=\textwidth]{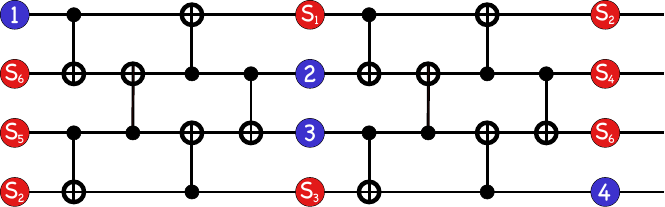} 
\caption{}
\label{fig:parity_flow_n4_k6}
\end{subfigure}
\hfill
\begin{subfigure}[b]{0.46\textwidth} 
\centering
\includegraphics[width=\textwidth]{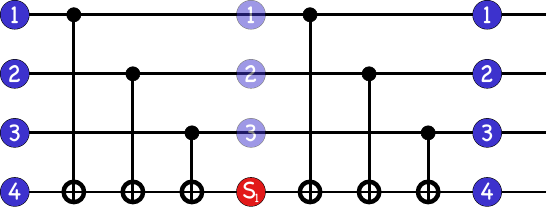} 
\caption{}
\label{fig:parity_flow_n4_min}
\end{subfigure} \\
\vspace{0.5cm}
\begin{subfigure}[b]{0.48\textwidth} 
\centering
\includegraphics[width=\textwidth]{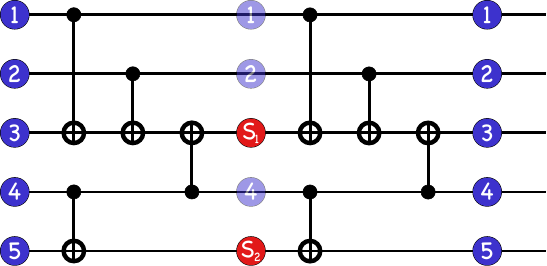} 
\caption{}
\label{fig:parity_flow_n5_min}
\end{subfigure}
\hfill
\begin{subfigure}[b]{0.48\textwidth} 
\centering
\includegraphics[width=\textwidth]{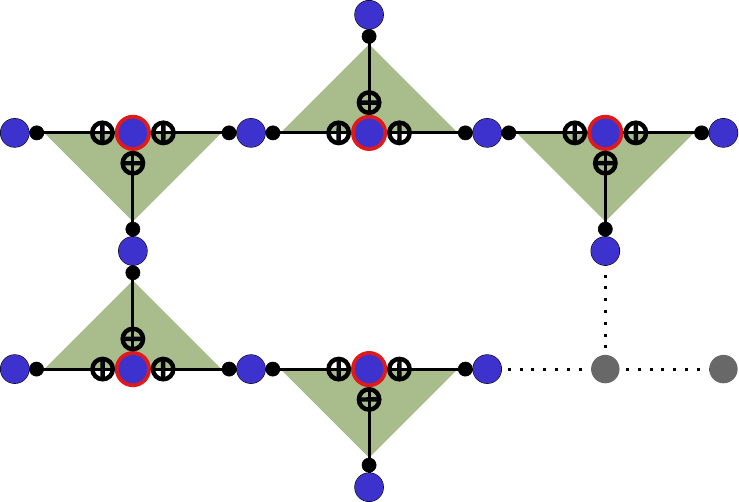} 
\caption{}
\label{fig:parity_flow_brickwork}
\end{subfigure} 
\caption{In the parity flow framework of Ref.~\cite{klaver2024}, the parity information of the $n$ base qubits is tracked through time in an $n$-qubit circuit. A circuit Ansatz permitting universal computation can be obtained from the figures (a) - (c) by inserting single-qubit $Z$-rotations at every location containing a red circle, and single-qubit $Z$-rotations and certain unitaries for implementing single-qubit $X$-rotations at every location containing a solid blue circle (in most cases, ``certain unitaries'' means the single-qubit $X$-rotation itself, but depending on the parity flow, extra CNOTs may be required - see \cite{klaver2024}). The faded blue circles indicate the parity information for the corresponding qubit at the given point in the circuit, but no rotations are required to be placed there for universality. In figure (a), the flow of parity information is depicted for the computation corresponding to the parity encoding shown in \Cref{fig:parity_original} for $n = 4$ and $k=6$. As can be seen, all four base qubits and all six parity sets $S_{1},...,S_{6}$ are present at least once in the circuit fragment shown (the parity sets are the same as in \Cref{fig:parity_original}). In figure (b), the flow of parities for the minimal number of parity sets for $n=4$ qubits is shown, i.e., with $S_{1} = \{1,2,3,4\}$. In figure (c), the flow of parities for the minimal number of parity sets for $n=5$ is shown, namely with $S_{1} = \{1,2,3,4 \}$ and $S_{2} = \{4,5\}$. In figure (d), the parity flow formalism is mapped to a heavy-hex layout, typical of the quantum devices provided by IBM Quantum \cite{IBMQ}. Each parity set contains four elements indicated by the four qubits touching or contained within each green triangle. The qubits colored both blue and red are those that represent both base and parity qubits at various times throughout the circuit, while all blue qubits only represent base qubits. The gray qubits and dotted lines indicate part of the heavy-hex lattice not involved in the computation.}
\label{fig:parity_flow_egs}
\end{figure*}

So far, we have presented the parity computing framework as encoding an $n$-qubit state as an $n+k$-qubit state. As developed in the parity flow formalism of Ref.~\cite{klaver2024}, it is in fact possible to perform parity quantum computing on $n$ qubits by appropriately tracking how the parity information changes ``in place'', that is, by allowing a given physical qubit to act as a parity qubit at various times throughout the circuit.

Let us consider an explanatory example. Recall from above that, for the case $n=2$ and $k=1$, the encoded state for the parity computation with logical state $\ket{\psi} = \sum_{i,j = 0}^{1} \alpha_{ij}\ket{i}\ket{j}$ is
\begin{align}
\ket{\textrm{LHZ}_{\psi}} = \sum_{i,j=0}^{1}\alpha_{ij}\ket{i\oplus j}\ket{i}\ket{j}. \label{eq:n2k1}
\end{align}
By applying $R_{Z}(\theta)$ to the parity qubit and then decoding (either unitarily or via measurements), we obtain the logical state $R_{Z \otimes Z}(\theta)\ket{\psi}$. Equivalently, we could avoid the use of the parity qubit altogether by considering instead the state 
\begin{align}
(\ket{0}\!\!\bra{0} \otimes I + \ket{1}\!\!\bra{1} \otimes X) \ket{\psi} = \sum_{i,j=0}^{1} \alpha_{ij}\ket{i}\ket{i\oplus j} \label{eq:in_place}
\end{align}
instead of $\ket{\textrm{LHZ}_{\psi}}$. In this case, we can arrive at the same final logical state by applying the rotation $I \otimes R_{Z}(\theta)$ followed by the CNOT gate $(\ket{0}\!\!\bra{0} \otimes I + \ket{1}\!\!\bra{1} \otimes X)$ \footnote{This example, as well as its extension of implementing a $Z \otimes Z \otimes .... \otimes Z$-rotation by a sequence of CNOTs and a single $Z$-rotation, has appeared in a number of places in the literature; an inexhaustive list includes the Refs.~ \cite{browne2006onewayquantumcomputation,Cowtan_2020,Ferguson_21,kaldenbach2023mappingquantumcircuitsshallowdepth,bäumer2024measurementbasedlongrangeentanglinggates}}. By inspecting \Cref{eq:n2k1} and \Cref{eq:in_place}, we see that the parity information that is mapped to the auxiliary qubit in the former case, is encoded in the second base qubit in the latter. In Ref.~\cite{klaver2024}, this line of reasoning is extended to the full parity encoding (i.e. for $\mathbb{P} = \mathbb{P}_{\text{pairs}}$ as in \Cref{fig:parity_original}), where the resource benefits for implementing certain algorithms were also shown. \Cref{fig:parity_flow_n4_k6} depicts an example of how the parity information changes for the parity computation using the parity encoding shown in \Cref{fig:parity_original} using $n = 4$ physical qubits.

We know from \Cref{thm:n_geq_2_main_text} above that smaller parity sets than $\mathbb{P}_{\text{pairs}}$ still allow for universal computation. In the formalism where parity sets are mapped to additional physical qubits, this resulted in a lower total qubit count. But what does it mean in the parity flow formalism where $n$ qubits are used in any case? In essence, different choices for $\mathbb{P}$ correspond to different choices of the number and orientation of the CNOTs comprising the unitaries that transport the parity information (i.e., the unitaries between the filled circles in the circuits depicted in \Cref{fig:parity_flow_egs}). For example, \Cref{fig:parity_flow_n4_min} and \Cref{fig:parity_flow_n5_min} depict the unitaries and corresponding flow of parity information for the minimal parity set examples for $n = 4$ and $n = 5$ qubits respectively. As can be seen, there is a reduced CNOT cost for implementing the parity flow for the minimal set $\mathbb{P}$ for $n=4$ base qubits as compared to the original case with $\mathbb{P}_{\text{pairs}}$. 

To obtain a circuit Ansatz allowing universal computation from each of the circuit structures presented in \Cref{fig:parity_flow_egs}, we can associate the different elements of the generating sets $\mathcal{G}_{\text{parity}}$ to the different components of the circuit. Just as with the original parity quantum computing framework, we can obtain a universal computation by inserting single-qubit rotations in the appropriate locations. For example, by placing single-qubit $Z$- and $X$-rotations at the location of every solid blue circle in \Cref{fig:parity_flow_n4_k6,fig:parity_flow_n4_min,fig:parity_flow_n5_min}, we cover the elements of $\mathcal{G}_{\text{parity}}$ corresponding to $\mathcal{G}_{\text{s.q.}}$, while placing a single-qubit $Z$-rotation at the location of every red circle covers the elements corresponding to $\mathcal{G}_{\mathbb{P}}$. Accordingly, the key differences between the different parity sets in the parity flow picture amount to differences in the two-qubit gate placement of the corresponding circuits.

Similarly to the previous subsection, it is possible to leverage \Cref{thm:n_geq_2_main_text} and the differences in circuit structure between different parity sets to find favorable implementations of the parity flow formalism. For example, let us consider the arrangement of physical qubits on a heavy-hex lattice as in \Cref{fig:parity_flow_brickwork}. The qubits marked with both blue and red represent the qubits that represent both parity and base qubits at various times throughout the circuit (those in blue only ever represent base qubits). These base/parity dual qubits are the target of several CNOT gates with each of the neighboring qubits acting as controls. As the corresponding parity sets for this layout satisfy the conditions of \Cref{thm:n_geq_2_main_text}, we know that universal computation can be performed by placing single-qubit $Z$- and $X$-gates at appropriate locations in the circuit, in the manner explained above. The heavy-hex layout considered here is of practical relevance as it is the common layout across a range of current quantum devices made available by IBM Quantum \cite{IBMQ}. Moreover, it is in fact possible to directly map the circuits arising in the parity flow formalism to equivalent circuits using the native gates supported by the IBM Quantum platform, namely by replacing all CNOTs with CZ gates, and by changing all $X$-rotations to $Z$-rotations and vice versa for the dual base/parity qubits.

\subsection{Resources for MBQC} \label{sec:MBQC_res}

\begin{figure*}[!t] 
\centering
\begin{subfigure}[b]{0.48\textwidth} 
\centering
\includegraphics[width=\textwidth]{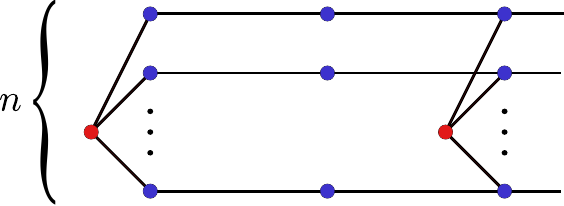} 
\caption{}
\label{fig:universal_graph_even}
\end{subfigure}
\hfill
\begin{subfigure}[b]{0.48\textwidth} 
\centering
\includegraphics[width=\textwidth]{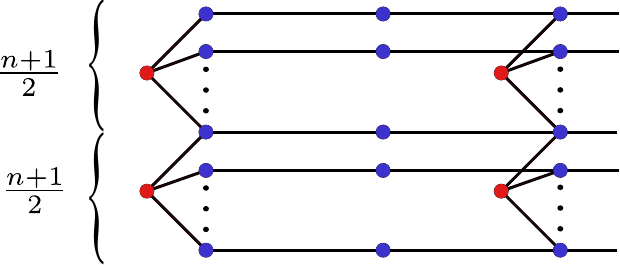} 
\caption{}
\label{fig:universal_graph_odd}
\end{subfigure} 
\caption{Since parity quantum computing can be viewed as a specific type of measurement-based quantum computation (MBQC), the sufficient conditions from \Cref{thm:n_geq_2_main_text} in the main text allow us to derive different families of universal resources for MBQC. In this figure, graph states are depicted relating to the minimal possible generating sets for parity quantum computing, treating the two cases of an even number of base qubits (figure (a)) and an odd number of qubits (figure (b)) separately. In each case, vertices correspond to qubits prepared in the $\ket{+}$ state and edges correspond to CZ gates between them. In figure (a), $n$ different linear cluster states (with blue vertices) are connected via the red vertices which correspond to the parity qubits under the mapping from parity quantum computing to MBQC. In figure (b), the $n$ different linear clusters states are connected via two lots of red vertices each of which connect to $\frac{n+1}{2}$ linear cluster states, with one in common. In both cases, universal MBQC is performed by measuring each red vertex in the $\textrm{YZ}$-plane of the Bloch-sphere and all blue vertices in the $\textrm{XY}$-plane of the Bloch sphere. The measurement order is depicted by the left-to-right ordering of vertices.}
\label{fig:universal_graphs}
\end{figure*}

As mentioned in \Cref{sec:parity}, one method to perform the decoding phase of a parity computation via single-qubit Pauli X-measurements on each of the parity qubits, followed by conditional corrections based on whether a positive (no correction required) or negative (correction required) outcome is obtained \cite{messinger_23}. In Ref.~\cite{smith_24}, it was demonstrated that, in such a case, parity quantum computation corresponds to measurement-based quantum computation (MBQC) using specific classes of bipartite graph states and measurements in the $\textrm{YZ}$-plane of the Bloch sphere. Accordingly, a further consequence of the results presented here is that we can define novel classes of universal resources for MBQC to complement those already known in the literature (see e.g., \cite{Nest_06}), which we elaborate upon below.

Formally, a universal resource for MBQC is a family of states, which we denote by $\Psi$, such that for any state $\ket{\gamma}$ on $n$ qubits there exists a state $\ket{\varphi} \in \Psi$ on $m \geq n$ qubits such that $\ket{\gamma}$ can be obtained deterministically from $\ket{\varphi}$ via local operations and classical communication \cite{Nest_06}. In this context, the local operations and classical communication (LOCC) are taken to be the single-qubit projective measurements and classical feed-forwarding of measurement results upon which MBQC is based (a brief introduction to MBQC and the requirements for deterministic computation in that framework is given in \Cref{app:mbqc}).

So, since our aim is to construct a universal resource for MBQC related to the generating sets presented above, we also need to consider different generating sets $\mathcal{G}_{\text{parity}}$ for different numbers of qubits $n$. As \Cref{thm:n_geq_2_main_text} identifies various different choices of $\mathbb{P}$ and hence different $\mathcal{G}_{\text{parity}}$, there are several different ways one can construct such universal resources; to simplify the discourse, we focus on the choices for $\mathbb{P}$ containing the minimal number of elements. Even with this restriction, there are typically various choices for how to choose $\mathbb{P}$ when $n$ is odd (i.e. for $n > 3$) - in such a case, we make the further choice to take $\mathbb{P}$ to consist of $S_{1}$ and $S_{2}$ such that $|S_{1}| = |S_{2}| = \frac{n+1}{2}$. Let us then define the following notation that makes the $n$ dependence explicit: let
\begin{align}
\mathbb{P}_{*}^{(n)} := \begin{cases} \big\{\{1,2,...,n\}\big\}, &\text{ if } n \bmod 2 = 0 \\
\big\{\{1,2,...,\frac{n+1}{2}\}, \{\frac{n+1}{2},...,n\} \big\}, &\text{ if } n \bmod 2 = 1
\end{cases}
\end{align}
and then let us define $\mathcal{G}_{\text{parity}*}^{(n)} := \mathcal{G}_{\text{s.q.}}^{(n)} \cup \mathcal{G}_{\mathbb{P}_{*}^{(n)}}$, where $\mathcal{G}_{\text{s.q.}}^{(n)} = \{iX_{1}, iZ_{1},...,iX_{n},iZ_{n}\}$ is the same as in \Cref{eq:gen_set_sq} but with the extra superscript allowing reference to different numbers of qubits when necessary. Note also that $\mathbb{P}_{*}^{(n)}$ is the same as that considered in \Cref{subsec:const_dep}, with $j = \frac{n+1}{2}$.

From the results of \Cref{thm:n_geq_2_main_text}, we know that $\mathcal{G}_{\text{parity}}^{(n)}$ is a universal generating set for each $n$. Accordingly, for any given $n$-qubit state $\ket{\gamma}$, there exists a sequence of rotations of elements of $\mathcal{G}_{\text{parity}}^{(n)}$ that produces a unitary $U_{\gamma}$ taking $\ket{+}^{\otimes n}$ to $\ket{\gamma}$. To construct a universal resource for MBQC, it suffices to produce a family of graph states that can implement $U_{\gamma}$ for any $n$ and any $\ket{\gamma}$. 

In fact, we can proceed by combining the graph states that support computations corresponding to $\mathcal{G}_{\text{s.q.}}^{(n)}$ with those that support computations corresponding to $\mathcal{G}_{\mathbb{P}_{*}^{(n)}}$. Explicitly, since the rotations of elements of $\mathcal{G}_{\text{s.q.}}^{(n)}$ are tensor products of single-qubit rotations, the corresponding graph states are tensor products of linear cluster states which are known to support single-qubit rotations via measurements in the $\textrm{XY}$-plane of the Bloch sphere (see e.g., Fig $2$ of \cite{Raussendorf_03}). In particular, to implement a $Z$-rotation followed by an $X$-rotation, it suffices to use a three-qubit cluster state:
\begin{align}
\bra{+_{\phi_{1}}\!+_{\tau_{2}}}CZ_{1,2}CZ_{2,3} \ket{+\!+\!+}_{123} = e^{-i\tau_{2}X}e^{-i\phi_{1}Z}\ket{+}_{3}
\end{align}
where $\bra{+_{\phi_{1}}}$ (resp. $\bra{+_{\tau_{2}}}$) denotes the positive projector for the measurement of the operator $e^{-i\phi_{1} Z}Xe^{i\phi_{1} Z}$ applied to the first qubit (resp. $e^{-i\tau_{2} Z}Xe^{i\tau_{2} Z}$ applied to the second qubit) and where the output of the computation is given as the state of qubit $3$. The tensor products of linear clusters states supporting sequences of rotations of elements from $\mathcal{G}_{\text{s.q.}}^{(n)}$ are depicted in \Cref{fig:universal_graphs} (the sub-graphs containing blue vertices only).

As identified in Ref.~\cite{smith_24}, the graph states supporting rotations of elements of $\mathcal{G}_{\mathbb{P}}$ are bipartite graph states with the qubits related to parity sets forming one partition and the $n$ qubit associated to the logical state forming the other. For the choice $\mathbb{P}_{*}^{(n)}$ that we consider here, that means that the corresponding graph states consist of $n+1$ qubits if $n$ is even and $n+2$ if $n$ is odd. Rotations of elements of $\mathcal{G}_{\mathbb{P}_{*}^{(n)}}$ are then implemented by performing measurements in the $\textrm{YZ}$-plane of the Bloch sphere on the corresponding parity qubits; for example, for $n$ even (i.e. $\mathbb{P}_{*}^{(n)} = \{S_{1}\}$) we have that 
\begin{align}
\bra{0_{\theta_{S_{1}}}} \prod_{j \in S_{1}} CZ_{S_{1},j} &\ket{+}_{S_{1}}\ket{+\dots+}_{1...n} \nonumber \\
&= e^{-i\theta_{S_{1}} Z_{1} \otimes \dots \otimes Z_{n}}\ket{+\dots+}_{1...n}
\end{align}
where $\bra{0_{\theta_{S_{1}}}}$ is the positive projector associated with the operator $e^{-i\theta_{S_{1}} X}Ze^{i\theta_{S_{1}} X}$ applied to the parity qubit and where the computational output is a state on the $n$-qubits making up the partition corresponding to the base (i.e. non-parity) qubits. These bipartite graph states are depicted as sub-graphs of the graphs in \Cref{fig:universal_graphs} comprising the red vertices and their nearest neighbors.

The universal resource is then constructed by combining the linear cluster states and bipartite graph states together in an appropriate way. Specifically, consider $n$ linear clusters states of length $2l$ arranged as in \Cref{fig:universal_graphs}. Then, we identify the $n$ qubits in every second column with the $n$ qubits that form the non-parity partition of a copy of the bipartite graph states above, producing the graph states shown in \Cref{fig:universal_graph_even} for the case of $n$ even and in \Cref{fig:universal_graph_odd} for the case of $n$ odd. Let us denote the resultant graph states as $\ket{\textrm{G}_{n,l}}$ and let $\Psi_{\text{parity}} := \{\ket{G_{n,l}} | n,l \in \mathbb{N} \}$. 

That the family $\Psi_{\text{parity}}$ is a universal resource follows from the fact that $\mathcal{G}_{\text{parity}*}^{(n)}$ is a universal generating set for all $n$. By measuring each red vertex in the $\textrm{YZ}$-plane and each blue qubit in the $\textrm{XY}$-plane (and performing the appropriate corrections when necessary - see \Cref{app:mbqc}), the final output state is
\begin{align}
\underbrace{\prod_{a=1}^{l} \prod_{b=1}^{n}e^{-i\tau_{a,b}X_{b}}e^{-i\phi_{a,b}Z_{b}} \prod_{S_{c} \in \mathbb{P}_{*}^{(n)}} e^{-i\theta_{a,S_{c}} \bigotimes_{d \in S_{c}} Z_{d}}}_{U_{\bm{\theta},\bm{\phi},\bm{\tau}}} \ket{+}^{\otimes n}
\end{align}
where the angles $\theta$ are given by the measurements $e^{-i\theta X}Ze^{i\theta X}$ on the red vertices, the angles $\tau$ are given by the measurements $e^{-i\tau Z}Xe^{i\tau Z}$ on the blue vertices neighboring the red vertices, and the angles $\phi$ are given by the measurements $e^{-i\phi Z}Xe^{i\phi Z}$ on the non-neighbors of the red vertices. The universality of $\mathcal{G}_{\text{parity}*}^{(n)}$ ensures that, for large enough $l$ and appropriately chosen $\theta, \tau$ and $\phi$, we get $U_{\bm{\theta},\bm{\phi},\bm{\tau}} = U_{\gamma}$ meaning that the output state is indeed $\ket{\gamma}$ as required.

As mentioned earlier, the different choices of $\mathbb{P}$ that satisfy \Cref{thm:n_geq_2_main_text} all lead to universal resources for MBQC via an analogous construction to that presented above. To conclude this section, let us briefly comment on one other choice of $\mathbb{P}$, namely $\mathbb{P}_{\text{pairs}}$ (recall \Cref{eq:P_pairs}), which contains the subset $\{ \{i, i+1\}| i = 1, ..., n-1 \}$ that satisfies \Cref{thm:n_geq_2_main_text}. If we construct graph states corresponding to $\mathcal{G}_{\text{parity}}$ for $\mathbb{P}_{\text{pairs}}$, we obtain graph states that have previously been identified as well-suited for implementing the Quantum Approximate Optimization Algorithm (QAOA) (see Ref.~\cite{farhi2014quantum} for the original QAOA and e.g., Ref.~\cite{Proietti_22} for the QAOA-suitable graph states). This is consistent with the focus on QAOA present in several works using the parity quantum computing framework (see e.g., \cite{lechner_20,weidinger2024performance}).

\section{Discussion} \label{sec:disc}

Much of the focus in quantum computation research in the near-term will center around optimizing implementations of certain algorithms to minimize the total resource requirements while satisfying hardware constraints on connectivity. The generating sets presented here establish both the limits of how far the total qubit count can be minimized for the parity quantum computing framework as well as further avenues for implementing connectivity-aware universal computation. 

To conclude this work, we briefly outline further implications of our results that have not been covered above and point out possible directions for future research. We comment on: (i) the implications for compilation strategies for IBM Quantum architectures, (ii) the implications for noise-mitigation strategies for the parity framework that derive from purification protocols in the context of MBQC, and (iii) the possible connections to other areas of quantum information theory, such as quantum cellular automata.

The implications for compilation for IBM Quantum computers arise from the results presented above for Pauli flow (recall \Cref{fig:parity_flow_brickwork}) in conjunction with the compilation algorithm presented in Ref.~\cite{opt_gen_arxiv}. In the latter work, the algorithm \textsc{PauliCompiler} was developed which, given a generating set $\mathcal{G} \subset i\mathcal{P}_{n}^{*}$ and a target Pauli string $P$ as input, outputs a sequence of elements from $\mathcal{G}$ that produce $P$ via nested commutation. In other words, \textsc{PauliCompiler} specifies the unitary circuit using gates specified by $\mathcal{G}$ whose logical effect is to implement the rotation $e^{-i\theta P}$. As the generating sets considered in this work are also strictly contained in $i\mathcal{P}_{n}^{*}$, they can be used with the \textsc{PauliCompiler} (possibly with some pre-processing). In particular, using the \textsc{PauliCompiler} with the generating sets permitting parity flow computations on the heavy-hex lattice may lead to efficient compilation strategies for IBM Quantum devices which typically use such a heavy-hex layout \cite{IBMQ}.

As discussed in \Cref{sec:MBQC_res} above, there are connections between the parity quantum computing framework and MBQC. One consequence of this is that purification protocols developed for graphs states in the latter context (see e.g., \cite{Dur_03,Aschauer_05}) can be readily adapted to the former, as a method of mitigating the effects of noisy implementations of gates in current devices. The properties of these purification protocols, such as their success probability and maximal reachable fidelity, depend on the properties of the graph states being purified. Consequently, the different possible implementations of universal parity quantum computing corresponding to the different generating sets considered here (and hence also to their corresponding graph states) would exhibit different levels of noise mitigation. However, as demonstrated in Ref.~\cite{opt_gen_arxiv}, different Pauli-string generating sets also exhibit compilation rates for a given unitary, which indicates a possible compilation rate vs. noise mitigation rate trade-off for different parity quantum computations. This trade-off is currently a topic of active research.

Finally, there are connections to other areas of quantum computation and information that are also worth highlighting. As demonstrated in \cite{Hendrik_24}, there are connections between MBQC and Clifford quantum cellular automata (CQCA). In the latter context, a Clifford operation satisfying certain properties (translation invariance and locality-preservation - see Ref.~\cite{Hendrik_24}) is repeatedly applied, interspersed with single-qubit rotations. Since the parity flow framework similarly consists of repeatedly applying a unitary comprised of a sequence of CNOT gates and single-qubit rotations, and moreover since connections exist between the parity framework and MBQC, it would be an interesting endeavor to clarify the similarities and differences between the parity flow and the CQCA pictures.

\begin{acknowledgments} We thank Maxime Cautrès for early discussions regarding a question related to Theorem $1$ and Michael Fellner, Anette Messinger and Christophe Goeller for discussions regarding parity quantum computing. This project was funded in whole or in part by the Austrian Science Fund (FWF) [DK-ALM W$1259$-N$27$, SFB BeyondC F7102, SFB BeyondC F7108-N38, WIT9503323, START grant No. Y1067-N27 and I 6011]. For open access purposes, the authors have applied a CC BY public copyright license to any author-accepted manuscript version arising from this submission. This work was supported by the Austrian Research Promotion Agency (FFG Project No. FO999896208).  This project was funded within the QuantERA II Programme that has received funding from the European Union’s Horizon 2020 research and innovation programme under Grant Agreement No. 101017733. This work was also co-funded by the European Union (ERC, QuantAI, Project No. $101055129$). Views and opinions expressed are however those of the author(s) only and do not necessarily reflect those of the European Union or the European Research Council. Neither the European Union nor the granting authority can be held responsible for them.
\end{acknowledgments}

\bibliography{min_PQC}

\appendix
\onecolumngrid

\section{Mapping Pauli Strings to Symplectic Vectors} \label{app:mapping}

In the main text, several of the results made use of specific properties of the set of Pauli strings and their relation to the Lie algebra $\mathfrak{su}(2^{n})$. In fact, these properties allow us to map any discourse regarding nested commutation of Pauli strings to a different setting, namely that of the symplectic space $\mathbb{F}_{2}^{2n}$. One advantage of doing so is that, in many cases, the questions at hand reduces to a question of linear algebra, which can be simpler to work with. In this appendix, we provide an abridged presentation of this mapping compared to that given in the supplementary material of \cite{opt_gen_arxiv}. The reader is referred to that work for further details.

Recall that $\mathcal{P}_{n} := \{ P_{1} \otimes \dots P_{n} | P_{i} \in \{I,X,Y,Z\}\}$, that $\mathcal{P}_{n}^{*} := \mathcal{P}_{n} \setminus \{I^{\otimes n}\}$ and that $\mathfrak{su}(2^{n}) = \textrm{span}_{\mathbb{R}}\{i\mathcal{P}_{n}^{*}\}$. Recall also that, for any $A,B \in i\mathcal{P}_{n}^{*}$, the commutator $[A,B]$ is either the zero operator $\bm{0}$ if $A$ and $B$ commute, or is proportional, up to some real scalar, to some element of $i\mathcal{P}_{n}^{*}$ otherwise. However, for the purposes of demonstrating universality, the proportionality up to a real scalar can safely be ignored since ultimately we are taking linear combinations over $\mathbb{R}$ in any case. Consequently, all the relevant information regarding $[A,B]$ can be encoded in a binary value: $0$ if they commute and $1$ otherwise. This is a key aspect of the mapping. 

The next observation we need is that the single-qubit Pauli operators satisfy $Y = iXZ$. Accordingly, we can rewrite any element $A \in i\mathcal{P}_{n}^{*}$ as
\begin{align}
A = i^{y_{A}+1} \left(\prod_{j=1}^{n} X_{j}^{a_{j}}\right) \left(\prod_{j=1}^{n} Z_{j}^{a_{n+j}} \right) \label{eq:pre_mapping_form}
\end{align}
where: (i) $y_{A} \in \mathbb{N}_{0}$ is the number of tensor factors of $A$ that are a $Y$, and (ii) $\bm{a} \in \mathbb{F}_{2}^{2n}$ is a binary vector such that $a_{j} = 1$ and $a_{n+j} = 0$ if the $j$th tensor factor of $A$ is a $X$, $a_{j} = 0$ and $a_{n+j} = 1$ if the $j$th tensor factor of $A$ is a $Z$, and $a_{j} = a_{n+j} = 1$  if the $j$th tensor factor of $A$ is a $Y$. This is the mapping that we consider: each element $A \in i\mathcal{P}_{n}^{*}$ is mapped to a binary vector $\bm{a} \in \mathbb{F}_{2}^{2n}$. This is a common mapping used in the context of, e.g., stabilizer quantum mechanics, quantum error correction, and classically simulating Clifford circuits efficiently \cite{nielsen2010quantum,gottesman1997stabilizer,aaronson_04,Anders_06}.

So, we have a mapping from $i\mathcal{P}_{n}^{*}$ to $\mathbb{F}_{2}^{2n}$, but so far the latter simply has the structure of a vector space rather than of a symplectic space as promised above. Let us recall that a symplectic vector space is a vector space $V$ over a field $\mathbb{F}$ equipped with a symplectic bilinear form $\Lambda : V \times V \rightarrow \mathbb{F}$. A symplectic bilinear form is a mapping which is (i) linear in each of its two arguments, (ii) satisfies $\Lambda(\bm{v},\bm{v}) = 0$ for all $\bm{v} \in V$, and (iii) if $\Lambda(\bm{v},\bm{w}) = 0$ for all $\bm{w} \in V$, then $\bm{v} = \bm{0}$. In the present context, the symplectic form encodes the commutation relation between elements of $i\mathcal{P}_{n}^{*}$. For example, for $A,B \in i\mathcal{P}_{n}^{*}$, we can use the notational convention in \Cref{eq:pre_mapping_form} to write
\begin{align}
[A,B] &= i^{y_{A} + y_{B} + 2}\left[\left(\prod_{j=1}^{n} X_{j}^{a_{j}}\right)\!\!\left(\prod_{j=1}^{n}Z_{j}^{a_{n+j}}\right)\!\!\left(\prod_{j=1}^{n} X_{j}^{b_{j}}\right)\!\!\left(\prod_{j=1}^{n}Z_{j}^{b_{n+j}}\right) - \left(\prod_{j=1}^{n} X_{j}^{b_{j}}\right)\!\!\left(\prod_{j=1}^{n}Z_{j}^{b_{n+j}}\right)\!\!\left(\prod_{j=1}^{n} X_{j}^{a_{j}}\right)\!\!\left(\prod_{j=1}^{n}Z_{j}^{a_{n+j}}\right) \right] \\
&= i^{y_{A} + y_{B} + 2}\left[(-1)^{\sum_{j=1}^{n}  a_{n+j}b_{j}} -  (-1)^{\sum_{j=1}^{n}a_{j}b_{n+j}} \right] \left(\prod_{j=1}^{n} X_{j}^{a_{j} + b_{j}}\right)\!\!\left(\prod_{j=1}^{n}Z_{j}^{a_{n+j} + b_{n+j}}\right).
\end{align}
The term in the square brackets in the final line encodes all the information about the commutation relation between $A$ and $B$: if $A,B$ commute, then $\sum_{j = 1}^{n} a_{j}b_{n+j} = \sum_{j=1}^{n}a_{n+j}b_{j} \bmod{2}$, which can be equivalently written as $\sum_{j = 1}^{n} a_{j}b_{n+j} + a_{n+j}b_{j} = 0\bmod 2$, otherwise $\sum_{j = 1}^{n} a_{j}b_{n+j} + a_{n+j}b_{j} = 1\bmod 2$. This expression can be written succinctly using the vectors $\bm{a},\bm{b}$ as 
\begin{align}
\bm{a}^{\top}\begin{bmatrix} \bm{0} & I_{n} \\ I_{n} & \bm{0} \end{bmatrix}\bm{b}
\end{align}
where in $\bm{0}$ in the matrix denotes an $n \times n$ block consisting entirely of $0$s while $I_{n}$ denotes a block containing the $n\times n$ identity matrix. This matrix is precisely how we define the relevant symplectic bilinear form for $\mathbb{F}_{2}^{2n}$: for all $\bm{a}, \bm{b} \in \mathbb{F}_{2}^{2n}$, we define
\begin{align}
\Lambda(\bm{a}, \bm{b}) := \bm{a}^{\top}\begin{bmatrix} \bm{0} & I_{n} \\ I_{n} & \bm{0} \end{bmatrix}\bm{b}.
\end{align}
In the following, we will denote the matrix also by $\Lambda$.

Let us take stock of what we have established so far with the mapping between $i\mathcal{P}_{n}^{*}$ and the symplectic space $\mathbb{F}_{2}^{2n}$. Every element $A \in i\mathcal{P}_{n}^{*}$ is mapped to a unique element of $\mathbb{F}_{2}^{2n}$. Note that no element of $i\mathcal{P}_{n}^{*}$ is mapped to the zero element $\bm{0} \in \mathbb{F}_{2}^{2n}$ (this is convenient for considering commuting operators as we will see in a moment). Moreover, every other element of $\mathbb{F}_{2}^{2n}$ is uniquely associated to an element of $i\mathcal{P}_{n}^{*}$ and vice versa. Consequently, the task of generating all of $i\mathcal{P}_{n}^{*}$ as stated in the main text, reduces to generating all of $\mathbb{F}_{2}^{2n}$ under the mapping. For any $A, B \in i\mathcal{P}_{n}^{*}$ with corresponding vectors $\bm{a}, \bm{b} \in \mathbb{F}_{2}^{2n}$, the operator $[A,B]$ is mapped to $\bm{a}^{\top}\Lambda \bm{b}(\bm{a} + \bm{b}) \in \mathbb{F}_{2}^{2n}$. That is, if $A,B$ commute, they get mapped to $\bm{0} \in \mathbb{F}_{2}^{2n}$ since $\bm{a}^{\top}\Lambda \bm{b} = 0$, otherwise they get mapped to $\bm{a} + \bm{b}$ (note that the vector addition is element-wise and binary). This indicates the utility of this mapping: commutation of operators in the setting of $i\mathcal{P}_{n}^{*}$ corresponds to simple vector addition in $\mathbb{F}_{2}^{2n}$.

To conclude this section, let us establish some notation for later use. Below, we will often refer to a symplectic basis of $\mathbb{F}_{2}^{2n}$. A symplectic basis of $\mathbb{F}_{2}^{2n}$ is a basis $\{\bm{v}_{1}, ..., \bm{v}_{2n}\}$ such that for each $\bm{v}_{i}$ there is precisely one $\bm{v}_{j}$ such that $\bm{v}_{i}^{\top}\Lambda \bm{v}_{j} = 1$. We will write any symplectic basis to which we refer using the suggestive notation $\{\bm{x}_{1}, ..., \bm{x}_{n}, \bm{z}_{1}, ..., \bm{z}_{n} \}$ where
\begin{equation}\label{eq:symplectic_basis_relations}
\begin{gathered} 
\bm{x}_{i}^{\top}\Lambda\bm{z}_{j} = \delta_{ij}, \\ \bm{x}_{i}^{\top}\Lambda \bm{x}_{j} = \bm{z}_{i}^{\top}\Lambda\bm{z}_{j} = 0, \forall i,j.
\end{gathered}
\end{equation}
This notation is suggestive as these constraints mirror the commutation relations of the single-qubit operators in $\mathcal{G}_{\text{s.q.}}$, but in general we do not enforce that the $\bm{x}_{j} \in \mathbb{F}_{2}^{2n}$ are the images of the $iX_{j} \in \mathcal{G}_{\text{s.q.}} \subset i\mathcal{P}_{n}^{*}$ and similarly for the $\bm{z}_{j}$.\footnote{Note that any symplectic basis of $\mathbb{F}_{2}^{2n}$ can be mapped to the set of elements that \textit{are} the images of $\mathcal{G}_{\text{s.q.}}$ by a symplectic transform. This corresponds to a Clifford operation at the level of $i\mathcal{P}_{n}^{*}$.} It will be convenient below to make use of the following notation, where $S \subseteq \{1,...,n\}$:
\begin{equation} \label{eq:super_S}
\begin{aligned}
\bm{z}^{S} &:= \sum_{j \in S} \bm{z}_{j},\\
\bm{x}^{S} &:= \sum_{j \in S} \bm{x}_{j}.
\end{aligned}
\end{equation}
This notation is also suggestive, since, if the $\bm{z}_{j}$ are the images of the $iZ_{j} \in \mathcal{G}_{\text{s.q.}}$ then $\bm{z}^{S}$ is the image of $i\bigotimes_{j \in S} Z_{j} \in \mathcal{G}_{\mathbb{P}}$. If $S = \emptyset$, then we define $\bm{x}^{S} = \bm{z}^{S} = \bm{0} \in \mathbb{F}_{2}^{2n}$.

Let $\{\bm{x}_{1},...,\bm{x}_{n}, \bm{z}_{1},...,\bm{z}_{n} \} \subset \mathbb{F}_{2}^{2n}$ be a fixed symplectic basis. We can write any $\bm{v} \in \mathbb{F}_{2}^{2n}$ uniquely (with respect to this basis) as 
\begin{align}
\bm{v} = \sum_{i = 1}^{n} \alpha_{i}\bm{x}_{i} + \beta_{i}\bm{z}_{i} \label{eq:basis_exp}
\end{align}
where $\alpha_{i},\beta_{i} \in \mathbb{F}_{2}$ for all $i = 1, ..., n$. Let us define
\begin{align}
\mathcal{X}(\bm{v}) &:= \{i \in [n]| \alpha_{i} \neq 0 \}, \\ 
\mathcal{Z}(\bm{v}) &:= \{i \in [n]| \beta_{i} \neq 0 \}, \\
\textrm{wt}(\bm{v}) &:= |\mathcal{X}(\bm{v}) \cup \mathcal{Z}(\bm{v})|.
\end{align}
The latter quantity is the equivalent notion in the symplectic space picture of the usual definition of a weight of a Pauli string (i.e. number of non-identity tensor factors). 

In direct parallel to the adjoint map defined for general Lie algebras (recall \Cref{eq:ad_map_Lie}), let us define, for each $\bm{a} \in \mathbb{F}_{2}^{2n}$, the map $\ad_{\bm{a}}(\cdot)$ via
\begin{align}
\ad_{\bm{a}}(\bm{b}) := \bm{a}^{\top}\Lambda\bm{b}(\bm{a} + \bm{b}). 
\end{align}
When a set $\mathrm{G} \subset \mathbb{F}_{2}^{2n}$ is specified, we define (again, directly paralleling the notation in the main text) the quantity
\begin{align}
\mathrm{G}^{\ad^{(r)}} := \{\ad_{\bm{v}_{1}}\ad_{\bm{v}_{2}}\dots\ad_{\bm{v}_{r}}(\bm{v}_{r+1}) | \bm{v}_{1},...,\bm{v}_{r+1} \in \mathrm{G} \} \subseteq \mathbb{F}_{2}^{2n}
\end{align}
in order to be able to make statements about $\mathrm{G} \bigcup_{r=1}^{\infty} \mathrm{G}^{\ad^{(r)}}$.

As we will be considering sequences of adjoint maps, typically for elements from a choice of symplectic basis of $\mathbb{F}_{2}^{2n}$, it will be convenient to have compact notation for representing both sequence of elements as well as the corresponding compositions of adjoint maps. For $S \subseteq \{1, ..., n\}$ with elements $i_{1}, ..., i_{|S|}$, we define
\begin{align}
\bm{x}_{S} &:= \bm{x}_{i_{1}},\bm{x}_{i_{2}},...,\bm{x}_{i_{|S|}}, \\
\ad_{\bm{x}_{S}}(\cdot)& := \ad_{\bm{x}_{i_{1}}}\ad_{\bm{x}_{i_{2}}}\dots\ad_{\bm{x}_{i_{|S|}}}(\cdot),
\end{align}
and similarly for $\bm{z}_{S}$ and $\ad_{\bm{z}_{S}}(\cdot)$. Note the differences in notation to \Cref{eq:super_S} above: the set $S$ in the superscript indicates a sum of elements while the set $S$ in the subscript indicates a sequence.\footnote{We would like to highlight that the use of sets as sub- and superscripts appear in the notation in distinct ways in this work. For example, using a set $S$ as subscript on a lower case $\bm{z}$, representing a vector in $\mathbb{F}_{2}^{2n}$, is {\em not} the same as a subscript on the capital $Z$, representing a Pauli-$Z$ rotation on a parity qubit.} In the cases where the latter notation is used, the specific ordering of the elements of $S$ will not matter; the important factor is that each element appears in the sequence precisely once. 

These notational conventions provide a compact way of writing certain elements and adjoint sequences, which will be useful below. For example, for any $\bm{v} \in \mathbb{F}_{2}^{2n}$ written as in \Cref{eq:basis_exp} with respect to a given symplectic basis, we can write
\begin{align}
\bm{v} = \bm{x}^{\mathcal{X}(\bm{v})} + \bm{z}^{\mathcal{Z}(\bm{v})}.
\end{align}
Furthermore, if $S \subseteq T \subseteq \{1,...,n\}$, we can write
\begin{align}
\ad_{\bm{x}_{S}}(\bm{z}^{T}) \equiv \ad_{\bm{x}_{i_{1}}}...\ad_{\bm{x}_{i_{|S|}}}(\bm{z}^{T}) = \bm{x}^{S} + \bm{z}^{T}.
\end{align}

\section{Proof of \Cref{thm:n_geq_2_main_text}} \label{app:main_thm}

In this section we present the proof of \Cref{thm:n_geq_2_main_text} in the main text. We make use of the mapping and notation presented in \Cref{app:mapping}. In the statement of \Cref{thm:n_geq_2_main_text}, we start with the set $\mathcal{G}_{\text{s.q.}}$ and consider what sets $\mathcal{G}_{\mathbb{P}}$ ensure that $\mathcal{G}_{\text{parity}} := \mathcal{G}_{\text{s.q.}} \cup \mathcal{G}_{\mathbb{P}}$ are universal. Under the mapping outlined above, $i\mathcal{P}_{n}^{*}$ becomes $\mathbb{F}_{2}^{2n} \setminus \bm{0}$, $\mathcal{G}_{\text{s.q.}}$ becomes a symplectic basis of $\mathbb{F}_{2}^{2n}$, and $\mathcal{G}_{\mathbb{P}}$ becomes $\{\bm{z}^{S} | S \in \mathbb{P} \}$. With these changes, \Cref{thm:n_geq_2_main_text} can be stated equivalently as:

\begin{Theorem} \label{thm:n_geq_2} Let $n \in \mathbb{N}_{\geq 2}$ and let $\mathrm{G}' \subset \mathbb{F}_{2}^{2n}$ be a symplectic basis with elements $\{\bm{x}_{1}, ..., \bm{x}_{n}, \bm{z}_{1}, ..., \bm{z}_{n}\}$ which satisfy \Cref{eq:symplectic_basis_relations}. Let $S_{1}, ..., S_{k}$ for some $1 \leq k \leq n-1$ be such that 
\begin{enumerate}
    \item $|S_{i}|$ is even for all $i \in \{ 1, ..., k\}$,
    \item $\bigcup_{i=1}^{k} S_{i} = [n]$, 
    \item if $k \geq 2$, then 
    \begin{enumerate}
        \item $S_{i} \cap S_{j} = \emptyset$ for all $1 \leq j \leq k$ such that $j \neq i, i+1$, and
        \item $S_{i} \cap S_{i+1} = \{s_{i}\}$ for all $i \leq k-1$, with the $s_{i} \in [n]$ all distinct.
    \end{enumerate}
\end{enumerate} 
Then, for $\mathrm{G} := \mathrm{G}' \cup \{\bm{z}^{S_{i}}| i =1, ..., k\}$, we have that
\begin{align}
\mathrm{G} \bigcup_{r=1}^{\infty} \mathrm{G}^{\ad^{(r)}} = \mathbb{F}_{2}^{2n}.
\end{align}
\end{Theorem}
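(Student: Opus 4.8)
The plan is to work entirely in the symplectic picture of \Cref{app:mapping} and to exploit the fact that the adjoint map is almost trivial there: for any $\bm a, \bm b$ one has $\ad_{\bm a}(\bm b) = \bm a + \bm b$ when $\bm a^{\top}\Lambda \bm b = 1$ and $\ad_{\bm a}(\bm b) = \bm 0$ otherwise. Consequently every \emph{nonzero} nested adjoint $\ad_{\bm v_{1}}\cdots\ad_{\bm v_{r}}(\bm v_{r+1})$ of generators is simply the $\mathbb{F}_{2}$-sum $\bm v_{1}+\cdots+\bm v_{r+1}$, and the sequence is admissible precisely when, reading from the inside out, each newly applied generator anticommutes (with respect to $\Lambda$) with the running partial sum. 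Since $\mathrm{G}' = \{\bm x_{1},\dots,\bm x_{n},\bm z_{1},\dots,\bm z_{n}\}$ already spans $\mathbb{F}_{2}^{2n}$, every target is \emph{some} sum of generators; the whole content of the theorem is therefore that the ordering/anticommutation constraint never obstructs us. I would record at the outset the two elementary moves used repeatedly: the flip moves $\ad_{\bm x_{j}}(\bm v) = \bm x_{j}+\bm v$ (admissible iff $j \in \mathcal{Z}(\bm v)$) and $\ad_{\bm z_{j}}(\bm v) = \bm z_{j}+\bm v$ (admissible iff $j \in \mathcal{X}(\bm v)$), which freely change the single-qubit Pauli at position $j$ among $\{X,Y,Z\}$ while \emph{preserving} $\mathcal{X}(\bm v)\cup\mathcal{Z}(\bm v)$, together with the identity $\ad_{\bm x_{S}}(\bm z^{T}) = \bm x^{S}+\bm z^{T}$ for $S \subseteq T$.

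The first reduction I would make is that it suffices to reach, for \emph{every} $A \subseteq [n]$, at least one vector whose support $\mathcal{X}\cup\mathcal{Z}$ equals $A$. Indeed, the flip moves preserve support and can independently set each position of $A$ to any of $X,Y,Z$, so a single support-$A$ vector already generates every vector supported exactly on $A$; ranging over all $A$ then yields all of $\mathbb{F}_{2}^{2n}\setminus\{\bm 0\}$, and $\bm 0 = \ad_{\bm a}(\bm a)$ is trivial. Thus the theorem collapses to a support lemma (the statement playing the role of \Cref{lem:z_A_improved}): for each $A$, exhibit an admissible generator sequence summing to $\bm x^{A}$ or $\bm z^{A}$.

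For the single-set core (one even $S$ with $\bm z^{S}\in\mathrm{G}$, which is already the entire story when $n$ is even and $k=1$, $S=[n]$) I would split on the parity of $|A|$. When $|A|$ is odd I use the extraction sequence $\ad_{\bm z^{S}}\,\ad_{\bm x_{A}}(\bm z^{S}) = \bm x^{A}$: the inner flips build $\bm x^{A}+\bm z^{S}$ and the final $\ad_{\bm z^{S}}$ is admissible exactly because $(\bm z^{S})^{\top}\Lambda\,\bm x^{A} = |A|$ is odd. When $|A|$ is even the naive sequence fails at that last step, and here the even-size hypothesis does real work: I would instead run the three-$\bm z^{S}$ ``complement-clearing'' sequence (the symplectic form of \Cref{eq:n_even_minimal_supp_odd}), which uses one pivot in $A$ together with $X$- and $Z$-flips on all of $S\setminus A$ to first reduce to a string supported on $\{\text{pivot}\}\cup(S\setminus A)$ and then cancel $S\setminus A$. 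Every anticommutation bit along the way evaluates to $1$ precisely because $|S\setminus A| = |S|-|A|$ is even; I would verify this once, step by step. It is bookkeeping, but it is exactly where ``$|S|$ even'' is indispensable.

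Finally, for $k \ge 2$ I would extend the support lemma along the chain by induction on $i$, with hypothesis ``every $A \subseteq S_{1}\cup\cdots\cup S_{i}$ is reachable as a support.'' The inductive step glues $S_{i+1}$ on through the single shared vertex $s_{i}=S_{i}\cap S_{i+1}$: because the overlap is a single element and $|S_{i+1}|$ is even, one can activate an $X$ at $s_{i}$, route through $\bm z^{S_{i+1}}$ (whose anticommutation with the running sum is governed by the $X$-parity inside $S_{i+1}$), and then eliminate $s_{i}$ by a mismatch that is absorbed into the already-desired support on the $S_{i+1}$ side---the maneuver that makes a cross-support term with $s_{i}$ absent reachable. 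The distinctness of the $s_{i}$ together with the vanishing of all non-consecutive intersections guarantees that these bridges act independently, so clearing the overlap at one junction never reintroduces support at another. The main obstacle throughout is exactly this anticommutation bookkeeping in the even-support and cross-set cases: one must track which positions survive in the running sum and confirm that a parity is odd at \emph{every} step, and it is only the three structural hypotheses (even sizes, single-element overlaps, distinct bridges) that force all of these parities to come out right. Once the support lemma holds for $A \subseteq [n] = S_{1}\cup\cdots\cup S_{k}$, the reduction above completes the proof.
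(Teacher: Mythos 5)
Your plan is correct and follows essentially the same route as the paper's proof: a reduction via support-preserving flip moves (the paper's \Cref{lem:each_subset}, stated there for reaching any full-support vector from $\bm{z}^{T}$), a parity-split core lemma for a single even set with the three-$\bm{z}^{S}$ sequence handling the even-support case (the paper's \Cref{lem:z_A_improved}), and an induction along the chain that bridges consecutive sets through the distinguished elements $s_{i}$ (the paper carries the pair $\bm{z}^{T_{1:j}}$, $\bm{z}^{T_{1:j}\cup\{s_{j}\}}$ as its inductive hypothesis, which your ``every subset of $S_{1}\cup\dots\cup S_{i}$ is reachable'' subsumes). The anticommutation bookkeeping you defer is exactly the content of the paper's \Cref{lem:z_A_improved} and checks out as you describe.
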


Before proving the theorem, we state and prove two lemmas which are used in the proof of the theorem. The first allows us to reduce the task of showing that $\mathrm{G} \bigcup_{r=1}^{\infty} \mathrm{G}^{\ad^{(r)}} = \mathbb{F}_{2}^{2n}$ even further by making use of the properties of the symplectic basis $\mathrm{G}'$, while the second abstracts some of the structure present in the proof of the theorem.

\begin{Lemma} \label{lem:each_subset} Let $\mathrm{G}' = \{\bm{x}_{1}, ..., \bm{x}_{n}, \bm{z}_{1}, ..., \bm{z}_{n}\} \subseteq \mathbb{F}_{2}^{2n}$ be a symplectic basis satisfying \Cref{eq:symplectic_basis_relations} and let $T \subseteq \{1,...,n\}$. Then for any $\bm{w} \in \mathbb{F}_{2}^{2n}$ that can be written as
\begin{align}
\bm{w} = \sum_{i \in T} \alpha_{i} \bm{x}_{i} + \beta_{i} \bm{z}_{i}
\end{align}
for $\alpha_{i}, \beta_{i} \in \{0,1\}$ are not both zero for each $i \in T$,there exists a sequence of elements $\bm{u}_{1}, ..., \bm{u}_{r} \in \mathrm{G}'$ for some $r$ such that
\begin{align}
\ad_{\bm{u}_{1}}\dots\ad_{\bm{u}_{r}}(\bm{z}^{T}) = \bm{w}.
\end{align}
\end{Lemma}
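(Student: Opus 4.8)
The plan is to build $\bm{w}$ out of $\bm{z}^T$ one basis coordinate at a time, exploiting the fact that the adjoint action of a single symplectic basis element only ever toggles one coordinate. Writing a general vector as $\bm{v} = \sum_j a_j \bm{x}_j + b_j \bm{z}_j$, a direct computation from $\ad_{\bm{a}}(\bm{b}) = (\bm{a}^\top\Lambda\bm{b})(\bm{a}+\bm{b})$ together with the relations in \Cref{eq:symplectic_basis_relations} gives $\ad_{\bm{x}_i}(\bm{v}) = b_i(\bm{x}_i + \bm{v})$ and $\ad_{\bm{z}_i}(\bm{v}) = a_i(\bm{z}_i + \bm{v})$. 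In words: $\ad_{\bm{x}_i}$ flips the $\bm{x}_i$-coefficient of $\bm{v}$ provided its $\bm{z}_i$-coefficient equals $1$ (otherwise it annihilates $\bm{v}$), and $\ad_{\bm{z}_i}$ flips the $\bm{z}_i$-coefficient provided the $\bm{x}_i$-coefficient equals $1$. Crucially, $\ad_{\bm{x}_j}$ alters only $\bm{x}$-coefficients and $\ad_{\bm{z}_j}$ only $\bm{z}$-coefficients, so these maps act independently on distinct coordinates and the whole problem decouples across $T$.

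First I would partition $T$ according to the target coefficients $(\alpha_i,\beta_i)$ of $\bm{w}$ into $T_{01}, T_{11}, T_{10}$ (the case $(0,0)$ being excluded by hypothesis), noting $\mathcal{X}(\bm{w}) = T_{11}\cup T_{10}$ and $\mathcal{Z}(\bm{w}) = T_{01}\cup T_{11}$. Starting from $\bm{z}^T$, where every coordinate $i \in T$ reads $(a_i,b_i)=(0,1)$, the construction proceeds in two stages. In the first stage I apply $\ad_{\bm{x}_i}$ for every $i \in \mathcal{X}(\bm{w})$; since $b_i = 1$ persists throughout (no $\ad_{\bm{x}_j}$ touches a $\bm{z}$-coefficient), each map is active and the net effect is exactly $\ad_{\bm{x}_{\mathcal{X}(\bm{w})}}(\bm{z}^T) = \bm{x}^{\mathcal{X}(\bm{w})} + \bm{z}^T$, as recorded in \Cref{app:mapping}. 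In the second stage I apply $\ad_{\bm{z}_i}$ for every $i \in T_{10} = \mathcal{X}(\bm{w})\setminus\mathcal{Z}(\bm{w})$; here $a_i = 1$ holds after stage one and is preserved (no $\ad_{\bm{z}_j}$ touches an $\bm{x}$-coefficient), so each map is active and removes $\bm{z}_i$. This yields $\bm{x}^{\mathcal{X}(\bm{w})} + \bm{z}^{T\setminus T_{10}} = \bm{x}^{\mathcal{X}(\bm{w})} + \bm{z}^{\mathcal{Z}(\bm{w})} = \bm{w}$, so the concatenation of the two stages supplies the required sequence $\bm{u}_1,\dots,\bm{u}_r \in \mathrm{G}'$.

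The one genuine subtlety, and the step I would be most careful to justify, is ensuring that no intermediate adjoint annihilates the running vector, since a single vanishing commutator would collapse the entire composition to $\bm{0}$. This is precisely why the ordering of the two stages matters: each $\ad_{\bm{x}_i}$ is enabled by the $\bm{z}_i$ inherited from $\bm{z}^T$, while each $\ad_{\bm{z}_i}$ is enabled by the $\bm{x}_i$ freshly created in stage one, so attempting to delete a $\bm{z}_i$ before its matching $\bm{x}_i$ exists would fail. Because the two stages modify disjoint coordinate types, the enabling conditions are maintained regardless of the order \emph{within} each stage, which renders the verification routine once the decoupling observation is in hand. The degenerate case $\bm{w} = \bm{z}^T$ corresponds to $\mathcal{X}(\bm{w}) = \emptyset$ and is handled by the empty sequence.
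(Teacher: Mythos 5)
Your proposal is correct and takes essentially the same route as the paper: apply $\ad_{\bm{x}_{\mathcal{X}(\bm{w})}}$ to $\bm{z}^{T}$ first and then $\ad_{\bm{z}_{i}}$ for $i \in T\setminus\mathcal{Z}(\bm{w})$ (which equals your $T_{10}$ since $T = \mathcal{X}(\bm{w})\cup\mathcal{Z}(\bm{w})$ by hypothesis), yielding $\bm{x}^{\mathcal{X}(\bm{w})} + \bm{z}^{\mathcal{Z}(\bm{w})} = \bm{w}$. Your explicit check that no intermediate commutator vanishes is a welcome addition the paper leaves implicit.
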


\begin{proof} By assumption, $\bm{w}$ is such that $\mathcal{X}(\bm{w}), \mathcal{Z}(\bm{w}) \subseteq T$ and moreover that $T \setminus \mathcal{Z}(\bm{w}) \subseteq \mathcal{X}(\bm{w})$ (the latter constraint comes from the fact that $\alpha_{i}$ and $\beta_{i}$ cannot both be $0$). Let $\bm{u}_{1}, ..., \bm{u}_{r}$ be the sequence $\bm{z}_{T \setminus \mathcal{Z}(\bm{w})},\bm{x}_{\mathcal{X}(\bm{w})}$. Using the \Cref{eq:symplectic_basis_relations}, we then have
\begin{align}
\ad_{\bm{u}_{1}}\dots\ad_{\bm{u}_{r}}(\bm{z}^{T}) &= \ad_{\bm{z}_{T \setminus \mathcal{Z}(\bm{w})}}\ad_{\bm{x}_{\mathcal{X}(\bm{w})}}(\bm{z}^{T}) \\
&= \ad_{\bm{z}_{T \setminus \mathcal{Z}(\bm{w})}}(\bm{x}^{\mathcal{X}(\bm{w})} + \bm{z}^{T}) \\
&= \bm{x}^{\mathcal{X}(\bm{w})} + \bm{z}^{T} + \bm{z}^{T \setminus \mathcal{Z}(\bm{w})} \\
&= \bm{x}^{\mathcal{X}(\bm{w})} + \bm{z}^{\mathcal{Z}(\bm{w})} \\
&= \bm{w}
\end{align}
as required.
\end{proof}

\begin{Lemma} \label{lem:z_A_improved} Let $A,B,C \subseteq [n]$ be such that $A,B \neq \emptyset$, $A \subseteq B$, $|B| = 0 \bmod 2$, and $|B \cap C| = 1$ if $C \neq \emptyset$ ($C$ is allowed to be empty, while $A$ and $B$ are not). Let $\overline{A} := B \setminus A$ and, if $A \neq \emptyset$ and $|A| = 0 \bmod 2$, let $a \in A$ be a distinguished element such that $a \notin A \cap C$ and define $\widetilde{A} := \overline{A} \cup \{a\}$. Let us define the sequence
\begin{align}
\bm{p}_{1},...,\bm{p}_{l} := \begin{cases}
\bm{x}_{A},\bm{z}_{A \setminus (B \cap C)},\bm{z}^{B}, \bm{x}_{A\setminus(B \cap C)},\bm{z}^{B},\bm{x}_{B \cap C},\bm{z}^{C} & \text{ if } A \cap C \neq \emptyset, |A| = 1 \bmod 2, \\
\bm{x}_{A},\bm{z}_{A},\bm{z}^{B},\bm{x}_{A\cup (B \cap C)},\bm{z}_{B \cap C},\bm{z}^{B},\bm{x}_{B \cap C},\bm{z}^{C} & \text{ if } A \cap C = \emptyset, |A| = 1 \bmod 2, \\
\bm{x}_{a},\bm{z}^{B},\bm{x}_{\overline{A}},\bm{z}_{\overline{A}},\bm{z}^{B},\bm{x}_{\widetilde{A} \cup (B \cap C)},\bm{z}_{B \cap C},\bm{z}^{B},\bm{x}_{B \cap C},\bm{z}^{C} & \text{ if } A \cap C \neq \emptyset, |A| = 0 \bmod 2, \\
\bm{x}_{a},\bm{z}^{B},\bm{x}_{\overline{A}},\bm{z}_{\overline{A} \setminus (B \cap C)}, \bm{z}^{B}, \bm{x}_{\widetilde{A} \setminus (B \cap C)},\bm{z}^{B}, \bm{x}_{B \cap C},\bm{z}^{C} & \text{ if } A \cap C = \emptyset, |A| = 0 \bmod 2, A \neq \emptyset.
\end{cases} \label{eq:p_sequence}
\end{align}
where any element with a subscript or superscript that is an empty set is simply removed from the sequence. Then
\begin{align}
\ad_{\bm{p}_{1}}...\ad_{\bm{p}_{l-1}}(\bm{p}_{l}) = \bm{z}^{A \cup (C \setminus (B \cap C))}.
\end{align}
\end{Lemma}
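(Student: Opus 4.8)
The plan is to exploit the fact that over $\mathbb{F}_2$ the adjoint map is almost trivial: for $\bm{a},\bm{b}\in\mathbb{F}_2^{2n}$ one has $\ad_{\bm{a}}(\bm{b})=\bm{a}+\bm{b}$ when $\bm{a}^\top\Lambda\bm{b}=1$ and $\ad_{\bm{a}}(\bm{b})=\bm{0}$ when $\bm{a}^\top\Lambda\bm{b}=0$. Consequently, once every block $\bm{x}_S,\bm{z}_S$ is expanded into its constituent single generators, the chain $\ad_{\bm{p}_1}\cdots\ad_{\bm{p}_{l-1}}(\bm{p}_l)$ is nonzero if and only if the symplectic product of each applied vector with the current running vector equals $1$; and when this holds, each application merely adds the applied vector, so the output telescopes to the XOR-sum $\sum_{j}\bm{p}_j$ of all vectors in the sequence. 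This splits the task into two independent pieces: (i) a purely set-theoretic computation showing that $\sum_j\bm{p}_j=\bm{z}^{A\cup(C\setminus(B\cap C))}$ in each of the four cases, and (ii) verification that every symplectic product encountered along the chain is indeed $1$. For (ii) I would track only the supports $\mathcal{X}(\bm{v})$ and $\mathcal{Z}(\bm{v})$ of the running vector, using the elementary identities that $\ad_{\bm{x}_i}$ and $\ad_{\bm{z}_i}$ have product $[i\in\mathcal{Z}(\bm{v})]$ and $[i\in\mathcal{X}(\bm{v})]$ respectively, while $\ad_{\bm{z}^B}$ has product $|B\cap\mathcal{X}(\bm{v})|\bmod 2$.

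For part (i), the key observations are that the $\bm{z}^B$ terms occur twice (and hence cancel) in the odd-$|A|$ cases but three times (leaving one copy) in the even-$|A|$ cases, and that the remaining $\bm{x}$- and $\bm{z}$-blocks recombine in pairs. In the even-$|A|$ cases the surviving $\bm{z}^B$ combines with the $\bm{z}$-blocks via $B\setminus\overline{A}=A$ to produce the $\bm{z}^A$ part of the target, whereas in the odd-$|A|$ cases this part comes directly from the $\bm{z}_A$-type blocks; in all cases the $\bm{x}$-blocks cancel (for instance $\bm{x}^{\widetilde{A}}=\bm{x}^{\overline{A}}+\bm{x}_a$) and the $\bm{z}^C$ seed supplies the $\bm{z}^{C\setminus B}$ part, using $c\in C$ for the shared index $c$ of $B\cap C$ and the disjointness $A\cap(C\setminus B)=\emptyset$ that follows from $A\subseteq B$. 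Carrying out this bookkeeping collapses the sum to $\bm{z}^A+\bm{z}^{C\setminus B}=\bm{z}^{A\cup(C\setminus(B\cap C))}$ in each case.

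Part (ii) is where the hypotheses earn their keep, and I expect it to be the main obstacle. The single-index applications $\ad_{\bm{x}_i},\ad_{\bm{z}_i}$ are easy: within a block the generators pairwise commute, so adding earlier block members does not disturb later products, and one only needs to check that the relevant index lies in the current $\bm{z}$- or $\bm{x}$-support, which follows directly from the sets involved (e.g. $\overline{A}\subseteq\widetilde{A}$ and $\widetilde{A}\subseteq B$). The delicate steps are the applications of $\ad_{\bm{z}^B}$, whose product is the parity $|B\cap\mathcal{X}(\bm{v})|\bmod 2$: one must verify that at each such step the running $\bm{x}$-support meets $B$ in an odd number of elements. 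This is precisely what the distinguished element $a$ and the auxiliary sets $\overline{A}=B\setminus A$ and $\widetilde{A}=\overline{A}\cup\{a\}$ are engineered to guarantee. For example, when $|A|$ is even the running $\bm{x}$-support at the relevant $\bm{z}^B$ step equals $\widetilde{A}$, and $|B\cap\widetilde{A}|=|\widetilde{A}|=|\overline{A}|+1=(|B|-|A|)+1$ is odd exactly because $|B|$ and $|A|$ are both even; the constraint $a\notin A\cap C$ ensures $a$ does not clash with the $B\cap C$ handling, and the constraint $|B\cap C|=1$ keeps the single shared index $c$ from spoiling any parity.

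Finally, I would organise the verification as a single left-to-right pass (innermost $\ad$ first) through each of the four sequences, recording the pair $(\mathcal{X}(\bm{v}),\mathcal{Z}(\bm{v}))$ after every block and checking at each block that the product is $1$. The four cases are structurally parallel, differing only in which single indices must be toggled to fix the parities at the $\bm{z}^B$ steps, so once the even-$|A|$, $A\cap C\neq\emptyset$ case is carried out in full the remaining three follow by the same bookkeeping with minor modifications: dropping $a$ and $\widetilde{A}$ when $|A|$ is odd, and removing the $\bm{x}_{B\cap C}$ and $\bm{z}_{B\cap C}$ toggles when $C=\emptyset$ or $A\cap C=\emptyset$. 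Confirming that every product equals $1$ across all four passes, together with the set algebra of part (i), yields $\ad_{\bm{p}_1}\cdots\ad_{\bm{p}_{l-1}}(\bm{p}_l)=\bm{z}^{A\cup(C\setminus(B\cap C))}$, as claimed.
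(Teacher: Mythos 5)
Your proposal is correct and follows essentially the same route as the paper: the paper likewise evaluates each of the four nested adjoint sequences innermost-first, tracking the running vector (equivalently its $\mathcal{X}$- and $\mathcal{Z}$-supports) and relying on exactly the parity facts you identify, namely $(\bm{z}^{B})^{\top}\Lambda\bm{x}^{A}=1$ for $|A|$ odd and $(\bm{z}^{B})^{\top}\Lambda\bm{x}^{\widetilde{A}}=1$ for $|A|$ even, together with $A\cap C=B\cap C$ whenever $A\cap C\neq\emptyset$. Your repackaging into a telescoping XOR-sum plus a separate nonvanishing check is a harmless reorganization of the same computation, which the paper carries out in one pass.
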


\begin{proof} First note that each of the four cases in \Cref{eq:p_sequence} have the same right-hand end of the sequence, namely $\bm{z}^{B},\bm{x}_{B \cap C},\bm{z}^{C}$. If $C$ is empty, this shortens to $\bm{z}^{B}$. When $C$ is non-empty, we will use that, due to the requirement that $|B \cap C| = 1$, $(\bm{z}^{B})^{\top}\Lambda \bm{x}_{B\cap C} = 1$, so we can write
\begin{align}
\ad_{\bm{z}^{B}}\ad_{\bm{x}_{B \cap C}}(\bm{z}^{C}) = \bm{x}^{B \cap C} + \bm{z}^{(B \cup C)\setminus (B \cap C)}.
\end{align}
As the expression above on the right reduces to $\bm{z}^{B}$ for $C = \emptyset$, we can treat both the cases when $C$ is empty and when it is non-empty simultaneously. Furthermore, let us note that, if $|A| = 1 \bmod 2$, then $(\bm{z}^{B})^{\top}\Lambda \bm{x}^{A} = (\bm{z}^{A})^{\top} \Lambda \bm{x}^{A} = 1$, and if $|A| = 0 \bmod 2$, then $(\bm{z}^{B})^{\top} \Lambda \bm{x}^{\widetilde{{A}}} = 1$. Finally, note that if $A \cap C \neq \emptyset$, then $A\cap C = B \cap C$.

For the case where $A \cap C \neq \emptyset$ and $|A| = 1 \bmod 2$, we have that
\begin{align}
\ad_{\bm{p}_{1}}...\ad_{\bm{p}_{l-1}}(\bm{p}_{l}) &= \ad_{\bm{x}_{A}}\ad_{\bm{z}_{A \setminus (B \cap C)}}\ad_{\bm{z}^{B}}\ad_{\bm{x}_{A\setminus(B \cap C)}}(\bm{x}^{B \cap C} + \bm{z}^{(B \cup C)\setminus (B \cap C)}) \\
&= \ad_{\bm{x}_{A}}\ad_{\bm{z}_{A \setminus (B \cap C)}}\ad_{\bm{z}^{B}}(\bm{x}^{A} + \bm{z}^{(B \cup C)\setminus (B \cap C)}) \\
&= \ad_{\bm{x}_{A}}\ad_{\bm{z}_{A \setminus (B \cap C)}}(\bm{x}^{A} + \bm{z}^{C}) \\
&= \ad_{\bm{x}_{A}}(\bm{x}^{A} + \bm{z}^{A \cup C}) \\
&= \bm{z}^{A \cup C} \\
&\equiv \bm{z}^{A \cup (C \setminus (B \cap C))}
\end{align}
where the equivalence in the last line arises from the fact that $A \cap C = B \cap C \neq \emptyset$. For the case where $A \cap C = \emptyset$ and $|A| = 1 \bmod 2$, we have that
\begin{align}
\ad_{\bm{p}_{1}}...\ad_{\bm{p}_{l-1}}(\bm{p}_{l}) &=\ad_{\bm{x}_{A}}\ad_{\bm{z}_{A}}\ad_{\bm{z}^{B}}\ad_{\bm{x}_{A\cup (B \cap C)}}\ad_{\bm{z}_{B \cap C}} (\bm{x}^{B \cap C} + \bm{z}^{(B \cup C)\setminus (B \cap C)}) \\
&=\ad_{\bm{x}_{A}}\ad_{\bm{z}_{A}}\ad_{\bm{z}^{B}}\ad_{\bm{x}_{A\cup (B \cap C)}} (\bm{x}^{B \cap C} + \bm{z}^{B \cup C}) \\
&=\ad_{\bm{x}_{A}}\ad_{\bm{z}_{A}}\ad_{\bm{z}^{B}} (\bm{x}^{A} + \bm{z}^{B \cup C}) \\
&=\ad_{\bm{x}_{A}}\ad_{\bm{z}_{A}} (\bm{x}^{A} + \bm{z}^{C \setminus (B \cap C)}) \\
&=\ad_{\bm{x}_{A}} (\bm{x}^{A} + \bm{z}^{A \cup (C \setminus (B \cap C))}) \\
&=\bm{z}^{A \cup (C \setminus (B \cap C))}.
\end{align}
For the case where $A \cap C \neq \emptyset$ and $|A| = 0 \bmod 2$, we have that
\begin{align}
\ad_{\bm{p}_{1}}...\ad_{\bm{p}_{l-1}}(\bm{p}_{l}) &=\ad_{\bm{x}_{a}}\ad_{\bm{z}^{B}}\ad_{\bm{x}_{\overline{A}}}\ad_{\bm{z}_{\overline{A}}}\ad_{\bm{z}^{B}}\ad_{\bm{x}_{\widetilde{A} \cup (B \cap C)}}\ad_{\bm{z}_{B \cap C}} (\bm{x}^{B \cap C} + \bm{z}^{(B \cup C)\setminus (B \cap C)}) \\
&=\ad_{\bm{x}_{a}}\ad_{\bm{z}^{B}}\ad_{\bm{x}_{\overline{A}}}\ad_{\bm{z}_{\overline{A}}}\ad_{\bm{z}^{B}}\ad_{\bm{x}_{\widetilde{A} \cup (B \cap C)}}(\bm{x}^{B \cap C} + \bm{z}^{B \cup C}) \\
&=\ad_{\bm{x}_{a}}\ad_{\bm{z}^{B}}\ad_{\bm{x}_{\overline{A}}}\ad_{\bm{z}_{\overline{A}}}\ad_{\bm{z}^{B}}(\bm{x}^{\widetilde{A}} + \bm{z}^{B \cup C}) \\
&=\ad_{\bm{x}_{a}}\ad_{\bm{z}^{B}}\ad_{\bm{x}_{\overline{A}}}\ad_{\bm{z}_{\overline{A}}}(\bm{x}^{\widetilde{A}} + \bm{z}^{C \setminus (B \cap C)}) \\
&=\ad_{\bm{x}_{a}}\ad_{\bm{z}^{B}}\ad_{\bm{x}_{\overline{A}}}(\bm{x}^{\widetilde{A}} + \bm{z}^{\overline{A} \cup (C \setminus (B \cap C))}) \\
&=\ad_{\bm{x}_{a}}\ad_{\bm{z}^{B}}(\bm{x}_{a} + \bm{z}^{\overline{A} \cup (C \setminus (B \cap C))}) \\
&=\ad_{\bm{x}_{a}}(\bm{x}_{a} + \bm{z}^{A \cup C}) \\
&=\bm{z}^{A \cup C} \\
&\equiv \bm{z}^{A \cup (C \setminus (B \cap C))}.
\end{align}
For the case where $A \cap C = \emptyset$, $|A| = 0 \bmod 2$ and $A \neq 0$, we have that
\begin{align}
\ad_{\bm{p}_{1}}...\ad_{\bm{p}_{l-1}}(\bm{p}_{l}) &= \ad_{\bm{x}_{a}}\ad_{\bm{z}^{B}}\ad_{\bm{x}_{\overline{A}}}\ad_{\bm{z}_{\overline{A} \setminus (B \cap C)}}\ad_{\bm{z}^{B}}\ad_{\bm{x}_{\widetilde{A} \setminus (B \cap C)}}(\bm{x}^{B \cap C} + \bm{z}^{(B \cup C)\setminus (B \cap C)}) \\
&= \ad_{\bm{x}_{a}}\ad_{\bm{z}^{B}}\ad_{\bm{x}_{\overline{A}}}\ad_{\bm{z}_{\overline{A} \setminus (B \cap C)}}\ad_{\bm{z}^{B}}(\bm{x}^{\widetilde{A}} + \bm{z}^{(B \cup C)\setminus (B \cap C)}) \\
&= \ad_{\bm{x}_{a}}\ad_{\bm{z}^{B}}\ad_{\bm{x}_{\overline{A}}}\ad_{\bm{z}_{\overline{A} \setminus (B \cap C)}}(\bm{x}^{\widetilde{A}} + \bm{z}^{C}) \\
&= \ad_{\bm{x}_{a}}\ad_{\bm{z}^{B}}\ad_{\bm{x}_{\overline{A}}}(\bm{x}^{\widetilde{A}} + \bm{z}^{\overline{A} \cup C}) \\
&= \ad_{\bm{x}_{a}}\ad_{\bm{z}^{B}}(\bm{x}_{a} + \bm{z}^{\overline{A}\cup C}) \\
&= \ad_{\bm{x}_{a}}(\bm{x}_{a} + \bm{z}^{A \cup (C 
\setminus (B \cap C))}) \\
&= \bm{z}^{A \cup (C \setminus (B \cap C))}.
\end{align}
\end{proof}

\begin{proof}[Proof of the theorem] Since it is trivial to produce a sequence that generates $\bm{0} \in \mathbb{F}_{2}^{2n}$, for example by considering the sequence $\bm{x}_{1}, \bm{x}_{1}$, we focus our attention on generating $\mathbb{F}_{2}^{2n} \setminus \{\bm{0}\}$. In fact, we know from \Cref{lem:each_subset}, it suffices to demonstrate that for each subset $T \subseteq \{1,...,n\}$, $\bm{z}^{T} \in \mathrm{G} \bigcup_{r=1}^{\infty} \mathrm{G}^{\ad^{(r)}}$. As the case where $T = \emptyset$ corresponds to $\bm{z}^{T} = \bm{0}$, we only consider $T \neq \emptyset$ henceforth. For $1 \leq i \leq j \leq k$, let us define the following notation:
\begin{align}
S_{i:j} &:= \bigcup_{l = i}^{j} S_{l}, \\
T_{i:j} &:= T \cap S_{i:j}.
\end{align}
If $i = j$, we simply write $S_{j}$ and $T_{j}$.

The proof proceeds by demonstrating the following two points:
\begin{enumerate}[label=(\roman*)]
    \item there exist sequences $\bm{u}_{1}, ...,\bm{u}_{r}$ and $\bm{w}_{1},...,\bm{w}_{t}$ such that
    \begin{align}
    \ad_{\bm{u}_{1}}\dots\ad_{\bm{u}_{r-1}}(\bm{u}_{r}) &= \bm{z}^{T_{1}}, \\
    \ad_{\bm{w}_{1}}\dots\ad_{\bm{w}_{t-1}}(\bm{w}_{t}) &= \bm{z}^{T_{1} \cup \{s_{1}\}}.
    \end{align}
    \item if $k \geq 2$ and there exist sequences $\bm{u'}_{1},...,\bm{u'}_{r'}$ and $\bm{w'}_{1},...,\bm{w'}_{t'}$ such that
    \begin{align}
    \ad_{\bm{u'}_{1}}\dots\ad_{\bm{u'}_{r'-1}}(\bm{u}_{r'}) &= \bm{z}^{T_{1:j}}, \\
    \ad_{\bm{w'}_{1}}\dots\ad_{\bm{w'}_{t'-1}}(\bm{w}_{t'}) &= \bm{z}^{T_{1:j} \cup \{s_{j}\}}
    \end{align}
    for a given $1 \leq j \leq k-1$, then there exists a sequence $\bm{u}_{1},...\bm{u}_{r}$ such that
    \begin{align}
    \ad_{\bm{u}_{1}}\dots\ad_{\bm{u}_{r-1}}(\bm{u}_{r}) &= \bm{z}^{T_{1:j+1}},
    \end{align}
    and, if in addition $j +1 < k$, there also exists a sequence $\bm{w}_{1},...,\bm{w}_{t}$ such that
    \begin{align}
    \ad_{\bm{w}_{1}}\dots\ad_{\bm{w}_{t-1}}(\bm{w}_{t}) &= \bm{z}^{T_{1:j+1} \cup \{s_{j+1}\}}.
    \end{align}
\end{enumerate}

Before proving these two points, let us comment on why they suffice for establishing the theorem. If $k = 1$, then $T_{1} = T$ and we are finished using the first point alone. If $k \ge 2$, then from the first point we know we can produce $\bm{z}^{T_{1}}$ and $\bm{z}^{T_{1} \cup \{s_{1}\}}$, so by the second point we know we can also produce $\bm{z}^{T_{1:2}}$ and $\bm{z}^{T_{1:2} \cup \{s_{2}\}}$. By iteratively applying the second point, we know we can produce $\bm{z}^{T_{1:j}}$ for all $1 \leq j \leq k$, including $j = k$ itself, for which $T_{1:j} = T$. That is, in the case where $k \geq 2$, the proof proceeds inductively, with the first point being the base case and the second point the inductive step.

\textbf{Proof of (i):} If $T_{1} = \emptyset$, defining $\bm{u}_{1},...,\bm{u}_{r}$ to be the sequence $\bm{x}_{1},\bm{x}_{1}$ trivially gives the desired result. Otherwise, we define $\bm{u}_{1},...,\bm{u}_{r}$ to be the sequence $\bm{p}_{1},...\bm{p}_{l}$ from \Cref{lem:z_A_improved} for $A = T_{1}$, $B = S_{1}$ and $C = \emptyset$. As a result of that lemma, we get that 
\begin{align}
\ad_{\bm{u}_{1}}...\ad_{\bm{u}_{r-1}}(\bm{u}_{r}) = \bm{z}^{A \cup (C \setminus (B \cap C))} = \bm{z}^{T_{1}}
\end{align}
as required. 

For the sequence $\bm{w}_{1}, ..., \bm{w}_{t}$, we again use \Cref{lem:z_A_improved}, now with $A = T_{1} \cup \{s_{1}\}$ instead (meaning that $A \neq \emptyset$ regardless of whether or not $T_{1}$ is empty). Defining $\bm{w}_{1},...,\bm{w}_{t}$ to be the resultant sequence $\bm{p}_{1},...\bm{p}_{l}$, we get that 
\begin{align}
\ad_{\bm{w}_{1}}...\ad_{\bm{w}_{t-1}}(\bm{w}_{t}) = \bm{z}^{A \cup (C \setminus (B \cap C))} = \bm{z}^{T_{1} \cup \{s_{1}\}}.
\end{align}

\textbf{Proof of (ii):} Suppose that $k \geq 2$ and that there exist sequences $\bm{u'}_{1}, ..., \bm{u'}_{r'}$ and $\bm{w'}_{1}, ..., \bm{w'}_{t'}$ such that for a given $1 \leq j \leq k-1$, 
\begin{gather}
\ad_{\bm{u'}_{1}}\dots\ad_{\bm{u'}_{r'-1}}(\bm{u'}_{r'}) = \bm{z}^{T_{1:j}}, \\
\ad_{\bm{w'}_{1}}\dots\ad_{\bm{w'}_{t'-1}}(\bm{w'}_{t'}) = \bm{z}^{T_{1:j} \cup \{s_{j}\}}.
\end{gather}
If $T_{1:j} = \emptyset$, then we are in an analogous situation to point (i), and so the same reasoning applies. If $T_{j+1}$ is also empty, then the required sequence is trivial same as above, otherwise we can leverage \Cref{lem:z_A_improved} to obtain the desired sequences by defining $B := S_{j+1}$ and $A := T_{j+1}$ to obtain $\bm{u}_{1},...,\bm{u}_{r}$, and, if in addition $j+1 < k$, by defining $B := S_{j+1}$ and $A := T_{j+1} \cup \{s_{j+1}\}$ to obtain $\bm{w}_{1}, ..., \bm{w}_{t}$. Henceforth, let us assume that $T_{1:j} \neq \emptyset$.

To construct the new sequence $\bm{u}_{1}, ..., \bm{u}_{r}$ and $\bm{w}_{1}, ..., \bm{w}_{t}$ we can make further use of \Cref{lem:z_A_improved}. For the former, we first note that if $T_{j+1} = \emptyset$, then $T_{1:j+1} = T_{1:j}$ so we merely take $\bm{u}_{1},...,\bm{u}_{r}$ to be $\bm{u'}_{1},...,\bm{u'}_{r'}$. Otherwise, taking $A = T_{j+1}$, $B = S_{j+1}$ and $C = T_{1:j} \cup \{s_{j}\}$, we get that $B \cap C = \{s_{j}\}$ and furthermore that the sequence $\bm{p}_{1},...,\bm{p}_{l}$ from \Cref{eq:p_sequence} is such that
\begin{align}
\ad_{\bm{p}_{1}}...\ad_{\bm{p}_{l-1}}(\bm{p}_{l}) = \bm{z}^{A \cup (C \setminus B \cap C)} = \bm{z}^{T_{j+1} \cup (T_{1:j} \setminus \{s_{j}\}) } = \bm{z}^{T_{1:j+1}}.
\end{align}
Noting that in each case in \Cref{eq:p_sequence}, $\bm{p}_{l} = \bm{z}^{T_{1:j} \cup \{ s_{j} \}}$, which is the element produced by the adjoint sequence $\bm{w'}_{1},...,\bm{w'}_{t'}$ by assumption, defining $\bm{u}_{1}, ..., \bm{u}_{r}$ to be the sequence $\bm{p}_{1},...,\bm{p}_{l-1},\bm{w'}_{1},...,\bm{w'}_{t'}$ gives the desired result.

If $j+1 < k$, we can apply similar reasoning to produce the sequence $\bm{w}_{1},..., \bm{w}_{t}$. Taking $A = T_{j+1} \cup \{s_{j+1}\}$, $B = S_{j+1}$ and $C = T_{1:j} \cup \{s_{j}\}$, \Cref{lem:z_A_improved} ensures that the sequence $\bm{p}_{1},...,\bm{p}_{l}$ from \Cref{eq:p_sequence} satisfies
\begin{align}
\ad_{\bm{p}_{1}}...\ad_{\bm{p}_{l-1}}(\bm{p}_{l}) = \bm{z}^{A \cup (C \setminus B \cap C)} = \bm{z}^{(T_{j+1} \cup \{s_{j+1}\}) \cup (T_{1:j} \setminus \{s_{j}\}) } = \bm{z}^{T_{1:j+1} \cup \{s_{j+1}\}}.
\end{align}
As above, taking $\bm{w}_{1}, ..., \bm{w}_{t}$ to be the concatenated sequence $\bm{p}_{1},...,\bm{p}_{l-1},\bm{w'}_{1},...,\bm{w'}_{t'}$ gives the desired result.
\end{proof}

\section{Proof of \Cref{thm:n_odd_no_single_main_text}} \label{app:thm_n_odd_no_single_main_text}

In this section we prove an equivalent form of \Cref{thm:n_odd_no_single_main_text} using the mapping and notation from \Cref{app:mapping}. In particular, we will make use of the quantity $\wt(\cdot)$ which is with defined with respect to a given symplectic basis. We assume that such a basis, denoted $\{\bm{x}_{1},..,\bm{x}_{n},\bm{z}_{1},...,\bm{z}_{n} \}$, is given throughout this section. For the proof of the theorem, we make use of the following lemma:
\begin{Lemma} \label{lem:N_minus_wt_plus_sigma} Let $\bm{u},\bm{v} \in \mathbb{F}_{2}^{2n}$ be such that $\bm{v} \neq \bm{u}$, $\wt(\bm{v}) = n$ and $\ad_{\bm{v}}(\bm{u}) \neq 0$. Then $\wt(\ad_{\bm{v}}(\bm{u})) = n - \wt(\bm{u}) + \sigma$, where $1 \leq \sigma \leq n$ is an odd integer.
\end{Lemma}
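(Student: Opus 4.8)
The plan is to reduce the statement to a purely position-by-position bookkeeping of the single-qubit Paulis represented by $\bm{u}$ and $\bm{v}$, exploiting that $\wt(\bm{v}) = n$ forces $\bm{v}$ to have a nonidentity local Pauli at every one of the $n$ positions. First I would unpack the hypothesis $\ad_{\bm{v}}(\bm{u}) \neq 0$: since $\ad_{\bm{v}}(\bm{u}) = \bm{v}^{\top}\Lambda\bm{u}\,(\bm{v} + \bm{u})$ by definition, a nonzero result forces $\bm{v}^{\top}\Lambda\bm{u} = 1$ and hence $\ad_{\bm{v}}(\bm{u}) = \bm{u} + \bm{v}$. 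So the lemma is really a statement about $\wt(\bm{u} + \bm{v})$ under the constraint that $\bm{u}$ and $\bm{v}$ anticommute.

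Next I would partition the $n$ positions into three classes according to the local pair of $\bm{u}$ relative to $\bm{v}$: positions where $\bm{u}$ is the identity (count $a$), positions where $\bm{u}$ agrees with $\bm{v}$ (count $b$, necessarily nonidentity since $\bm{v}$ is), and positions where $\bm{u}$ is nonidentity but differs from $\bm{v}$ (count $c$). Every position lies in exactly one class, so $a + b + c = n$, and since $\wt(\bm{u})$ counts the nonidentity positions of $\bm{u}$ we get $\wt(\bm{u}) = b + c$. The local pair of $\bm{u} + \bm{v}$ vanishes exactly when $\bm{u}$ and $\bm{v}$ carry the same local Pauli, i.e. on the $b$ positions, so $\wt(\bm{u} + \bm{v}) = a + c = n - b$. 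Rewriting $a + c = (n - (b+c)) + c = (n - \wt(\bm{u})) + c$ identifies the sought quantity as $\sigma = c$, and the bounds $0 \le c \le n$ are then immediate.

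The one step carrying the real content is showing $\sigma = c$ is odd (and hence at least $1$), and here I would reinterpret the anticommutation condition locally. At each position the local contribution to $\bm{v}^{\top}\Lambda\bm{u}$ equals $1$ precisely when the two single-qubit Paulis anticommute, which for single-qubit Paulis happens iff both are nonidentity and distinct — exactly the defining condition of the class counted by $c$. Summing over positions gives $\bm{v}^{\top}\Lambda\bm{u} = c \bmod 2$, and since the left-hand side equals $1$ from the first step, $c$ is odd; being a nonnegative odd integer it is at least $1$. Assembling the pieces yields $\wt(\ad_{\bm{v}}(\bm{u})) = n - \wt(\bm{u}) + \sigma$ with $\sigma = c$ odd and $1 \le \sigma \le n$. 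I do not expect a genuine obstacle: the argument is careful bookkeeping of the three position classes, and the only subtlety is keeping straight which of $a$, $b$, $c$ each of the two weight counts picks up, together with the clean observation that the symplectic form decomposes as a sum of local anticommutation indicators.
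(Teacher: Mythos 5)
Your proof is correct and follows essentially the same route as the paper's: both reduce $\ad_{\bm{v}}(\bm{u})$ to $\bm{u}+\bm{v}$, exploit that $\wt(\bm{v})=n$ forces a nonidentity entry of $\bm{v}$ at every position, and identify $\sigma$ with the number of positions where the local components anticommute, whose parity equals $\bm{v}^{\top}\Lambda\bm{u}=1$. Your explicit three-class partition ($a$, $b$, $c$) is just a slightly more structured presentation of the same position-by-position count the paper carries out via the basis coefficients $\alpha_i,\beta_i,\gamma_i,\delta_i$.
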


\begin{proof} Let us write
\begin{gather}
\bm{v} = \sum_{i=1}^{n} \alpha_{i}\bm{x}_{i} + \beta_{i}\bm{z}_{i}, \\
\bm{u} = \sum_{i \in \mathcal{X}(\bm{u}) \cup \mathcal{Z}(\bm{u})} \gamma_{i} \bm{x}_{i} + \delta_{i}\bm{z}_{i},
\end{gather}
where $\alpha_{i}, \beta_{i}, \gamma_{i},\delta_{i} \in \{0,1\}$. Since $\wt(\bm{v}) = n$, we have that for all $i \in [n]$, $\alpha_{i}$ and $\beta_{i}$ are not both $0$ and similarly, for all $j \in \mathcal{X}(\bm{u}) \cup \mathcal{Z}(\bm{u})$, $\gamma_{j}$ and $\delta_{j}$ are not both $0$. Suppose that $\ad_{\bm{v}}(\bm{u}) \neq \bm{0}$. In particular, this means that
\begin{align}
\bm{v}^{\top}\Lambda \bm{u} = \sum_{i \in \mathcal{X}(\bm{u}) \cup \mathcal{Z}(\bm{u})} \alpha_{i}\delta_{i} + \beta_{i}\gamma_{i} = 1 \bmod 2.
\end{align}
For this to be the case, there must be an odd number of $i \in \mathcal{X}(\bm{u}) \cup \mathcal{Z}(\bm{u})$ such that $\alpha_{i}\delta_{i} + \beta_{i}\gamma_{i} = 1$. This can only occur if either $\alpha_{i}$ and $\delta_{i}$ are both $1$ and at most one of $\beta_{i}$ and $\gamma_{i}$ is $1$, or $\beta_{i}$ and $\gamma_{i}$ are both $1$ and at most one of $\alpha_{i}$ and $\delta_{i}$ is $1$. Let us define $\sigma := |\{i :  \alpha_{i}\delta_{i} + \beta_{i}\gamma_{i} = 1\} |$.

Next, note that $\ad_{\bm{v}}(\bm{u}) \neq \bm{0}$ also means that
\begin{align}
\ad_{\bm{v}}(\bm{u}) = \bm{v} + \bm{u} =\sum_{i \in \mathcal{X}(\bm{u}) \cup \mathcal{Z}(\bm{u})} (\alpha_{i} + \gamma_{i})\bm{x}_{i} + (\beta_{i} + \delta_{i})\bm{z}_{i} + \sum_{j \in [n] \setminus (\mathcal{X}(\bm{u}) \cup \mathcal{Z}(\bm{u}))} \alpha_{j}\bm{x}_{j} + \beta_{j}\bm{z}_{j}.
\end{align}
It follows that 
\begin{align}
\wt(\ad_{\bm{v}}(\bm{u})) = \wt\left(\sum_{i \in \mathcal{X}(\bm{u}) \cup \mathcal{Z}(\bm{u})} (\alpha_{i} + \gamma_{i})\bm{x}_{i} + (\beta_{i} + \delta_{i})\bm{z}_{i} \right) + \wt\left(\sum_{j \in [n] \setminus (\mathcal{X}(\bm{u}) \cup \mathcal{Z}(\bm{u}))} \alpha_{j}\bm{x}_{j} + \beta_{j}\bm{z}_{j} \right).
\end{align}
Due to the requirements on the $\alpha_{i}$ and $\beta_{i}$, we know that 
\begin{align}
\wt\left(\sum_{j \in [n] \setminus (\mathcal{X}(\bm{u}) \cup \mathcal{Z}(\bm{u}))} \alpha_{j}\bm{x}_{j} + \beta_{j}\bm{z}_{j} \right) = |[n]\setminus (\mathcal{X}(\bm{u}) \cup \mathcal{Z}(\bm{u}))| = n - \wt(\bm{u}).
\end{align}
Finally, the implications of the fact that $\bm{v}^{\top}\Lambda \bm{w} = 1$ discussed above ensure that
\begin{align}
\wt\left(\sum_{i \in \mathcal{X}(\bm{u}) \cup \mathcal{Z}(\bm{u})} (\alpha_{i} + \gamma_{i})\bm{x}_{i} + (\beta_{i} + \delta_{i})\bm{z}_{i} \right) = \sigma
\end{align}
which completes the proof.
\end{proof}

The equivalent statement to \Cref{thm:n_odd_no_single_main_text} in the symplectic picture is:
\begin{Theorem} Let $\mathcal{G}$ be a symplectic basis of $\mathbb{F}_{2}^{2n}$ as above. Then for any $\bm{v} \in \mathbb{F}_{2}^{2n}$, we have that
\begin{align}
\langle \mathcal{G} \cup \{\bm{v}\}\rangle_{[\cdot, \cdot]} \subsetneq \mathbb{F}_{2}^{2n}.
\end{align}
\end{Theorem}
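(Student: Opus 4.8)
The plan is to exhibit an invariant of the nested-commutation closure that some vector of $\mathbb{F}_2^{2n}$ violates, splitting into two cases according to $\wt(\bm v)$. The key structural fact I would exploit is that every element of the closure has the form $\ad_{\bm g_1}\cdots\ad_{\bm g_r}(\bm g_{r+1})$ with each $\bm g_i \in \mathcal G \cup \{\bm v\}$, so it suffices to understand how a single application of $\ad_{\bm g}$, for a \emph{generator} $\bm g$, acts on the relevant invariant, and to check that every generator already satisfies it. Since building $\bm 0$ is always possible, I would phrase the invariant as a property of the nonzero elements of the closure.

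For the case $\wt(\bm v) = n$ I would use \Cref{lem:N_minus_wt_plus_sigma} to track the parity of the weight. For a weight-one basis generator $\bm g \in \{\bm x_i,\bm z_i\}$, a nonzero $\ad_{\bm g}(\bm u) = \bm u + \bm g$ merely flips a single component of $\bm u$ on qubit $i$, and nonvanishing forces $\bm u$ to be nontrivial there already; hence $\wt$ is unchanged and in particular its parity is preserved. For $\bm g = \bm v$, \Cref{lem:N_minus_wt_plus_sigma} gives $\wt(\ad_{\bm v}(\bm u)) = n - \wt(\bm u) + \sigma$ with $\sigma$ odd, and reducing modulo $2$ while using that $n$ is odd yields $\wt(\ad_{\bm v}(\bm u)) \equiv \wt(\bm u) \pmod 2$. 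Thus every $\ad_{\bm g}$ preserves weight parity on nonzero vectors. As each generator $\bm x_i,\bm z_i$ has weight $1$ and $\bm v$ has weight $n$ (all odd), every nonzero element of the closure has odd weight, so $\bm x_1 + \bm x_2$ (weight $2$) lies outside it and the closure is proper. This is exactly where oddness of $n$ is essential: for $n$ even the same computation makes $\wt(\bm v)$ even and the invariant collapses, consistent with \Cref{thm:n_geq_2_main_text}.

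For the case $\wt(\bm v) < n$ I would instead use a support-locality invariant, requiring no parity hypothesis. Let $T := \mathcal X(\bm v)\cup\mathcal Z(\bm v) \subsetneq [n]$ and pick $m \in [n]\setminus T$. I would show that the set $\mathcal D := \{\bm u : \mathcal X(\bm u)\cup\mathcal Z(\bm u) \subseteq T\} \cup \{\bm u : \wt(\bm u)=1\}$ contains every generator and is closed under $\ad_{\bm g}$ for every generator $\bm g$. The point is that enlarging the support of $\bm u$ onto qubit $m$ requires adding $\bm x_m$ or $\bm z_m$ via $\ad_{\bm x_m}$ or $\ad_{\bm z_m}$, but a nonzero such commutator needs $\bm u$ to anticommute with $\bm x_m$ (resp.\ $\bm z_m$), i.e.\ to be already nontrivial on $m$; meanwhile any vector supported in $T$ commutes both with $\bm v$ and with the weight-one operators on $m$, so the two pieces of $\mathcal D$ never fuse into support $T\cup\{m\}$. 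The closure is therefore contained in $\mathcal D$, which omits e.g.\ $\bm x_m + \bm x_j$ with $j\in T$, giving properness; this case also absorbs the degenerate possibilities $\bm v = \bm 0$ and $\wt(\bm v)=1$.

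I expect the genuine obstacle to be the weight-$n$ case, and it is precisely what \Cref{lem:N_minus_wt_plus_sigma} settles: without control on how $\ad_{\bm v}$ changes the weight, one cannot preclude generating an even-weight string, and it is the odd value of $\sigma$ together with $n$ odd that pins the weight parity. The remaining ingredients — that weight-one generators preserve the weight exactly, and that $\mathcal D$ is closed under each generator — are routine case checks over the four generator/membership types.
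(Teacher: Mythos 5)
Your proof is correct and follows essentially the same route as the paper's: the same split into $\wt(\bm v)<n$ and $\wt(\bm v)=n$, the same use of \Cref{lem:N_minus_wt_plus_sigma} to show that weight parity (odd, since $n$ is odd) is preserved by every nonvanishing adjoint action in the full-weight case, and the same underlying observation that a qubit outside the support of $\bm v$ remains decoupled in the low-weight case. The only cosmetic difference is that you package the low-weight case as a forward-closed invariant set $\mathcal D$, whereas the paper runs a contradiction argument on which sequences could produce $\bm x_i+\bm x_j$; both rest on the fact that among the generators only $\bm z_j$ anticommutes with $\bm x_j$.
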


\begin{proof} The proof is established by demonstrating that, for any choice of $\bm{v} \in \mathbb{F}_{2}^{2n}$, there is always some element $\bm{w} \in \mathbb{F}_{2}^{2n}$ such that $\bm{w} \notin \langle \mathcal{G} \cup \{\bm{v}\}\rangle_{[\cdot, \cdot]}$. Since the trivial case $\bm{v} = \bm{0}$, results in $\langle \mathcal{G} \cup \{\bm{v}\}\rangle_{[\cdot, \cdot]} = \mathcal{G} \subsetneq \mathbb{F}_{2}^{2n}$, we assume from now on that $\bm{v} \neq \bm{0}$. We consider two cases:
\begin{enumerate}[label=(\roman*)]
    \item $\wt(\bm{v}) < n$;
    \item $\wt(\bm{v}) = n$.
\end{enumerate}
Before presenting the proofs of these cases, let us note that, along with the relations defining the symplectic set  \Cref{eq:symplectic_basis_relations}, we also have that
\begin{align}
\bm{x}_{i}^{\top} \Lambda \bm{v} &\iff i \in \mathcal{Z}(\bm{v}), \\
\bm{z}_{i}^{\top}\Lambda \bm{v} &\iff i \in \mathcal{X}(\bm{v}).
\end{align}

\textbf{Proof of (i):} Suppose that $\bm{v} \in \mathbb{F}_{2}^{2n} \setminus \{\bm{0}\}$ is such that $\wt(\bm{v}) < n$. Then there exists some $j \in [n]$ such that $j \notin \mathcal{X}(\bm{v}) \cup \mathcal{Z}(\bm{v})$. The proof proceeds by showing that
\begin{align}
\bm{w} := \begin{cases} \bm{x}_{i} + \bm{x}_{j}, &\text{ for some } i \in \mathcal{X}(\bm{v}) \text{ if } \mathcal{X}(\bm{v})\neq \emptyset, \\
\bm{z}_{i} + \bm{x}_{j}, &\text{ for some } i \in \mathcal{Z}(\bm{v}) \text{ otherwise.}
\end{cases}
\end{align}
Note the case where both $\mathcal{X}(\bm{v})$ and $\mathcal{Z}(\bm{v})$ are empty is disallowed by the assumption that $\bm{v} \neq \bm{0}$. In the following we assume $\mathcal{X}(\bm{v})$ is non-empty, but the case where only $\mathcal{Z}(\bm{v})$ is non-empty proceeds analogously by replacing $\bm{x}_{i}$ by $\bm{z}_{i}$.

Suppose for a contradiction that $\bm{x}_{i} + \bm{x}_{j} \in \langle \mathcal{G} \cup \{\bm{v}\} \rangle_{[\cdot, \cdot]}$. This means there exists a sequence of elements $\bm{u}_{1}, ..., \bm{u}_{r} \in \mathcal{G} \cup \{\bm{v}\}$ such that
\begin{align}
\ad_{\bm{u}_{1}}...\ad_{\bm{u}_{r-1}}(\bm{u}_{r}) = \bm{x}_{i} + \bm{x}_{j}.
\end{align}
Since $\bm{x}_{i} + \bm{x}_{j} \notin \mathcal{G} \cup \{\bm{v}\}$, it must be that $r \geq 2$. Moreover, since $\bm{x}_{i},\bm{x}_{j}$ are linearly independent, it must be that $\bm{x}_{i} + \bm{x}_{j} \neq \bm{0}$. For $\ad_{\bm{u}_{1}}...\ad_{\bm{u}_{r-1}}(\bm{u}_{r}) \neq \bm{0}$ to hold, it must be that
\begin{align}
\ad_{\bm{u}_{t}}...\ad_{\bm{u}_{r-1}}(\bm{u}_{r}) = \bm{u}_{t} + ... + \bm{u}_{r} \neq \bm{0}
\end{align}
for all $t = 1, ..., r-1$. Furthermore, for $\bm{u}_{1} + ... + \bm{u}_{r}$ to equal $\bm{x}_{i} + \bm{x}_{j}$, it must be the case that there is some $l \in \{1, ..., r\}$ such that $\bm{u}_{l} = \bm{x}_{j}$. However, since for $\bm{u} \in \mathcal{G} \cup\{\bm{v}\}$, $\bm{x}_{j}^{\top}\Lambda \bm{u} = 1$ if and only if $\bm{u} = \bm{z}_{j}$, it follows that the only possible sequences $\bm{u}_{1},...,\bm{u}_{r}$ containing $\bm{x}_{j}$ and satisfying $\ad_{\bm{u}_{t}}...\ad_{\bm{u}_{r-1}}(\bm{u}_{r}) \neq 0$ for all $t = 1,...,r-1$ are of the form
\begin{align}
\bm{u}_{1},...,\bm{u}_{r} = \begin{cases} ...\bm{x}_{j},\bm{x}_{j},\bm{z}_{j},\bm{z}_{j},\bm{x}_{j},\bm{x}_{j},\bm{z}_{j},\bm{x}_{j} \\ 
...\bm{z}_{j},\bm{z}_{j},\bm{x}_{j},\bm{x}_{j},\bm{z}_{j},\bm{z}_{j},\bm{x}_{j},\bm{z}_{j} 
\end{cases}.
\end{align}
That is, the sequence is comprised only of $\bm{x}_{j}$ and $\bm{z}_{j}$, and only in a specific alternating form. In any case, the resultant adjoint sequence is
\begin{align}
\ad_{\bm{u}_{1}}...\ad_{\bm{u}_{r-1}}(\bm{u}_{r}) = \begin{cases} \bm{x}_{j}, \\ \bm{z}_{j}, \\ \bm{x}_{j} + \bm{z}_{j},
\end{cases}
\end{align}
none of which equal $\bm{x}_{i} + \bm{x}_{j}$, providing the desired contradiction.

\textbf{Proof of (ii):} Now suppose that $\bm{v}$ is such that $\wt(\bm{v}) = n$. The proof proceeds by demonstrating the claim that $\wt(\bm{u}_{1}, ..., \bm{u}_{r})$ is odd for any sequence $\bm{u}_{1}, ..., \bm{u}_{r} \in \mathcal{G} \cup \{v\}$ such that $\ad_{\bm{u}_{1}}...\ad_{\bm{u}_{r-1}}(\bm{u}_{r}) \neq \bm{0}$. This proves case (ii) since any $\bm{w} \in \mathbb{F}_{2}^{2n}$ with $\wt(\bm{w})$ even is such that $\bm{w} \notin \langle\mathcal{G} \cup \{\bm{v}\}\rangle_{[\cdot,\cdot]}$.

We prove the claim by induction on the length of the sequence, and by making use of \Cref{lem:N_minus_wt_plus_sigma}. Since $n$ is odd, the elements of $\mathcal{G} \cup \{v\}$ all have odd weight, so the base case of sequences of length $1$ holds true. Suppose that $\wt(\ad_{\bm{u}_{1}}...\ad_{\bm{u}_{r-1}}(\bm{u}_{r}))$ is odd for any sequence $\bm{u}_{1}, ..., \bm{u}_{r} \in \mathcal{G} \cup \{v\}$ such that $\ad_{\bm{u}_{1}}...\ad_{\bm{u}_{r-1}}(\bm{u}_{r}) \neq \bm{0}$ and let $\bm{u} \in \mathcal{G} \cup \{v\}$. Since $\ad_{\bm{u}_{1}}...\ad_{\bm{u}_{r-1}}(\bm{u}_{r}) \neq \bm{0}$, we have that $\ad_{\bm{u}_{1}}...\ad_{\bm{u}_{r-1}}(\bm{u}_{r}) = \bm{u}_{1} + ... + \bm{u}_{r}$, and hence also that $\wt(\bm{u}_{1} + ... + \bm{u}_{r})$ is odd by the assumption. There are two case to consider: (a) $\bm{u} = \bm{x}_{i}$ or $\bm{u} = \bm{z}_{i}$ for some $i \in [n]$ or (b) $\bm{u} = \bm{v}$. Since the claim is concerned with adjoin sequences that are non-zero, let us assume that $\bm{u}^{\top}\Lambda (\bm{u}_{1} + ... + \bm{u}_{r}) = 1$.

For (a), if $\bm{u} = \bm{x}_{i}$, then $\bm{u}^{\top}\Lambda (\bm{u}_{1} + ... + \bm{u}_{r}) = 1$ if and only if $\bm{z}_{i}$ appears in the sequence an odd number of times (including in the expansion of any $\bm{u}_{l} = \bm{v}$ using \Cref{eq:basis_exp}). Similarly, if $\bm{u} = \bm{z}_{i}$, then $\bm{u}^{\top}\Lambda (\bm{u}_{1} + ... + \bm{u}_{r}) = 1$ if and only if $\bm{x}_{i}$ appears in the sequence an odd number of times. In either case, if $\bm{u}^{\top}\Lambda (\bm{u}_{1} + ... + \bm{u}_{r}) = 1$, then $\wt(\bm{u} + \bm{u}_{1} + ... + \bm{u}_{r}) = \wt(\bm{u}_{1} + ... + \bm{u}_{r})$ and hence is odd. 

For (b), since we are assuming that $\bm{u}^{\top}\Lambda(\bm{u}_{1}, ..., \bm{u}_{r}) = 1$, we are in the scenario covered by \Cref{lem:N_minus_wt_plus_sigma}. Accordingly, we know that 
\begin{align}
\wt(\bm{u} + \bm{u}_{1} + ... + \bm{u}_{r}) = n - \wt(\bm{u}_{1} + ... + \bm{u}_{r}) + \sigma
\end{align}
for some odd integer $\sigma$. Since $n$ and $\wt(\bm{u}_{1} + ... + \bm{u}_{r})$ are both odd, the right-hand side of the above expression is guaranteed to be odd. This completes the inductive proof of the claim and thus also the theorem.
\end{proof}

\section{Measurement-Based Quantum Computation} \label{app:mbqc}

In \Cref{sec:MBQC_res} of the main text, we presented a family of graph states related to the generating sets presented in \Cref{thm:n_geq_2_main_text}, and claimed that this family represents a universal resource for MBQC. Recall that a universal resource for MBQC is a family of states $\Psi$, such that for any state $\ket{\gamma}$ on $n$ qubits there exists a state $\ket{\varphi} \in \Psi$ on $m \geq n$ qubits such that $\ket{\gamma}$ can be obtained deterministically from $\ket{\varphi}$ via local operations and classical communication (LOCC). For the family $\Psi = \{\ket{G_{n,l}} | n,l \in \mathbb{N}\}$, the first requirement of the definition of universal resource, namely that any state $\ket{\gamma}$ can be obtained from some $\ket{G_{n,l}}$, follows from \Cref{thm:n_geq_2_main_text} and the construction of $\ket{G_{n,l}}$ as discussed in the main text. Establishing the remaining requirement, that $\ket{\gamma}$ can be obtained {\em deterministically}, is the purpose of this appendix. This necessitates a brief review of some background on MBQC, with which we begin.

In MBQC, a computation is specified by the following: (i) a choice of graph describing the graph state for the computation, (ii) designated subsets of vertices of the graph whose qubits represent the input and output of the computation, (iii) a set of single-qubit measurements performed on the graph state, the positive projectors of which produce the desired logical unitary applied to the input. However, since quantum-mechanical measurements are inherently probabilistic in general, there is no guarantee that the positive outcome will be obtained for each measurement. Accordingly, there is a further requirement for performing the desired unitary with certainty, namely (iv) that there is an ordering of the measurements and a method for conditionally adapting them if a negative measurement outcome occurs, so that the overall result is the same logical unitary operation. This method is based on the properties of the stabilizers of the graph state being considered, which we now briefly review.

Let $\ket{G}$ denote a graph state corresponding to the simple, connected graph $G$ with vertex set $V$ and edge set $E$. For each $v \in V$, the graph state $\ket{G}$ satisfies the equation
\begin{align}
\underbrace{X_{v} \bigotimes_{v' \in N_{v}^{G}} Z_{v'}}_{=: K_{v}}\ket{G} = \ket{G}
\end{align}
where $N_{v}^{G}$ denotes the neighborhood of $v$ in $G$. As a consequence of this equation, it is possible to ``correct'' for the occurrence of a negative measurement outcome, by the following reasoning. Suppose the qubit corresponding to vertex $w$ is measurement in the basis $\{\ket{+_{\theta}}, \ket{-_{\theta}} \}$, i.e., it is measured in the $\textrm{XY}$-plane of the Bloch sphere. Let $v \in V$ be a neighbor of $w$ in $G$. Then, since $\ket{-_{\theta}} = Z\ket{+_{\theta}}$, we have that
\begin{align}
\bra{-_{\theta}}_{w}\ket{G} = \bra{+_{\theta}}_{w}Z_{w}\ket{G} = \bra{+_{\theta}}_{w} X_{v} \bigotimes_{v' \in N_{v}^{G}\setminus w}Z_{v'}\ket{G}.
\end{align}
In this case, we see that we have effectively obtained the desired measurement outcome on $w$ at the cost of needing to apply the remaining operations of $K_{v}$. However, if the measurements on the qubits that would receive these operations are suitably chosen and yet to be performed, these operations can be effected by changing the measurement bases appropriately. For example, suppose that qubit $v$ and each qubit $v' \in N_{v}^{G} \setminus w$ are to be measured in the $\textrm{XY}$-plane with respect to the bases $\{ \ket{+_{\theta_{v}}}, \ket{-_{\theta_{v}}}\}$ and $\{\ket{+_{\theta_{v'}}}, \ket{-_{\theta_{v'}}} \}$ for $v' \in N_{v}^{G}\setminus w$. Using that $X\ket{\pm_{\theta}} = \ket{\pm_{-\theta}}$ and $Z\ket{\pm_{\theta}} = \ket{\pm_{\theta + \pi}}$, we see that 
\begin{align}
\bra{-_{\theta_{w}}}_{w} \bra{+_{\theta_{v}}}_{v} \bigotimes_{v' \in N_{v}^{G}\setminus w}\bra{+_{\theta_{v'}}}_{v'} \ket{G} = \bra{+_{\theta_{w}}}_{w} \bra{+_{(-\theta_{v})}}_{v} \bigotimes_{v' \in N_{v}^{G}\setminus w}\bra{+_{(\theta_{v'}+\pi)}}_{v'} \ket{G}.
\end{align}
This example has assumed measurements in the $\textrm{XY}$-plane of the Bloch sphere, however measurements in the $\textrm{XZ}$- and $\textrm{YZ}$-plane are also possible, by that analogous reasoning (for measurements in the $\textrm{YZ}$-plane, the positive and negative projectors differ by the application of an $X$, while those of measurements in the $\textrm{XZ}$-plane differ by a $Y$; in the latter case, this requires a product of the operators $K_{v}$). In each case, there are common features of the correcting process, namely that (1) the negative outcome is ``corrected'' by ``completing'' a stabilizer $K_{v}$ (or product thereof) on other qubits of the graph state, and (2) that the measurements on those qubits are yet to be performed.

So, to ensure that a computation can proceed with certainty on a given graph state, we need to find a sequence of measurements such that {\em every} measurement can be corrected by the above method. It turns out, that whether this is possible or not is a property of the graph underlying the graph state in conjunction with the assignment of planes of the Bloch sphere (i.e., $\textrm{XY}$, $\textrm{XZ}$ and $\textrm{YZ}$) to each of the qubits indicating which type of measurement is to be performed there. If it is possible to correct for every measurement, then the graph is said to have gflow \cite{browne2007generalized}, which is characterized in the following definition:
\begin{Definition} \label{def:gflow} Let $G = (V, E)$ be a graph, $I$ and $O$ be input and output subsets of $V$ respectively, and $\omega: V \setminus O \rightarrow \{\textrm{XY}, \textrm{XZ}, \textrm{YZ}\}$ be a map assigning measurement planes to qubits. The tuple $(G,I,O, \omega)$ has \textit{gflow} if there exists a map $g: V \setminus O \rightarrow 2^{V \setminus I}$, where $2^{V \setminus I}$ denotes the powerset of $V \setminus I$, and a partial order over $V$ such that the following hold for all $v \in V \setminus O$:
\begin{enumerate}
	\item if $v' \in g(v)$ and $v' \neq v$, then $v < v'$;
	\item if $v' \in \Odd(g(v))$ and $v' \neq v$, then $v < v'$;
	\item if $\omega(v) = \textrm{XY}$, then $v \notin g(v)$ and $v \in \Odd(g(v))$;
	\item if $\omega(v) = \textrm{XZ}$, then $v \in g(v)$ and $v \in \Odd(g(v))$;
	\item if $\omega(v) = \textrm{YZ}$, then $v \in g(v)$ and $v \notin \Odd(g(v))$;
\end{enumerate}
where $\Odd(K) := \{\tilde{v} \in V : |N_{\tilde{v}}^{G} \cap K| = 1 \bmod 2 \}$ for any $K \subseteq V$.
\end{Definition}
In effect, the map $g$ specifies the stabilizer or product of stabilizers corresponding to the correction for each qubit, i.e., the correction occurs by completing $\prod_{v' \in g(v)}K_{v'}$ for each $v$.

It was shown in Ref.~\cite{browne2007generalized} that the presence of gflow is a necessary and sufficient condition for ensuring determinism in MBQC. Accordingly, to finish demonstrating that the family of graph states presented in the main text is a valid universal resource, it suffices to show that each member graph state has gflow. In general, there are many different choices for gflow for a given graph state, which has led to e.g., the development of blind quantum computing protocols \cite{mantri2017flow}, albeit ones with imperfect security guarantees \cite{smith2023min}. For our purposes, we need only specify one choice of gflow for each member graph state, which we turn to now.

\begin{figure*}[!t] 
\centering
\begin{subfigure}[b]{0.8\textwidth} 
\centering
\includegraphics[width=\textwidth]{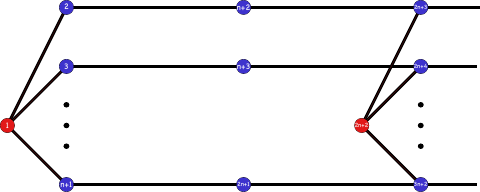} 
\caption{}
\label{fig:universal_graph_even_labeled}
\end{subfigure}
\\
\vspace{0.5cm}
\begin{subfigure}[b]{0.8\textwidth} 
\centering
\includegraphics[width=\textwidth]{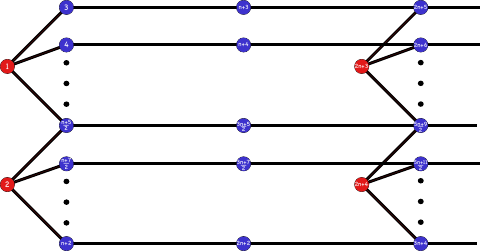} 
\caption{}
\label{fig:universal_graph_odd_labeled}
\end{subfigure} 
\caption{This figure depicts fragments of the graph states $\ket{G_{n,l}}$ that form the universal resource introduced in the main text, but with vertices labeled to aid the definition of gflow presented in this appendix. Figure (a) depicts $\ket{G_{n,l}}$ for $n$ even and (b) depicts $\ket{G_{n,l}}$ for $n$ odd.}
\label{fig:universal_graphs_labeled}
\end{figure*}

Consider \Cref{fig:universal_graphs_labeled}, which depicts the same graph states $\ket{G_{n,l}}$ as \Cref{fig:universal_graphs} in the main text, but now with labels assigned to the vertices. Recall that each vertex colored red is to be measured in the $\textrm{YZ}$-plane of the Bloch sphere while every vertex colored blue is to be measured in the $\textrm{XY}$-plane. We will consider the left-most column of blue vertices as the input set and the right-most column of blue vertices as the output set. We can define a gflow $\ket{G_{n,l}}$ with partial order given by the lexicographical order of the labels and with map $g$ defined by sending each vertex $v$ colored red to the set $\{v\}$ and every vertex $w$ colored blue to the set $\{w'\}$ where $w'$ is the nearest neighbor of $w$ to the right. The remainder of this appendix defines this gflow formally.

We are considering the graph $G_{n,l}$ with vertex set $V$ given by
\begin{align}
V = \begin{cases} \{1,\dots,l(2n+1)\}, &\text{ if } n = 0 \bmod 2, \\
\{1,\dots,l(2n+2) \}, &\text{ if } n = 1\bmod2.
\end{cases}
\end{align}
We can specify the edge set $E$ by stating the sets of neighbors for each vertex, which is more convenient for verifying the choice of partial order and mappings defined below are valid gflows. For $n$ even, we have that
\begin{align}
N_{v}^{G} = \begin{cases} \{v+1,\dots,v+n \}, &\text{ if } v = j(2n+1) + 1 \text{ for some } j \in \{0,\dots,l-1\}, \\
\{1, v+n\}, &\text{ if } v= i \text{ for } i\in\{2,\dots,n+1\},\\
\{v-n-1,j(2n+1)+1, v+n\}, &\text{ if } v= j(2n+1) + i \text{ for } j \in \{1,\dots,l-1\}, i\in\{2,\dots,n+1\},\\
\{v-n\}, &\text{ if } v = (l-1)(2n+1)+i \text{ for } i \in \{n+2,\dots,2n+1\}, \\
\{v-n,v+n+1\}, &\text{ if } v = j(2n+1)+i \text{ for } j \in \{0,\dots,l-2\}, i \in \{n+2,\dots,2n+1\}. 
\end{cases} \label{eq:even_neighborhood}
\end{align}
For $n$ odd, we have that
\begin{align}
N_{v}^{G} = \begin{cases} \{v+2,\dots,v+1+\frac{n+1}{2}\}, &\text{ if } v = j(2n+2) + 1 \text{ for } j \in \{0,\dots,l-1\}, \\
\{v+1+\frac{n+1}{2},\dots,v+n\}, &\text{ if } v = j(2n+2) + 2 \text{ for } j \in \{0,\dots,l-1\}, \\
\{1,v+n\}, &\text{ if } v= i \text{ for } i\in\{3,\dots,1+\frac{n+1}{2}\},\\
\{2,v+n\}, &\text{ if } v= i \text{ for }  i\in\{3+\frac{n+1}{2},\dots,n+2\},\\
\{1,2,v+n\}, &\text{ if } v= 2 + \frac{n+1}{2},\\
\{v-n-2,j(2n+2)+1,v+n\}, &\text{ if } v= j(2n+2) + i \text{ for } j \in \{1,\dots,l-1\}, i\in\{3,\dots,1+\frac{n+1}{2}\},\\
\{v-n-2,j(2n+2)+2,v+n\}, &\text{ if } v= j(2n+2) + i \text{ for } j \in \{1,\dots,l-1\}, i\in\{3+\frac{n+1}{2},\dots,n+2\},\\
\{v-n-2,j(2n+2)+1,j(2n+2)+2,v+n\}, &\text{ if } v= j(2n+2) + 2 + \frac{n+1}{2}\text{ for } j \in \{1,\dots,l-1\},\\
\{v-n\}, &\text{ if } v = (l-1)(2n+2)+i \text{ for } i \in \{n+3,\dots,2n+2\},\\
\{v-n, v+n+2\}, &\text{ if } v = j(2n+2)+i \text{ for } j \in \{0,\dots,l-2\}, i \in \{n+3,\dots,2n+2\}.
\end{cases} \label{eq:odd_neighborhood}
\end{align}
The input and output sets are defined to be
\begin{align}
I = \begin{cases} \{2,\dots,n+1\}, &\text{ if } n = 0 \bmod 2, \\
\{3,\dots,n+2 \}, &\text{ if } n = 1 \bmod 2,
\end{cases}
\end{align}
and 
\begin{align}
O = \begin{cases}\{(l-1)(2n+1) + n + 2,\dots,l(2n+1)\}, &\text{ if } n = 0 \bmod 2, \\
\{(l-1)(2n+2) + n + 3,\dots,l(2n+2) \}, &\text{ if } n = 1 \bmod 2.
\end{cases}
\end{align}
In the case where $n$ is even, the map $\omega$ that specifies the planes of measurement is defined by $\omega(v) = \textrm{YZ}$ if $v = j(2n+1) + 1$ for some $j \in \{0,...,l-1\}$, with all other vertices being assigned $\textrm{XY}$. Similarly, in the case where $n$ is odd, $\omega$ is defined by $\omega(v) = \textrm{YZ}$ if $v = j(2n+2) + 1$ or $v = j(2n+2) + 2$ for some $j \in \{0,...,l-1\}$, and $\omega(v) = \textrm{XY}$ otherwise.

In both cases, the partial order on vertices can be taken to be the order of the corresponding labels, i.e. $1 < 2 < \dots < l(2n+1)$ in the even $n$ case and $1 < 2 < \dots < l(2n+2)$ in the odd case. We define the map $g_{\text{even}}$ via
\begin{align}
g_{\text{even}}(v) = \begin{cases} \{v\}, &\text{ if } v = j(2n+1) + 1 \text{ for some } j \in \{0,\dots,l-1\}, \\
\{v + n\}, &\text{ if } v= j(2n+1) + i \text{ for } j \in \{0,\dots,l-1\}, i\in\{2,\dots,n+1\},\\
\{v + n+1\}, &\text{ if } v = j(2n+1)+i \text{ for } j \in \{0,\dots,l-1\}, i \in \{n+2,\dots,2n+1\}.
\end{cases}
\end{align}
Similarly, the map $g_{\text{odd}}$ is defined via
\begin{align}
g_{\text{even}}(v) = \begin{cases} \{v\}, &\text{ if } v = j(2n+2) + i \text{ for } j \in \{0,\dots,l-1\},i\in\{1,2\}, \\
\{v + n\}, &\text{ if } v= j(2n+2) + i \text{ for } j \in \{0,\dots,l-1\}, i\in\{3,\dots,n+2\},\\
\{v + n+1\}, &\text{ if } v = j(2n+2)+i \text{ for } j \in \{0,\dots,l-1\}, i \in \{n+3,\dots,2n+2\}.
\end{cases}
\end{align}
Since both $g_{\text{even}}$ and $g_{\text{odd}}$ assign singleton sets to each vertex, it follows that for each $v$, $\Odd(g_{\text{even}}(v))$ (resp. $\Odd(g_{\text{odd}}(v))$) is just the sets of neighbors of the vertex in $g_{\text{even}}(v)$ (resp. $g_{\text{odd}}(v)$). By observing \Cref{eq:even_neighborhood} and \Cref{eq:odd_neighborhood}, it can be verified that both $g_{\text{even}}$ and $g_{\text{odd}}$ satisfy the requirements of \Cref{def:gflow} for the choice of partial order outlined above. For example, for $n$ even, we see that the elements of $g_{\text{even}}(v) \setminus v$ and $N_{g_{\text{even}}(v)}^{G} \setminus v$ all have labels greater than $v$ and hence are later than $v$ in the choice of partial order, meaning that conditions $1$ and $2$ are satisfied. For $v = j(2n+1) + 1$ with $j \in \{0, ..., l-1\}$, we have that $\omega(v) = \textrm{YZ}$ and moreover that $g_{\text{even}}(v) = \{v\}$ and that $\Odd(g_{\text{even}}(v)) = \{v+1,\dots,v+n\}$, so condition $5$ holds. For all other $v$, we have that $\omega(v) = \textrm{XY}$ and that $g_{\text{even}}(v)$ contains a neighbor of $v$ meaning that $v \notin g_{\text{even}}(v)$ but $v \in \Odd(g_{\text{even}}(v))$ so condition $3$ is satisfied. The reasoning for $g_{\text{odd}}$ proceeds analogously.

\end{document}